\renewcommand\thesection{\Roman{section}} 
\renewcommand\thesubsection{\Roman{subsection}} 
\titleformat{\section}[block]{\large\scshape\centering}{\thesection.}{1em}{} 
\titleformat{\subsection}[block]{\large}{\thesubsection.}{1em}{} 
\definecolor{myblue}{RGB}{80,80,160}
\definecolor{mygreen}{RGB}{80,160,80}
\definecolor{myred}{RGB}{255,0,0}
\definecolor{mybrown}{RGB}{139,69,19}
\newtheorem{teorema}{Theorem}
\newtheorem{prop}[teorema]{Proposition}
\newtheorem{lema}[teorema]{Lemma}
\newtheorem{corolario}[teorema]{Corollary}
\newcommand{\cC}{{\mathscr C\!}}
\newcommand{\calC}{\mathcal{C}}
\newcommand{\calV}{\mathcal{V}}
\newcommand{\etal}{\textit{et~al.}}
\newcommand{\tw}{\mathrm{tw}}
\newcommand{\lpt}{\mathrm{lpt}}
\newcommand{\Comp}{\mathrm{Comp}}
\newcommand{\Branch}{\mathrm{Branch}}
\newcommand{\definition}[1]{\textit{#1}}
\title{\vspace{-13mm}\fontsize{20pt}{10pt}\selectfont\textbf{
Transversals of longest paths}\thanks{A preliminary version of this paper was presented at the 
IX Latin and American Algorithms, Graphs and Optimization Symposium (LAGOS'17).}} 
\author[1]{M\'{a}rcia R. Cerioli}
\affil[1]{COPPE Sistemas e Instituto de Matem\'{a}tica\\ Universidade Federal do Rio de Janeiro, Brazil}
\author[2]{Cristina G. Fernandes\thanks{Research partially supported by CNPq (Proc.~456792/2014-7 and 308116/2016-0),
    FAPESP (2013/03447-6 and~2015/08538-5) and Project MaCLinC of NUMEC/USP.}} 
\author[2]{Renzo G\'omez\thanks{Research supported by CAPES (Proc.~235671298-48).}}
\author[2]{\\ Juan Guti\'errez\thanks{Research supported by FAPESP (Proc.~2015/08538-5).}}
\affil[2]{Departamento de Ci\^encia da Computa\c c\~ao\\ Universidade de  S\~{a}o Paulo, Brazil}
\author[3]{Paloma T. Lima}
\affil[3]{Department of Informatics\\ University of Bergen, Norway} 
\date{}
\begin{document}

\maketitle 

\thispagestyle{fancy} 


\begin{abstract}
  Let $\lpt(G)$ be the minimum cardinality of a set of vertices that intersects all longest paths in a graph~$G$. 
  Let 
  $\omega(G)$ be the size of a maximum clique in $G$,
  and $\tw(G)$ be the treewidth of $G$. We prove that $ \lpt(G)\leq \max \{1, \omega(G)-2\}$
  when $G$ is a connected chordal graph; that $\lpt(G)=1$ when $G$ is a connected
  bipartite permutation graph or a connected full substar graph; and that $\lpt(G)\leq \tw(G)$ for any connected graph $G$.
\end{abstract}


\section{Introduction}

It is a well-known fact that, in a connected graph, any two longest paths have a common vertex.
In 1966, Gallai raised the following question:  \textit{Does every connected graph contain a vertex that belongs to all of its longest paths?}
The answer to Gallai's question is already known to be negative. 
Figure~\ref{waltherCounterexample} shows the smallest known negative example, on 12 vertices, 
which was independently found by Walther and Voss~\cite{Walther74} and Zamfirescu~\cite{Zamfirescu76}.
However, when we restrict ourselves to some specific classes of graphs, the answer to Gallai's question turns out to be positive. 
For example, it is well known that any set of subtrees of a tree satisfies the Helly property. If we consider the set of subtrees consisting of the longest paths of the tree, since they are pairwise intersecting, we conclude that there is a vertex that belongs to all of them. 

There are other graph classes which are known to have a positive answer to Gallai's question.
Klav\v{z}ar and Petkov\v{s}ek~\cite{Klavzar90} proved that this is the case for split graphs, 
cacti, and graphs whose blocks are Hamilton-connected, almost Hamilton-connected or cycles.
Balister \etal~\cite{Balister04} and Joos~\cite{Joos15} proved the same for the class of circular arc graphs. 
De Rezende \etal~\cite{deRezende13} proved that the answer to Gallai's question is positive for 2-trees
and Chen \etal~\cite{Chen17} extended this result for series-parallel graphs, also known as partial 2-trees. 
Chen~\cite{Chen15} proved the same for graphs with matching number smaller than three, 
while Cerioli and Lima~\cite{Cerioli16,Thome16} proved it for $P_4$-sparse graphs, $(P_5,K_{1,3})$-free graphs, graphs that are the join of two other graphs and starlike graphs, 
a superclass of split graphs. Finally, Jobson \etal~\cite{Jobson16} proved it for dually chordal graphs and Golan and Shan~\cite{2K2FREE} for $2K_2$-free graphs.

\begin{figure}[h]
  \centerline{\includegraphics[scale = 1]{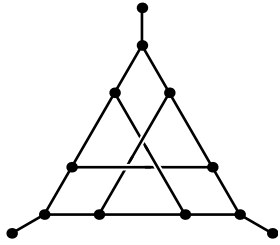}}
  \caption{The classical 12-vertex example that has a negative answer to Gallai's question.}
  \label{waltherCounterexample}
\end{figure}

A more general approach to Gallai's question is to ask for the size of the smallest \emph{transversal of longest paths} of a graph, that is, the smallest set of vertices that intersects every longest path. Given a graph~$G$, we denote the size of such a set by $\lpt(G)$.
In this direction, Rautenbach and Sereni~\cite{RautenbachS14} proved that
$\lpt(G)\leq \lceil \frac{n}{4}-\frac{n^{2/3}}{90} \rceil$ for every connected graph $G$ on $n$ vertices, 
that $\lpt(G)\leq 9 \sqrt{n}\log{n}$ for every connected planar graph $G$ on $n$ vertices,
and that $\lpt(G)\leq k+1$ for every connected graph $G$ of treewidth at most~$k$.
 

In this work, we provide exact results and upper bounds on the value of~$\lpt(G)$ when $G$ belongs to some specific classes of graphs. More specifically, we prove that:  
\begin{itemize}
\item $\lpt(G)\leq \max \{1, \omega(G)-2\}$ for every connected chordal graph~$G$,
where~$\omega(G)$ is the size of a maximum clique of~$G$.
\item $\lpt(G)=1$ for every connected bipartite permutation graph~$G$.
\item $\lpt(G)\leq k$ for every connected graph~$G$ of treewidth at most $k$. 
\item $\lpt(G)=1$ for every connected full substar graph~$G$.
\end{itemize}

This paper is organized as follows. In the next section, we state the definitions and basic results that are going to be used throughout the text.
In Sections~\ref{section:chordals}, \ref{section:bpg}, \ref{section:tw}, and~\ref{section:substars}, we consider, respectively, the class of chordal graphs, the class bipartite permutation graphs, the class of graphs of treewidth at most $k$ and the class of full substar graphs. Finally, in Section~\ref{section:conclusion}, we state the open problems to be considered in future work. 

\section{Definitions and notation}\label{section: definitions}

All graphs considered are simple.
Let $u$ be a vertex in a graph $G$, we denote by $N_G(u)$ the set of neighbors of $u$ in $G$,
and by $d_G(u)$ the cardinality of $N_G(u)$.
If the context is clear, we write simply $d(u)$ and $N(u)$ respectively.
Let~$P$ be a path in a graph~$G$. 
We denote by~$|P|$ the length of~$P$, that is, the number of edges in~$P$.
Given a path~$Q$ such that the only vertex it shares with~$P$ is an extreme 
of both of them, we denote by $P\cdot Q$ the concatenation of~$P$ and~$Q$.
For a vertex~$v$ in~$P$, let~$P'$ and~$P''$ be the paths 
such that $P = P' \cdot P''$ with~$P' \cap P'' = \{v\}$.  
We refer to these two paths as the \emph{$v$-tails} of $P$.
Given a path~$P$ that contains vertices~$a$ and~$b$, 
we denote by $P_a$ the~$a$-tail of~$P$ that does not contain~$b$ 
and by~$P_b$ the $b$-tail of~$P$ that does not contain~$a$. 
Also, if the context is clear, we denote by~$\widetilde{P}$ 
the subpath of~$P$ that has~$a$ and~$b$ as its extremes.
Thus $P = P_a\cdot  \widetilde{P} \cdot P_b$.

Let~$S$ be a set of vertices of~$G$. 
Let~$P$ be a path in~$G$ that does not contain all vertices of $S$ and contains a vertex not in~$S$. 
We say that~$S$ \emph{fences}~$P$ if all the vertices of~$P-S$ are in a single component of $G-S$, 
otherwise we say that~$P$ \emph{crosses}~$S$.
Given a path $P$ that crosses $S$ and has both extremes not in $S$,
we say that~$P$ is \emph{extreme-separated} by~$S$ when the
extremes of~$P$ are in different components of~$G-S$,
and that~$P$ is \emph{extreme-joined} by~$S$ if its extremes are in
the same component of~$G-S$.


For an integer~$t$, we say  that~$P$ \emph{$t$-touches}~$S$ 
if~$P$ intersects~$S$ at exactly~$t$ vertices. A path~$P$ is an
\emph{$S$-corner} path if~$P$ 1-touches~$S$.
Let~$P$ be an	$S$-corner path. If $P$ is fenced by~$S$, we say that~$P$ is an \emph{$S$-corner-fenced} path. 
If~$P$ crosses~$S$, we say that~$P$ is an \emph{$S$-corner-crossing} path.
If two paths~$P$ and $Q$ touch $S$ at the same
set of vertices, we say they are \emph{$S$-equivalent}, otherwise
they are \emph{$S$-nonequivalent}.

If~$P$ is fenced by~$S$, we denote by $\Comp_{S}(P)$ 
the set of vertices of the component of~$G-S$ where~$P-S$ lies.
For a set $X$ of vertices not contained in $S$, we denote by $\Comp_{S}(X)$ 
the set of vertices of the components of~$G-S$ where~$X \setminus S$ lies.
Two fenced paths~$P$ and $Q$ are \emph{$S$-component-disjoint}
if~$\Comp_{S}(P) \neq \Comp_{S}(Q)$. If~$S$ is clear from the context,
we just say they are \emph{component-disjoint}.

From now on, we use~$L=L(G)$ for the length of a longest path in~$G$. 
Also, remember that~$\omega(G)$ is the size of a maximum clique of~$G$.

A graph $H$ is called a \emph{minor} of the graph $G$ if $H$ 
can be formed from $G$ by deleting edges and vertices and by contracting edges.

A \definition{tree decomposition} \cite[p.~337]{Diestel10} of a graph $G$
is a pair $(T, \calV)$, conformed by a tree $T$ and a collection
$\calV = \{ V_t: t \in V(T)\}$ of \definition{bags} $V_t \subseteq V(G)$,
that satisfies the following three conditions:
\begin{enumerate}
\item[(T1)] $\bigcup_{t \in V(T)} V_t = V(G);$
\item[(T2)] for every $uv \in E(G)$, there exists
  a bag $V_t$ such that $u,v \in V_t;$
\item[(T3)] if a vertex $v$ is in two different bags $V_{t_1}, V_{t_2}$,
  then $v$ is also in any bag $V_t$ such that $t$ is on the (unique)
  path from $t_1$ to $t_2$ in $T$.
\end{enumerate}
The \definition{width} of $(T, \calV)$ is the number 
$$\max\{|V_t|-1: t\in V(T)\},$$
and the \definition{treewidth} $tw(G)$ of~$G$ is 
the minimum width of any tree decomposition of~$G$.

A graph is called \textit{chordal} if every induced cycle has length three.
Next we present some basic properties on tree decompositions for general and chordal graphs.
We fix a graph $G$ and a tree decomposition $(T, \calV)$ of $G$.
Proposition~\ref{tree-dec-bodl} is due to Bodlaender~\cite{Bodlaender98}.
Gross~\cite{Gross14} presented a proof for it and refers to tree decompositions 
such as in Proposition~\ref{tree-dec-bodl} as \emph{full tree decompositions}. 
The tree decomposition mentioned in Proposition~\ref{clique-tree}
is also called \emph{clique tree} and it was introduced by Gavril~\cite{Gavril74}.
Proposition~\ref{tw=omega-1} is a direct consequence of
Corollary 12.3.12 of the book of Diestel~\cite{Diestel10}.

\begin{prop}\label{tree-dec-bodl}
  If $k$ is the treewidth of a graph~$G$, 
  then~$G$ has a tree decomposition $(T, \calV)$ of width~$k$ such that
  $|V_t| = k+1$ for every $t \in V(T)$, and $|V_t \cap V_{t'}| = k$ for every $tt' \in E(T)$.
\end{prop}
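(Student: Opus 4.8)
The plan is to start from an arbitrary tree decomposition $(T,\calV)$ of $G$ of width $k$ and reshape it in three passes, keeping axioms (T1)--(T3) valid throughout, until it acquires the two uniformity properties. Axioms (T1) and (T2) will be easy to preserve, since I never delete a bag that already covers some edge; the whole difficulty lies in maintaining the connectivity axiom (T3), so the governing rule is to alter bags only in ways that keep, for every fixed vertex, the set of nodes whose bag contains it connected.

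First I would inflate every bag to size exactly $k+1$. Pick a node $r$ with $|V_r|=k+1$, which exists because the width equals $k$, root $T$ at $r$, and visit the nodes in order of increasing distance from $r$. When a non-root node $t$ with parent $p$ is reached, $V_p$ has already been inflated to size $k+1$; if $|V_t|<k+1$ I insert into $V_t$ arbitrary vertices of $V_p\setminus V_t$ until $|V_t|=k+1$. There are enough of them, since $|V_p\setminus V_t|=(k+1)-|V_p\cap V_t|\ge (k+1)-|V_t|$, and each inserted vertex $v$ already lies in the adjacent bag $V_p$, so attaching $t$ to the connected set of nodes containing $v$ leaves that set connected; thus (T3) survives each insertion. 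After this pass all bags have size exactly $k+1$.

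Next I would eliminate duplicates: as long as some edge $tt'$ satisfies $V_t=V_{t'}$, I contract it into a single node carrying that common bag, which preserves (T1)--(T3) and all bag sizes and strictly decreases $|V(T)|$; the outcome has every bag of size $k+1$ and adjacent bags distinct, so $|V_t\cap V_{t'}|\le k$ on every edge. Finally I would subdivide each edge whose bags still meet in fewer than $k$ vertices. For such an edge $tt'$, write $C=V_t\cap V_{t'}$, $V_t=C\cup\{a_1,\dots,a_p\}$ and $V_{t'}=C\cup\{b_1,\dots,b_p\}$, where $p=k+1-|C|\ge 2$ and the two private sets have equal size because the bags do. I replace the edge by a path $t=u_0,u_1,\dots,u_p=t'$ with bags $V_{u_i}=C\cup\{b_1,\dots,b_i,a_{i+1},\dots,a_p\}$, each of size $k+1$, so that consecutive bags differ by a single swap $a_i\leftrightarrow b_i$ and hence meet in exactly $k$ vertices. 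Here (T1) and (T2) hold because $t$ and $t'$ retain their original bags, and (T3) holds because each $a_i$ occupies precisely the prefix $u_0,\dots,u_{i-1}$ adjoining the $t$-side of the tree (which already contains its other occurrences and avoids $t'$), each $b_i$ occupies the suffix $u_i,\dots,u_p$, and every vertex of $C$ lies in all the new bags. Performing this on every deficient edge produces the required full tree decomposition.

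I expect this last step to be the main obstacle: an arbitrary interpolation between $V_t$ and $V_{t'}$ would in general violate (T3), and the argument only goes through because the one-element swaps are ordered so that the private vertices of $V_t$ stay next to the $t$-end of the inserted path while those of $V_{t'}$ stay next to the $t'$-end, which is exactly what keeps each vertex's occurrence set a connected subtree.
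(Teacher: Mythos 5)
Your proof is correct. Note, though, that the paper itself contains no proof of Proposition~\ref{tree-dec-bodl}: it is stated as a known result, attributed to Bodlaender, with Gross credited for a written proof (such decompositions are called \emph{full}, or in Bodlaender's terminology \emph{smooth}, tree decompositions). Your three-pass construction is essentially that standard argument: padding each bag from its already-processed parent (sound, since adding a node adjacent to one already containing $v$ keeps the occurrence set of $v$ connected), contracting equal-bag edges (which suffices here because after padding all bags have size $k+1$, so $V_t\subseteq V_{t'}$ forces $V_t=V_{t'}$), and interpolating along a deficient edge by single-element swaps. The key step checks out exactly as you say: with $V_{u_i}=C\cup\{b_1,\dots,b_i,a_{i+1},\dots,a_p\}$, vertex $a_i$ occupies the prefix $u_0,\dots,u_{i-1}$, which attaches to the old occurrence set of $a_i$ through $t$ (that set avoids $t'$ since $a_i\notin V_{t'}$), symmetrically for $b_i$ through $t'$, and vertices of $C$ fill the whole inserted path; consecutive bags then meet in exactly $k$ vertices, and edges already meeting in $k$ are untouched. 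One cosmetic remark: your existence claim for the root $r$ with $|V_r|=k+1$ uses that the treewidth is \emph{exactly} $k$, which is indeed the hypothesis of the proposition, so nothing is missing.
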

\begin{prop}\label{clique-tree}
  Every chordal graph $G$ has a tree decomposition~$(T, \mathcal{V})$
  such that the bags of $\mathcal{V}$ are the maximal cliques of $G$.
\end{prop}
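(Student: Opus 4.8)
The plan is to prove the statement by induction on the number of vertices of $G$, relying on the classical fact (Dirac's lemma) that every chordal graph has a \emph{simplicial} vertex, that is, a vertex $v$ whose neighborhood $N(v)$ induces a clique. The base case $|V(G)|=1$ is immediate: a tree decomposition with a single node whose bag is the only maximal clique $\{v\}$ works.

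For the inductive step, I would take a simplicial vertex $v$ of $G$. Since $N(v)$ is a clique, the closed neighborhood $N[v]=N(v)\cup\{v\}$ is also a clique, and in fact it is a \emph{maximal} clique of $G$: any vertex adjacent to all of $N[v]$ would be adjacent to $v$, hence would already lie in $N(v)\subseteq N[v]$. The graph $G-v$ is again chordal, so by the induction hypothesis it has a tree decomposition $(T',\calV')$ whose bags are exactly the maximal cliques of $G-v$. The task is then to turn $(T',\calV')$ into a clique tree of $G$, and the decisive point is to control how the family of maximal cliques changes when $v$ is reinserted.

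Here I would distinguish two cases according to whether $N(v)$ is a maximal clique of $G-v$. If it is, then the maximal cliques of $G$ are obtained from those of $G-v$ by replacing $N(v)$ with $N[v]$, so I would simply relabel the bag equal to $N(v)$ as $N[v]$. If $N(v)$ is not maximal in $G-v$, it is properly contained in some maximal clique $C$ of $G-v$; in this case the maximal cliques of $G$ are exactly those of $G-v$ together with the new clique $N[v]$, and I would attach a fresh leaf with bag $N[v]$ to a node of $T'$ whose bag is such a clique $C$. In both cases one checks that the resulting bags are precisely the maximal cliques of $G$, which requires the routine but careful bookkeeping of exactly which maximal cliques are created or destroyed by deleting~$v$.

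The step I expect to be the main obstacle is verifying that the modified pair remains a tree decomposition, namely conditions (T1)--(T3). Conditions (T1) and (T2) are straightforward: $v$ now lies in a bag, and every edge incident to $v$ joins $v$ to a vertex of $N(v)\subseteq N[v]$, so both endpoints sit in the bag containing $N[v]$, while edges within $G-v$ are handled by the old decomposition. The crux is (T3), the running-intersection condition that the bags containing any fixed vertex induce a subtree. For $v$ it holds trivially, since $v$ lies in a single bag; for a vertex $u\in N(v)$ one must ensure that adding $u$ to the new bag keeps the set of bags containing $u$ connected. This is exactly why, in the non-maximal case, the new leaf has to be attached to a node whose bag already contains all of $N(v)$: that node's bag contains $u$ and is adjacent to the new leaf, so the subtree of bags containing $u$ stays connected. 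Pinning down this attachment, together with the clique bookkeeping above, is where the argument needs the most care.
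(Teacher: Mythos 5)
Your proof is correct. Note that the paper itself contains no proof of this proposition: it is stated as a known result, attributed to Gavril~\cite{Gavril74}, whose original argument proceeds through the characterization of chordal graphs as intersection graphs of subtrees of a tree. Your induction on a simplicial vertex (via Dirac's lemma) is the other standard, self-contained route, and your bookkeeping is the right one: in Case~1 the only maximal clique of $G-v$ that loses maximality in $G$ is $N(v)$ itself, since any maximal clique $C$ of $G-v$ with $C\cup\{v\}$ a clique satisfies $C\subseteq N(v)$ and hence $C=N(v)$ by maximality; in Case~2 all maximal cliques of $G-v$ survive and $N[v]$ is the unique new one, and you correctly pinpoint that the new leaf must be attached to a node whose bag contains all of $N(v)$, which is exactly what makes condition (T3) go through for each $u\in N(v)$ (the old subtree of bags containing $u$ includes that node, and the leaf is adjacent to it). In Case~1 the relabeling changes nothing for any $u\neq v$, so (T3) is preserved there as well, and $v$ lies in a single bag in both cases. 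What your approach buys over the paper's citation is an elementary, fully explicit construction; what Gavril's intersection-model approach buys is that the clique tree falls out of the subtree representation with no case analysis, and it simultaneously yields the converse characterization. The only degenerate situation worth a sentence in a written-up version is an isolated simplicial vertex, where $N(v)=\emptyset$ puts you in Case~2 with the leaf attachable to an arbitrary node and (T3) holding vacuously for $N(v)$.
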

\begin{prop}\label{tw=omega-1}
 For every chordal graph $G$, $\tw(G) = \omega(G)-1$.
\end{prop}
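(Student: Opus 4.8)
The plan is to prove the two inequalities $\tw(G) \geq \omega(G)-1$ and $\tw(G) \leq \omega(G)-1$ separately, and then combine them. The first inequality holds for every graph and relies only on the Helly property for subtrees of a tree; the second is where chordality enters, through Proposition~\ref{clique-tree}.

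For the upper bound, I would simply invoke Proposition~\ref{clique-tree}: since $G$ is chordal, it admits a tree decomposition $(T,\calV)$ whose bags are exactly the maximal cliques of $G$. Each bag is a clique, so no bag has more than $\omega(G)$ vertices; on the other hand, a maximum clique is in particular a maximal clique, so it appears as some bag and size $\omega(G)$ is attained. Hence $\max\{|V_t|-1 : t \in V(T)\} = \omega(G)-1$, and this particular decomposition witnesses $\tw(G) \leq \omega(G)-1$.

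For the lower bound, I would show the general fact that in \emph{any} tree decomposition $(T,\calV)$ of \emph{any} graph, every clique is contained in a single bag. Fix a decomposition and, for each vertex $v$, set $T_v = \{t \in V(T) : v \in V_t\}$; by condition (T3) each $T_v$ induces a subtree of $T$. Let $K$ be a clique of $G$. For any two $u,v \in K$ we have $uv \in E(G)$, so by (T2) some bag contains both, i.e. $T_u \cap T_v \neq \emptyset$; thus the family $\{T_v : v \in K\}$ consists of pairwise intersecting subtrees of $T$. The key step is to apply the Helly property for subtrees of a tree, which guarantees a node $t^\ast$ lying in all of them; this gives $K \subseteq V_{t^\ast}$ and hence $|V_{t^\ast}| \geq |K|$. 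Taking $K$ to be a maximum clique and the decomposition to be one of minimum width yields $\tw(G)+1 \geq \omega(G)$.

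I expect the Helly property for subtrees to be the only nontrivial ingredient; it can be established by induction on the number of subtrees, the base case being trivial and the inductive step using that, for three pairwise intersecting subtrees of a tree, the three pairwise meeting regions together with the connecting tree paths force a common node. Once both inequalities are in hand, they combine to give $\tw(G) = \omega(G)-1$ for every chordal graph $G$, as claimed.
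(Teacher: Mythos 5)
Your proof is correct, but it is worth noting that the paper does not actually prove this proposition at all: it simply cites Corollary~12.3.12 of Diestel's book. What you have written is a complete, self-contained version of that standard argument, and both halves are sound. For the upper bound you reuse Proposition~\ref{clique-tree} exactly as one should: the clique tree has all bags of size at most $\omega(G)$, so $\tw(G)\leq \omega(G)-1$ (the observation that a maximum clique actually appears as a bag is pleasant but not needed for the inequality). For the lower bound you prove the general lemma that in \emph{any} tree decomposition of \emph{any} graph every clique lies inside a single bag, which immediately gives $\tw(G)\geq \omega(G)-1$ for all graphs, chordal or not; this cleanly isolates where chordality is used (only in the upper bound). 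Your route to the clique-in-a-bag lemma goes through the Helly property for subtrees of a tree, via the sets $T_v=\{t: v\in V_t\}$, which (T3) makes subtrees and (T2) makes pairwise intersecting --- fittingly, this is the same Helly property the paper itself invokes in its introduction to settle Gallai's question for trees. Diestel's own proof of the clique lemma proceeds by a slightly different induction/separator argument rather than through Helly, but the two are interchangeable here. Your sketch of the Helly property is terse but essentially right: the three-subtree case follows by taking points $x\in T_1\cap T_2$, $y\in T_1\cap T_3$, $z\in T_2\cap T_3$ and the median of $x,y,z$ in the tree, and the inductive step should be phrased as replacing $T_{k-1},T_k$ by their intersection, using the three-subtree case to check that the resulting family is still pairwise intersecting; spelling that out would make the induction airtight.
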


Given two different nodes $t$, $t'$ of $T$, we
denote by $\Branch_t(t')$ the component of $T-t$ where~$t'$ lies.
We say that such component is a \emph{branch} of~$T$ at~$t$
and that the components of $T-t$ are the \emph{branches} of~$T$ at~$t$~\cite{Heinz13}.
Similarly, for a vertex $v \notin V_t$, 
we denote by $\Branch_t(v)$ the branch $\Branch_t(t')$ of~$T$ 
at~$t$ such that $v \in V_{t'}$.
We also say that $v$ is \emph{in} $\Branch_t(t')$.
Moreover, we can extend the notation and say that, 
if $P$ is a path fenced by~$V_t$ for some~$t \in T$,
then $\Branch_t(P) = \Branch_t(v)$, where $v$ is a vertex of $P-V_t$.
We also say that $P$ is \emph{in} $\Branch_t(v)$.
Next we show some basic propositions of branches.
Propositions~\ref{one-branch} to~\ref{Branch_t(P)} are used to justify that the previous two
definitions are coherent.
The first two of them appear in the work of Heinz~\cite{Heinz13}.

\begin{prop} \label{one-branch}
  Let $t$ be a node of $T$ and $v$ be a vertex of $G$ such that $v \notin V_t$.
  Let $t'$ and $t''$ be nodes of~$T$.  If $v \in V_{t'} \cap V_{t''}$, 
  then $t'$ and $t''$ are in the same branch of $T$ at $t$.
\end{prop}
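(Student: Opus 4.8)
The plan is to apply property (T3) of the tree decomposition together with the elementary fact that, in a tree, a node $t$ separates two other nodes if and only if $t$ lies on the unique path between them. Before invoking (T3), I would first settle the setup: since $v \in V_{t'}$ but $v \notin V_t$, we have $t' \neq t$, and likewise $t'' \neq t$; hence both $t'$ and $t''$ are nodes of $T-t$, so the branches $\Branch_t(t')$ and $\Branch_t(t'')$ are well defined. If $t' = t''$ the claim is immediate, so I would assume $t' \neq t''$ and let $Q$ be the unique path in $T$ joining them.

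The key step is to show that $t \notin Q$, and here I would argue by contradiction. If $t$ were on $Q$, then $t$ would lie on the path from $t'$ to $t''$; since $v \in V_{t'} \cap V_{t''}$, property (T3) would force $v \in V_t$, contradicting the hypothesis $v \notin V_t$. Therefore $t$ does not lie on $Q$, so $Q$ is a path in $T-t$ connecting $t'$ and $t''$. Consequently $t'$ and $t''$ lie in the same component of $T-t$, that is, in the same branch of $T$ at $t$, which is exactly the assertion.

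There is no substantial obstacle in this argument: the statement is essentially a restatement of the subtree property guaranteed by (T3), phrased in the branch terminology. The only points demanding any care are the degenerate cases $t' = t''$ and $t' = t$ (or $t'' = t$), all of which are dispatched at once by the observation that $v \in V_{t'} \setminus V_t$ already forces $t' \neq t$. I would therefore expect the proof to be short, with the contradiction via (T3) as its single essential ingredient.
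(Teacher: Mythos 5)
Your proof is correct, and it is the standard argument via (T3): the paper itself omits a proof of this proposition, attributing it to Heinz, and the expected argument is exactly the one you give. The degenerate cases ($t'=t''$, $t'=t$, $t''=t$) are handled cleanly, and the contradiction — $t$ on the $t'$--$t''$ path would force $v \in V_t$ by (T3) — is the single essential step, so nothing is missing.
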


\begin{prop} \label{Tu=Tv}
  Let $u$ and $v$ be two vertices of $G$, and let $t$ be a node of~$T$.
  If $u$, $v \notin V_t$, and $u$ and $v$ are not separated by $V_t$ in $G$, 
  then $\Branch_t(u) = \Branch_t(v)$.
\end{prop}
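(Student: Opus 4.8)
The plan is to exploit the hypothesis that $u$ and $v$ lie in the same component of $G-V_t$, and then to walk along a path joining them one edge at a time, showing that the branch of $T$ at $t$ never changes. Since $u$ and $v$ are not separated by $V_t$, there is a path $P = w_0 w_1 \cdots w_k$ in $G$ with $w_0 = u$, $w_k = v$, and no $w_i$ in $V_t$. The whole statement then reduces to the single-edge claim: if $x$ and $y$ are adjacent in $G$ and neither lies in $V_t$, then $\Branch_t(x) = \Branch_t(y)$. Granting this claim, I would apply it to each consecutive pair $w_i w_{i+1}$ and close the chain of equalities $\Branch_t(u) = \Branch_t(w_1) = \cdots = \Branch_t(v)$ by transitivity.

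To establish the single-edge claim, I would apply condition (T2) to the edge $xy$ to obtain a node $t'$ of $T$ with $x, y \in V_{t'}$. Because $x \notin V_t$ and $y \notin V_t$, the node $t'$ is distinct from $t$ and hence lies in some branch $\Branch_t(t')$ of $T$ at $t$. By the definition of $\Branch_t(\cdot)$ for vertices, $\Branch_t(x)$ is the branch containing a node whose bag contains $x$, and likewise for $y$; since $t'$ is simultaneously such a witness for $x$ and for $y$, we obtain $\Branch_t(x) = \Branch_t(t') = \Branch_t(y)$. What keeps this argument from being circular is the well-definedness of $\Branch_t(\cdot)$ on vertices, which is exactly the content of Proposition~\ref{one-branch}: whenever a vertex belongs to two bags $V_{t_1}$ and $V_{t_2}$ with $t_1, t_2 \neq t$, those two nodes lie in the same branch of $T$ at $t$, so the branch assigned to a vertex does not depend on the witnessing node chosen.

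I do not expect a genuine obstacle here; the proof is essentially a transitivity argument powered by (T2) and Proposition~\ref{one-branch}. The only points requiring care are bookkeeping: verifying that every vertex $w_i$ of $P$ indeed avoids $V_t$ (immediate, since $P$ is a path in $G-V_t$), and confirming that the single-edge claim is invoked only on consecutive pairs so that the telescoping chain of branch-equalities genuinely links $u$ to $v$.
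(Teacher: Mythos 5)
Your proof is correct. The paper itself gives no proof of Proposition~\ref{Tu=Tv}, stating it without argument and attributing it to Heinz~\cite{Heinz13}; your reduction to the single-edge case via (T2), with well-definedness of $\Branch_t(\cdot)$ on vertices supplied by Proposition~\ref{one-branch} and a telescoping chain of equalities along a $u$--$v$ path in $G-V_t$, is precisely the standard argument and is complete.
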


\begin{prop} \label{Branch_t(P)}
  Let $t$ be a node of $T$ and $P$ be a path fenced by $V_t$. 
  For every two vertices $u$ and~$v$ in~$P-V_t$, 
  $\Branch_t(u) = \Branch_t(v)$.
\end{prop}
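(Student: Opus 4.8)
The plan is to derive this proposition directly from Proposition~\ref{Tu=Tv}, since the hypothesis that~$P$ is fenced by~$V_t$ is precisely what is needed to place~$u$ and~$v$ in the same component of~$G-V_t$. First I would note that both~$u$ and~$v$ belong to~$P-V_t$, so in particular $u,v \notin V_t$; this already secures one of the two requirements for applying Proposition~\ref{Tu=Tv}, and it also guarantees that $\Branch_t(u)$ and $\Branch_t(v)$ are defined in the first place.

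The key step is to translate the fencing hypothesis into the separation language of Proposition~\ref{Tu=Tv}. By the definition of fencing, all vertices of~$P-V_t$ lie in a single component of~$G-V_t$. Since~$u$ and~$v$ are both vertices of~$P-V_t$, they lie in the same component of~$G-V_t$, and therefore they are not separated by~$V_t$ in~$G$.

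Having established that $u,v \notin V_t$ and that~$u$ and~$v$ are not separated by~$V_t$, I would invoke Proposition~\ref{Tu=Tv} to conclude that $\Branch_t(u) = \Branch_t(v)$, which is exactly the assertion.

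I do not expect any genuine obstacle here: the whole content of the proposition is that the notion ``all of~$P-V_t$ lying in one component of~$G-V_t$'' coincides with the pairwise nonseparation hypothesis of Proposition~\ref{Tu=Tv}. The only point deserving a moment's care is checking that~$u$ and~$v$ are indeed outside~$V_t$ so that the branches are well defined, but this is immediate from $u,v \in P-V_t$. Thus the proof is a short unfolding of definitions rather than a calculation.
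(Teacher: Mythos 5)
Your proof is correct and matches the paper's own argument exactly: both unfold the definition of fencing to place $u$ and $v$ in the same component of $G-V_t$, hence not separated by $V_t$, and then apply Proposition~\ref{Tu=Tv}. Your additional remark that $u,v \notin V_t$ (so the branches are well defined) is a harmless elaboration the paper leaves implicit.
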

\begin{proof}
  By definition of fenced paths, $u$ and $v$ lie in the same
  component of $G-V_t$, so they are not separated by $V_t$ in $G$
  and we can apply Proposition~\ref{Tu=Tv}.
\end{proof}

\begin{prop}\label{Branch_t(P)=Branch_t(t')}
  If $P$ is a path fenced by $V_t$ for some $t \in V(T)$,
  then there exists a neighbor $t'$ of $t$ in $T$ such that $\Branch_t(P) = \Branch_t(t')$.
\end{prop}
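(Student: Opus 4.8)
The plan is to reduce the statement to a standard structural fact about trees, namely that every component of $T-t$ contains exactly one neighbor of $t$. First I would unwind the definition of a fenced path: since $V_t$ fences $P$, the path $P$ is required to contain a vertex outside $V_t$, so I can fix some $v \in P - V_t$. By the definition of $\Branch_t(P)$, which is legitimate precisely because Proposition~\ref{Branch_t(P)} shows the choice of $v$ does not matter, I have $\Branch_t(P) = \Branch_t(v)$.

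Next I would locate $v$ inside the tree $T$. Since $v \notin V_t$, condition~(T1) guarantees that $v$ lies in some bag $V_{s}$, and necessarily $s \neq t$. Hence $\Branch_t(v) = \Branch_t(s)$, which by definition is the component of $T - t$ that contains $s$; Proposition~\ref{one-branch} is what makes this component independent of the particular bag chosen, so the notation is coherent.

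The final and only substantive step is to replace $s$ by a neighbor of $t$. Because $T$ is a tree, deleting the node $t$ splits it into components, and each such component contains exactly one neighbor of $t$: two distinct neighbors cannot lie in the same component, since the unique path joining them in $T$ passes through $t$, and conversely every node of a component reaches $t$ through a single incident neighbor. Therefore the component $\Branch_t(s)$ contains a unique neighbor $t'$ of $t$, and $\Branch_t(s) = \Branch_t(t')$ because $s$ and $t'$ lie in the same component of $T-t$. Chaining the equalities yields $\Branch_t(P) = \Branch_t(t')$ with $t'$ a neighbor of $t$, which is the assertion.

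I expect no genuine obstacle here: the result is essentially definitional, and the one real ingredient—that the components of $T-t$ are in bijection with the neighbors of $t$—is an elementary property of trees. The only care required is to invoke Propositions~\ref{one-branch} and~\ref{Branch_t(P)} at the right moments so that each $\Branch$ appearing in the chain of equalities is well defined, and to note that the ``fences'' hypothesis is exactly what ensures $P - V_t$ is nonempty, so that a witness vertex $v$ actually exists.
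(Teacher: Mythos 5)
Your proof is correct and takes essentially the same route as the paper's: pick a witness $v \in P-V_t$ (guaranteed by the definition of fenced), locate it in some bag $V_s$ with $s \neq t$, and pass to the unique neighbor $t'$ of $t$ in the component of $T-t$ containing $s$ — the paper phrases this last step as taking the neighbor of $t$ on the unique $t$--$s$ path in $T$, which is the same thing. Your explicit appeals to Propositions~\ref{one-branch} and~\ref{Branch_t(P)} for well-definedness are left implicit in the paper's one-line chain of equalities, but the argument is identical.
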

\begin{proof}
  Let $u$ be a vertex of $P-V_t$ (that exists by the definition of fenced).
  As $u \notin V_t$, there exists a bag $V_{t''} \neq V_{t}$ that contains $u$.
  Let $t'$ be the neighbor of $t$ in $T$ such that $t'$ is in the (unique)
  path from $t$ to $t''$ in $T$.
  Then $\Branch_t(P) = \Branch_t(u) = \Branch_t(t'') = \Branch_t(t')$.
\end{proof}

Proposition~\ref{sep-tt'} appears in the book of Diestel~\cite{Diestel10} as Lemma~12.3.1.
Proposition~\ref{sep-uv} is a corollary of Proposition~\ref{sep-tt'}.

\begin{prop}\label{sep-tt'}
  Let~$pq \in E(T)$.  Let $T_p = \Branch_q(p)$ and $T_q = \Branch_p(q)$ be the components 
  of~$T-pq$, with $p \in V(T_p)$ and~$q \in V(T_q)$. 
  Then $V_p \cap V_q$ separates $\bigcup_{t \in V(T_p)}V_t$ from $\bigcup_{t \in V(T_q)}V_t$ in~$G$.
\end{prop}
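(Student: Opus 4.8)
The plan is to argue directly from the three tree-decomposition axioms (T1)--(T3). Write $A = \bigcup_{t \in V(T_p)}V_t$ and $B = \bigcup_{t \in V(T_q)}V_t$ for the two vertex sets to be separated, and set $S = V_p \cap V_q$. Since $p \in V(T_p)$ we have $V_p \subseteq A$, and since $q \in V(T_q)$ we have $V_q \subseteq B$; hence $S \subseteq A \cap B$. The heart of the argument is to establish the reverse inclusion, namely that in fact $A \cap B = S$.

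For this key step I would take any $v \in A \cap B$. By definition there are nodes $t_1 \in V(T_p)$ and $t_2 \in V(T_q)$ with $v \in V_{t_1}$ and $v \in V_{t_2}$. Because $T_p$ and $T_q$ are the two components of $T-pq$, the unique $t_1$--$t_2$ path in $T$ must traverse the cut edge $pq$, and therefore passes through both $p$ and $q$. Applying (T3) to $v \in V_{t_1}\cap V_{t_2}$ then yields $v \in V_p$ and $v \in V_q$, that is, $v \in S$. This gives $A \cap B \subseteq S$, so $A \cap B = S$.

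With this identity in hand, separation follows cleanly. First, by (T1) every vertex of $G$ lies in some bag, and every bag is indexed by a node of $T_p$ or of $T_q$; hence $V(G) = A \cup B$. Next, suppose for contradiction there were an edge $uv$ of $G$ with $u \in A \setminus S$ and $v \in B \setminus S$. By (T2) some bag $V_t$ contains both $u$ and $v$. If $t \in V(T_p)$ then $v \in V_t \subseteq A$, so $v \in A \cap B = S$, contradicting $v \notin S$; the symmetric case $t \in V(T_q)$ forces $u \in S$, again a contradiction. Thus $G-S$ has no edge joining $A \setminus S$ to $B \setminus S$. Since $A \setminus S$ and $B \setminus S$ partition $V(G)\setminus S$, any $A$--$B$ path in $G$ that avoided $S$ would remain entirely inside one of these two parts and so could not meet both; hence every such path meets $S$, which is precisely what it means for $S = V_p \cap V_q$ to separate $A$ from $B$.

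As for difficulty, the only nontrivial idea is the identity $A \cap B = S$, and it rests entirely on the connectivity axiom (T3) together with the observation that every $T_p$--$T_q$ path in the tree must use the cut edge $pq$. Once that is in place, the no-crossing-edge claim and the final separation are routine applications of (T2) and (T1), so I expect no real obstacle beyond stating the (T3) step carefully.
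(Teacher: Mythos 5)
Your proof is correct: the key identity $A\cap B = V_p\cap V_q$ via (T3) and the cut edge $pq$, followed by the no-crossing-edge argument via (T2), is exactly the standard argument. The paper itself does not prove this proposition but cites it as Lemma~12.3.1 in Diestel's book, and your write-up reproduces that textbook proof essentially verbatim, so there is nothing to flag.
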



\begin{prop}\label{sep-uv}
  Let $pq \in E(T)$.  Let $u$ and $v$ be vertices of $G$ with $v \notin V_p$.
  If~$u \in V_p \setminus V_q$ and ${\Branch_p(v) = \Branch_p(q)}$, 
  then $u$ and $v$ are not adjacent.
\end{prop}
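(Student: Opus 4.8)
The plan is to prove Proposition~\ref{sep-uv} as a direct consequence of the separation property in Proposition~\ref{sep-tt'}. The hypothesis gives an edge $pq \in E(T)$, a vertex $u \in V_p \setminus V_q$, and a vertex $v$ with $v \notin V_p$ lying in the branch $\Branch_p(q)$. Following the notation of Proposition~\ref{sep-tt'}, set $T_p = \Branch_q(p)$ and $T_q = \Branch_p(q)$, so that $T-pq$ consists of these two components with $p \in V(T_p)$ and $q \in V(T_q)$. The strategy is to show that $u$ and $v$ live on opposite sides of the separator $V_p \cap V_q$ furnished by Proposition~\ref{sep-tt'}, which immediately forbids the edge $uv$.

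First I would locate $v$ on the $T_q$ side. Since $\Branch_p(v) = \Branch_p(q) = T_q$, the definition of $\Branch_p(v)$ means there is a bag $V_{t'}$ with $v \in V_{t'}$ and $t' \in V(T_q)$; hence $v \in \bigcup_{t \in V(T_q)} V_t$. Next I would locate $u$ on the $T_p$ side and, crucially, show it is \emph{not} in the separator. We have $u \in V_p$ with $p \in V(T_p)$, so $u \in \bigcup_{t \in V(T_p)} V_t$. To invoke the separator cleanly, I need $u \notin V_p \cap V_q$, which follows at once from the hypothesis $u \in V_p \setminus V_q$: since $u \notin V_q$, certainly $u \notin V_p \cap V_q$.

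Now suppose, toward a contradiction, that $u$ and $v$ are adjacent in $G$, i.e. $uv \in E(G)$. By Proposition~\ref{sep-tt'}, the set $V_p \cap V_q$ separates $A := \bigcup_{t \in V(T_p)} V_t$ from $B := \bigcup_{t \in V(T_q)} V_t$ in $G$. An edge between $u \in A$ and $v \in B$ can only exist if at least one endpoint lies in the separator $V_p \cap V_q$. We have already ruled out $u \in V_p \cap V_q$. The remaining possibility is $v \in V_p \cap V_q$; but $v \notin V_p$ by hypothesis, so $v \notin V_p \cap V_q$ as well. Thus neither endpoint lies in the separator, contradicting the fact that $V_p \cap V_q$ separates $A$ from $B$. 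Therefore $u$ and $v$ are not adjacent.

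The only subtlety — and the step I would state most carefully — is the exact reading of ``separates'' that makes the final contradiction airtight: I am using that if $V_p \cap V_q$ separates $A$ from $B$, then every edge of $G$ with one endpoint in $A \setminus (V_p \cap V_q)$ and the other in $B \setminus (V_p \cap V_q)$ is impossible, since removing the separator leaves $A$ and $B$ in distinct components. Both $u$ and $v$ were verified to avoid $V_p \cap V_q$, with $u \in A$ and $v \in B$, so no such edge can exist. This is entirely routine once the two membership facts $v \in B$ and $u \in A \setminus V_q$ are pinned down, so I expect no real obstacle beyond bookkeeping the branch/separator correspondence.
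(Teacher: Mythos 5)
Your proof is correct and follows essentially the same route as the paper's: place $u$ in a bag of $\Branch_q(p)$ and $v$ in a bag of $\Branch_p(q)$, then invoke Proposition~\ref{sep-tt'} to conclude that $V_p \cap V_q$ separates them. The only difference is that you make explicit the membership checks $u \notin V_p \cap V_q$ and $v \notin V_p \cap V_q$ that the paper's terse ``Hence'' leaves implicit, which is a fair bit of extra care but not a different argument.
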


\begin{proof}
  Observe that $u$ is in a bag of $\Branch_q(p)$ because $p$ is in $\Branch_q(p)$ and $u \in V_p$.
  As $v$ is in a bag of $\Branch_p(q)$, by Proposition~\ref{sep-tt'}, 
  $V_p \cap V_q$ separates $u$ from $v$. Hence, $u$ and $v$ are not adjacent.
\end{proof}

\begin{prop} \label{Vt capP'subseteqVt'}
  Let $t \in V(T)$.
  Let $P'$ be a path fenced by $V_t$ that 1-touches $V_t$ such 
  that ${\Branch_t(P')=\Branch_{t}(t')}$, where $tt' \in E(T)$.
  Then $V_t \cap P' \subseteq V_{t'}$.
\end{prop}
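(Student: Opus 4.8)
The plan is to locate the unique vertex $w$ of $V_t \cap P'$ and then force it into $V_{t'}$ by combining the edge of $P'$ incident to $w$ with the consistency axiom~(T3).

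First I would set $w$ to be the single vertex of $V_t \cap P'$, which exists and is unique since $P'$ 1-touches $V_t$. Because $P'$ is fenced by $V_t$, the set $P' - V_t$ is nonempty, so $P'$ has at least two vertices and $w$ has a neighbor $u$ on $P'$. As $w$ is the only vertex of $P'$ lying in $V_t$, this neighbor satisfies $u \in P' - V_t$; hence, by Proposition~\ref{Branch_t(P)}, $u$ lies in the common branch $\Branch_t(P') = \Branch_t(t')$.

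Next, since $wu \in E(G)$, axiom~(T2) yields a bag $V_s$ with $w, u \in V_s$. As $u \notin V_t$ and $u$ belongs both to $V_s$ and to some bag in $\Branch_t(t')$, Proposition~\ref{one-branch} puts $s$ in that same branch, so $s \in \Branch_t(t')$. Finally, $\Branch_t(t')$ is the component of $T - t$ containing $t'$, and since $tt' \in E(T)$ this component is joined to $t$ only through the edge $tt'$; hence the unique path from $t$ to $s$ in $T$ passes through $t'$. Because $w \in V_t \cap V_s$, axiom~(T3) places $w$ in every bag along this path, in particular $w \in V_{t'}$. Therefore $V_t \cap P' = \{w\} \subseteq V_{t'}$, as claimed.

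The genuinely routine part is pinning down $w$ and its path-neighbor $u$; the one step requiring care is checking that $s$ indeed falls inside the branch $\Branch_t(t')$, since this is precisely what guarantees that the $t$-to-$s$ path must traverse $t'$. This is where the fencing hypothesis (which confines $P' - V_t$ to a single branch) and the 1-touching hypothesis (which makes $w$ unique and its neighbor external) do their work.
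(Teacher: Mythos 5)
Your proof is correct, but it takes a genuinely different route from the paper's. The paper argues by contradiction through the separation machinery: assuming some $x \in (V_t \cap P') \setminus V_{t'}$, it takes the subpath of $P'$ from $x$ to a vertex $y \in P'-V_t$, notes that 1-touching makes this subpath internally disjoint from $V_t$ and hence disjoint from $V_t \cap V_{t'}$, and contradicts Proposition~\ref{sep-tt'}, which says $V_t \cap V_{t'}$ separates the two sides of the edge $tt'$. You instead argue directly from the tree-decomposition axioms: the edge $wu$ from the unique touch vertex $w$ to its path-neighbor $u \notin V_t$ gives, via (T2), a bag $V_s \ni w,u$; Proposition~\ref{one-branch} (applied to $u$, the bag $V_s$, and a bag of $\Branch_t(t')$ containing $u$) places $s$ in $\Branch_t(t')$; and since $t'$ is the unique neighbor of $t$ in that component of $T-t$, the $t$--$s$ path in $T$ passes through $t'$, so (T3) forces $w \in V_{t'}$. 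Both arguments are sound and of comparable length; yours is more elementary in that it bypasses the separator lemma entirely and is direct rather than by contradiction, constructively pinning $w$ into every bag between $t$ and $s$. The paper's version buys stylistic economy within its own toolkit, since Proposition~\ref{sep-tt'} is reused repeatedly later (e.g., in Proposition~\ref{V(P')-cap-V(Q')=emptyset} and Theorem~\ref{lptchordal}), whereas your version exposes exactly which axioms do the work. All the delicate points in your write-up check out: $w$ is unique by 1-touching, its path-neighbor exists because fencing forces $P'-V_t \neq \emptyset$, that neighbor lies outside $V_t$ again by 1-touching, and $s \neq t$ because $u \notin V_t$.
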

\begin{proof}
  Suppose by contradiction
  that there exists a vertex $x	 \in (V_t \cap P') \setminus V_{t'}$.
  As $P'$ is fenced by~$V_t$, there exists a vertex $y \in P'-V_{t}$.
  Moreover, $\Branch_t(y)=\Branch_t(P')=\Branch_t(t')$.
  Let $P'_{xy}$ be the subpath of $P'$ with $x$ and $y$ as extremes.
  Since~$P'$ 1-touches $V_t$, the subpath $P'_{xy}$ also 1-touches $V_t$.
  This implies that $P'_{xy}$ is internally disjoint from $V_t$.
  As $x$, $y \notin V_t \cap V_{t'}$, the subpath $P'_{xy}$ is disjoint from $V_t \cap V_{t'}$.
  But then we contradict Proposition~\ref{sep-tt'}, which says that
  $V_t \cap V_{t'}$ separates $x$, which is in a bag of $\Branch_{t'}(t)$,
  from $y$, which is in a bag of $\Branch_{t}(t')$.
\end{proof}

\begin{prop} \label{V(P')-cap-V(Q')=emptyset}
  Let $t \in V(T)$ and $tt' \in E(T)$.
  Let $P'$ be a path fenced by $V_t$ that 1-touches $V_t$ such that ${\Branch_t(P')=\Branch_{t}(t')}$.
  Let $Q'$ be a path fenced by $V_t$ that 1-touches $V_t$ at a vertex in $V_t \setminus V_{t'}$.
  Then $P' \cap Q' = \emptyset$.
\end{prop}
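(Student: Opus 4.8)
The plan is to pin down where each path meets $V_t$ and then argue that the two paths live in different branches of $T$ at $t$. Write $x$ for the unique vertex of $P' \cap V_t$ and $y$ for the unique vertex of $Q' \cap V_t$ (both exist since the paths $1$-touch $V_t$). Since $P'$ is fenced by $V_t$, $1$-touches $V_t$, and satisfies $\Branch_t(P')=\Branch_t(t')$, Proposition~\ref{Vt capP'subseteqVt'} gives $x \in V_t \cap P' \subseteq V_{t'}$. On the other hand $y \in V_t \setminus V_{t'}$ by hypothesis, so $x \neq y$; thus the two paths already touch $V_t$ at distinct vertices, and neither touch vertex lies on the other path (each path meets $V_t$ only at its own single touch vertex).

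The crux is to show that $P'$ and $Q'$ are component-disjoint, i.e.\ that $\Branch_t(Q') \neq \Branch_t(t')$. I would prove this by contradiction: assume $\Branch_t(Q') = \Branch_t(t')$. Since $Q'$ is fenced, $Q'-V_t$ is nonempty, so the path $Q'$ has at least two vertices and $y$ has a neighbor $v$ along $Q'$; because $Q'$ $1$-touches $V_t$ only at $y$, this neighbor satisfies $v \notin V_t$, and as a vertex of $Q'-V_t$ it satisfies $\Branch_t(v) = \Branch_t(Q') = \Branch_t(t')$. Now apply Proposition~\ref{sep-uv} with $p=t$, $q=t'$, $u=y$, and this $v$: the hypotheses $y \in V_t \setminus V_{t'}$, $v \notin V_t$, and $\Branch_t(v) = \Branch_t(t')$ are exactly met, so $y$ and $v$ are not adjacent — contradicting that they are consecutive on $Q'$. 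Hence $\Branch_t(Q') \neq \Branch_t(t')$. I expect this adjacency contradiction to be the main obstacle, since it is the only place where the condition ``$y \in V_t \setminus V_{t'}$'' is genuinely used.

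Finally I would assemble the conclusion. Take any $u \in P'-V_t$ and any $v \in Q'-V_t$; then $\Branch_t(u) = \Branch_t(t') \neq \Branch_t(v)$, so by the contrapositive of Proposition~\ref{Tu=Tv} the vertices $u$ and $v$ are separated by $V_t$, i.e.\ they lie in different components of $G-V_t$. Therefore $(P'-V_t) \cap (Q'-V_t) = \emptyset$. Any hypothetical common vertex $w \in P' \cap Q'$ is then impossible: if $w \notin V_t$ it would contradict this disjointness, while if $w \in V_t$ it would force $w=x$ and $w=y$, contradicting $x \neq y$. Thus $P' \cap Q' = \emptyset$.
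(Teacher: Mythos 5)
Your proof is correct and takes essentially the same route as the paper's: both pin the touch vertex of $P'$ inside $V_{t'}$ via Proposition~\ref{Vt capP'subseteqVt'} and then rule out $\Branch_t(Q')=\Branch_t(t')$, which forces the two paths into different components of $G-V_t$. The only difference is cosmetic: where the paper simply applies Proposition~\ref{Vt capP'subseteqVt'} a second time to $Q'$ to get the contradiction, you inline that step by invoking Proposition~\ref{sep-uv} on the edge of $Q'$ at its touch vertex, a valid and slightly more explicit rendering of the same argument.
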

\begin{proof}
  Suppose by contradiction that $P' \cap Q' \neq \emptyset$.
  As $P \cap Q \cap V_t = \emptyset$, we must have that $\Branch_t(Q')=\Branch_t(P')=\Branch_t(t')$.
  By Proposition~\ref{Vt capP'subseteqVt'}, $V_t \cap Q' \subseteq V_{t'}$, a contradiction.
\end{proof}

\section{Chordal graphs}\label{section:chordals}

We start by proving a lemma that is valid for every graph. 

\begin{lema}\label{extreme-join-clique}
  Let~$G$ be a graph with a clique~$K$.
  Let~$\cC$ be the set of all longest paths in~$G$ that 
  cross~$K$, 2-touch~$K$, and are extreme-joined by~$K$.
  There are at most two~$K$-nonequivalent paths in~$\cC$.
\end{lema}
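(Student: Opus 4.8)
The plan is to attach to every path in $\cC$ a canonical decomposition together with two components of $G-K$, a \emph{home} one and a \emph{foreign} one, and then to show that three pairwise $K$-nonequivalent members of $\cC$ would force these components to collide impossibly.

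First I would fix $P\in\cC$ with $P\cap K=\{a,b\}$ and write $P=P_a\cdot\widetilde{P}\cdot P_b$ as in the notation section, where $P_a$ and $P_b$ are the tails at $a$ and $b$ and $\widetilde{P}$ is the subpath from $a$ to $b$ whose interior avoids $K$ (its interior is nonempty, as I will argue next). Both endpoints of $P$ lie outside $K$ and, since $P$ is extreme-joined, in a common component $H_P$ of $G-K$; because $P_a-a$ and $P_b-b$ are connected and disjoint from $K$, both lie in $H_P$. As $P$ crosses $K$, some vertex of $P-K$ lies outside $H_P$, and the only place left for it is the interior of $\widetilde{P}$, which is therefore nonempty and lies in a component $F_P\neq H_P$. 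I record $\ell_P=|\widetilde{P}|$, so that $L=|P_a|+\ell_P+|P_b|$.

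The core step is an exchange claim: if $P,Q\in\cC$ are $K$-nonequivalent with $Q\cap K=\{c,d\}$ and $\ell_P\geq\ell_Q$, then $F_P=H_Q$. To prove it I would assume $F_P\neq H_Q$ and splice the bridge $\widetilde{P}$ of $P$ between the two tails $Q_c,Q_d$ of $Q$: traverse $Q_c$ up to $c$, cross to an endpoint of $\widetilde{P}$ through a clique edge of $K$, traverse $\widetilde{P}$, cross to $d$ through a clique edge, and traverse $Q_d$. Since $Q_c$ and $Q_d$ meet $K$ only at $c$ and $d$ and lie in $K\cup H_Q$, while the interior of $\widetilde{P}$ lies in $F_P\neq H_Q$, this is a simple path as soon as the clique vertices used are distinct. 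The touch sets are distinct, hence share at most one vertex; after relabelling I may assume the shared vertex (if any) is $a=c$, and then I simply use it as a junction (concatenating $Q_c$ with $\widetilde{P}$ at $a$) instead of inserting an extra edge. The resulting path has length $|Q_c|+|Q_d|+\ell_P$ plus one or two for the clique edges, that is $L-\ell_Q+\ell_P+1\geq L+1>L$ in the shared case and $L-\ell_Q+\ell_P+2\geq L+2>L$ in the disjoint case, contradicting that $L$ is the longest length. Hence $F_P=H_Q$.

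Finally, suppose for contradiction that $\cC$ contains three pairwise $K$-nonequivalent paths, labelled so that $\ell_1\geq\ell_2\geq\ell_3$. Applying the exchange claim to the pairs $(1,2)$, $(1,3)$, and $(2,3)$ yields $F_1=H_2$, $F_1=H_3$, and $F_2=H_3$. Then $H_2=F_1=H_3$, and therefore $F_2=H_3=H_2$, contradicting $F_2\neq H_2$. Thus $\cC$ has at most two $K$-nonequivalent paths. I expect the main obstacle to be the exchange claim, specifically verifying that the spliced object is a genuine simple path in each configuration of the touch sets (disjoint, or sharing exactly one vertex) and confirming that the inequality $F_P\neq H_Q$ is exactly what guarantees disjointness of the bridge interior from the two tails; once this is in place, the length bookkeeping and the three-path pigeonhole are immediate.
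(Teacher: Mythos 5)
Your proof is correct, and although it runs on the same basic mechanic as the paper's---splicing the middle segment $\tilde{P}$ of one path between the tails of another via clique edges, and tracking which components of $G-K$ the tails and the middle occupy (your $H_P$ and $F_P$ are exactly the paper's $\Comp_{K}(P_a)=\Comp_{K}(P_b)$ and $\Comp_{K}(\tilde{P})$)---the organization is genuinely different and leaner. The paper does not order the paths by middle length: for each pair it considers two simultaneous splices whose lengths sum to more than $2L$, which yields only the \emph{disjunction} $\Comp_{K}(\tilde{P})=\Comp_{K}(Q_c)$ or $\Comp_{K}(\tilde{Q})=\Comp_{K}(P_a)$; it then needs a case analysis over the three resulting disjunctions to pin down a cyclic pattern among $P$, $Q$, $R$, and closes with a third exchange producing three spliced paths whose lengths sum to more than $3L$. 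Your ordering $\ell_P\geq\ell_Q$ upgrades the disjunction to the unconditional conclusion $F_P=H_Q$ using a \emph{single} splice of length $L-\ell_Q+\ell_P+2$ (or $+1$ in the shared-touch-vertex case), each exceeding $L$ directly, so both the case analysis and the triple splice disappear, replaced by the pigeonhole $H_2=F_1=H_3$ and $F_2=H_3=H_2$ against $F_2\neq H_2$. Your simplicity verification is the right one and is sound: $F_P\neq H_Q$ handles the interiors, $Q\cap K=\{c,d\}$ handles the clique vertices, distinct $2$-sets share at most one touch vertex, and the relabelling to $a=c$ is legitimate because the labels within each path are symmetric. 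As a bonus, your exchange claim gives slightly more information than the lemma itself: for any two $K$-nonequivalent members of $\cC$, the middle of the path with the longer middle must lie in the home component of the other.
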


\begin{proof}
  Suppose by contradiction that there are (at least)
  three $K$-nonequivalent longest paths~$P$,~$Q$, and~$R$ in~$\cC$.
  Say~$P \cap K = \{a,b\}$,~$Q \cap K = \{c,d\}$, and~$R \cap K = \{e,f\}$, 
  where~$\{a, b\}$,~$\{c, d\}$, and~$\{e, f\}$ are pairwise distinct but not necessarily pairwise disjoint.
  We may assume that either $\{a, b\} \cap \{c, d\} = \emptyset$ or $\{a, b\} \cap \{c, d\} = \{b\} = \{d\}$.
  If $\tilde{P}$ is component-disjoint from $Q_c$ (and from $Q_d$), and
  $\tilde{Q}$ is component-disjoint from $P_a$ (and from $P_b$),
  then~$P_a \cdot ac \cdot \tilde{Q} \cdot db \cdot P_b$ and
  $Q_c \cdot ca \cdot \tilde{P} \cdot bd \cdot Q_d$ are paths 
  whose lengths sum more than~$2L$, a contradiction, 
  as at least one of them would have length greater than $L$.
  So, 
  \begin{equation}\label{PconQ}
    \Comp_{K}(\tilde{P})=\Comp_{K}(Q_c) \mbox{ \ or \ } \Comp_{K}(\tilde{Q})=\Comp_{K}(P_a),
  \end{equation}
  Applying the same reasoning to paths $P$ and $R$, and to paths $Q$ and $R$, we conclude that 
  \begin{equation}\label{PconR}
    \Comp_{K}(\tilde{P})=\Comp_{K}(R_e)\mbox{ \ or \ } \Comp_{K}(\tilde{R})=\Comp_{K}(P_a),
 \end{equation}
 and that 
  \begin{equation}\label{QconR}
    \Comp_{K}(\tilde{Q})=\Comp_{K}(R_e) \mbox{ \ or \ } \Comp_{K}(\tilde{R})=\Comp_{K}(Q_c).
 \end{equation}
 Also, as $P$, $Q$, and $R$ cross $K$, from (\ref{PconQ}), (\ref{PconR}), and (\ref{QconR}), we have that
 \begin{equation}\label{diferentes}
 \Comp_{K}(\tilde{P}) \neq \Comp_{K}(\tilde{Q}) \mbox{, \ }
 \Comp_{K}(\tilde{P}) \neq \Comp_{K}(\tilde{R})\mbox{, \ and \ }
 \Comp_{K}(\tilde{Q}) \neq \Comp_{K}(\tilde{R}).
 \end{equation}
  
  Without loss of generality, we may assume that $\Comp_{K}(\tilde{P})=\Comp_{K}(Q_c)$. 
  (Otherwise, interchange $P$ with~$Q$, and $\{a,b\}$ with $\{c,d\}$.)
  See Figure~\ref{fig:threecomponents}(a).
  Now, if $\Comp_{K}(\tilde{P})=\Comp_{K}(R_e)$, then, by (\ref{diferentes}),
  $\Comp_{K}(\tilde{Q}) \neq \Comp_{K}(R_e)$, 
  and thus, by (\ref{QconR}), $\Comp_{K}(\tilde{R})=\Comp_{K}(Q_c)$.
  But then  
  $\Comp_{K}(\tilde{R})=\Comp_{K}(\tilde{P})$, and we contradict (\ref{diferentes}).
  Hence, $\Comp_{K}(\tilde{P})\neq \Comp_{K}(R_e)$, and, by (\ref{PconR}),
  $\Comp_{K}(\tilde{R})=\Comp_{K}(P_a)$.
  Similarly, one can deduce that $\Comp_{K}(\tilde{R}) \neq \Comp_{K}(Q_c)$. 
  Thus, by~(\ref{QconR}), $\Comp_{K}(\tilde{Q})=\Comp_{K}(R_e)$,
  and, again, we can deduce that $\Comp_{K}(\tilde{Q}) \neq \Comp_{K}(P_a)$.
  As~$P$, $Q$, and $R$ are extreme-joined, we conclude that 
  \begin{equation} \label{P_aP_btildeQ}
  \Comp_{K}(P_a)=\Comp_{K}(P_b) \neq \Comp_{K}(\tilde{Q}),
  \end{equation}
  \begin{equation} \label{Q_cQ_dtildeR}
  \Comp_{K}(Q_c)=\Comp_{K}(Q_d) \neq  \Comp_{K}(\tilde{R}),
  \end{equation}
  \begin{equation} \label{R_eR_ftildeP}
  \Comp_{K}(R_e)=\Comp_{K}(R_f) \neq  \Comp_{K}(\tilde{P}).
  \end{equation}
  See Figure~\ref{fig:threecomponents}(b).
 
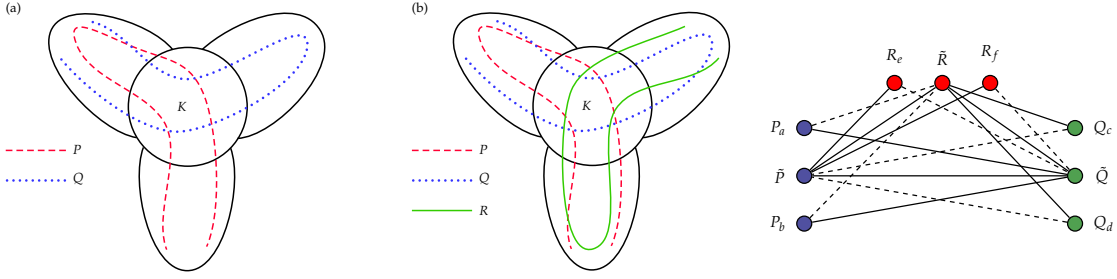
\begin{figure}[thb]
\parbox{10cm}{
\centering
\scalebox{.5}{
%
%
\psscalebox{1.0 1.0} 
{
\begin{pspicture}(0,-3.537896)(19.327814,3.537896)
\definecolor{colour0}{rgb}{1.0,0.0,0.2}
\definecolor{colour1}{rgb}{0.2,0.2,1.0}
\definecolor{colour2}{rgb}{0.2,0.8,0.0}
\pscircle[linecolor=black, linewidth=0.04, dimen=outer](15.635,0.86424345){1.6}
\rput[bl](15.355,0.7642435){$K$}
\psbezier[linecolor=black, linewidth=0.04](16.015,2.3842435)(16.746798,3.0657651)(18.549599,3.8783824)(19.155,2.9042434692382812)(19.7604,1.9301044)(17.954262,0.06267659)(16.955,0.024243468)
\psbezier[linecolor=black, linewidth=0.04](15.235,2.3842435)(14.608967,3.1640396)(12.377352,3.8082602)(11.995,2.8842434692382812)(11.612648,1.9602267)(13.287636,0.20270936)(14.275,0.04424347)
\psbezier[linecolor=black, linewidth=0.04](14.455,-0.21575654)(14.1045065,-1.1523216)(14.615035,-3.4874234)(15.615,-3.4957565307617187)(16.614965,-3.5040896)(17.113516,-1.176717)(16.775,-0.23575653)
\psbezier[linecolor=colour0, linewidth=0.04, linestyle=dashed, dash=0.17638889cm 0.10583334cm](15.075,-2.9357564)(14.698983,-2.0091436)(15.517241,-0.4204843)(15.015,0.44424346923828123)(14.512759,1.3089713)(13.078966,1.5608963)(12.695,2.4842434)(12.311034,3.4075906)(13.250983,2.8465953)(14.175,2.4642434)(15.099017,2.0818918)(15.675613,2.1294818)(16.055,1.2042434)(16.434387,0.27900514)(16.559065,-1.950125)(16.135,-2.8557565)
\psbezier[linecolor=colour1, linewidth=0.07, linestyle=dotted, dotsep=0.10583334cm](12.275,2.1242435)(12.910523,1.3521616)(14.416512,0.1492808)(15.415,0.20424346923828124)(16.413488,0.25920615)(18.743753,1.4867848)(18.815,2.4842434)(18.886248,3.481702)(16.50863,1.4915637)(15.515,1.6042435)(14.521369,1.7169234)(13.982365,2.9857776)(12.995,3.1442435)
\psbezier[linecolor=colour2, linewidth=0.04](18.095,2.9842434)(17.181992,2.5763035)(15.456064,2.2087922)(15.075,1.2842434692382811)(14.693936,0.3596948)(14.700853,-3.2350967)(15.655,-2.9357564)(16.609146,-2.6364164)(15.705626,-0.051332574)(16.255,0.78424346)(16.804375,1.6198195)(18.353445,1.3771666)(18.995,2.1442435)
\psline[linecolor=colour2, linewidth=0.04](10.835,-1.9157566)(12.435,-1.9157566)
\psline[linecolor=colour1, linewidth=0.07, linestyle=dotted, dotsep=0.10583334cm](10.835,-1.1157565)(12.435,-1.1157565)
\psline[linecolor=colour0, linewidth=0.04, linestyle=dashed, dash=0.17638889cm 0.10583334cm](10.835,-0.31575653)(12.435,-0.31575653)
\rput[bl](12.615,-0.41575652){$P$}
\rput[bl](12.615,-1.2357565){$Q$}
\rput[bl](12.615,-2.0357566){$R$}
\pscircle[linecolor=black, linewidth=0.04, dimen=outer](4.875,0.84424347){1.6}
\rput[bl](4.595,0.74424344){$K$}
\psbezier[linecolor=black, linewidth=0.04](5.255,2.3642435)(5.986798,3.0457652)(7.7895994,3.8583825)(8.395,2.8842434692382812)(9.000401,1.9101044)(7.194261,0.04267659)(6.195,0.0042434693)
\psbezier[linecolor=black, linewidth=0.04](4.475,2.3642435)(3.8489664,3.1440396)(1.6173518,3.7882602)(1.235,2.864243469238281)(0.8526482,1.9402267)(2.5276356,0.18270937)(3.515,0.024243468)
\psbezier[linecolor=black, linewidth=0.04](3.695,-0.23575653)(3.3445063,-1.1723217)(3.8550348,-3.5074234)(4.855,-3.5157565307617187)(5.854965,-3.5240896)(6.353516,-1.196717)(6.015,-0.25575653)
\psbezier[linecolor=colour0, linewidth=0.04, linestyle=dashed, dash=0.17638889cm 0.10583334cm](4.315,-2.9557564)(3.9389832,-2.0291436)(4.757241,-0.4404843)(4.255,0.42424346923828127)(3.7527592,1.2889712)(2.3189662,1.5408963)(1.935,2.4642434)(1.5510339,3.3875906)(2.4909832,2.8265953)(3.415,2.4442434)(4.339017,2.0618918)(4.9156137,2.1094818)(5.295,1.1842434)(5.6743865,0.25900516)(5.7990656,-1.9701251)(5.375,-2.8757565)
\psbezier[linecolor=colour1, linewidth=0.07, linestyle=dotted, dotsep=0.10583334cm](1.515,2.1042435)(2.1505232,1.3321617)(3.6565115,0.1292808)(4.655,0.18424346923828125)(5.6534886,0.23920614)(7.9837527,1.4667847)(8.055,2.4642434)(8.126247,3.461702)(5.7486315,1.4715636)(4.755,1.5842434)(3.7613688,1.6969233)(3.2223644,2.9657776)(2.235,3.1242435)
\psline[linecolor=colour1, linewidth=0.07, linestyle=dotted, dotsep=0.10583334cm](0.035,-1.1157565)(1.635,-1.1157565)
\psline[linecolor=colour0, linewidth=0.04, linestyle=dashed, dash=0.17638889cm 0.10583334cm](0.035,-0.31575653)(1.635,-0.31575653)
\rput[bl](1.815,-0.41575652){$P$}
\rput[bl](1.815,-1.2357565){$Q$}
\rput[bl](0.035,3.2842436){(a)}
\rput[bl](10.835,3.2842436){(b)}
\end{pspicture}
}}
}
\parbox{5cm}{
\centering
\scalebox{.6}{
\begin{tikzpicture}[thick,
  every node/.style={draw,circle},
  psnode/.style={fill=myblue},
  qsnode/.style={fill=mygreen},
  rsnode/.style={fill=myred},
  style1/.style={ellipse,draw,inner sep=-1pt,text width=1.5cm},
  style2/.style={ellipse,draw,inner sep=0pt,text width=4cm,text heigth=16cm,ellipse ratio=2},
]

\begin{scope}[start chain=going below,node distance=7mm,yshift=1.5cm]
\node[psnode,on chain] (pa) [label=left: $P_a$] {};
\node[psnode,on chain] (tp) [label=left: $\tilde{P}$] {};
\node[psnode,on chain] (pb) [label=left: $P_b$] {};
\end{scope}

\begin{scope}[xshift=6cm,yshift=1.5cm,start chain=going below,node distance=7mm]
\node[qsnode,on chain] (qc) [label=right: $Q_c$] {};
\node[qsnode,on chain] (tq) [label=right: $\tilde{Q}$] {};
\node[qsnode,on chain] (qd) [label=right: $Q_d$] {};
\end{scope}

\begin{scope}[xshift=2cm,yshift=2.5cm,start chain=going right,node distance=7mm]
\node[rsnode,on chain] (re) [label=above: $R_e$] {};
\node[rsnode,on chain] (tr) [label=above: $\tilde{R}$] {};
\node[rsnode,on chain] (rf) [label=above: $R_f$] {};
\end{scope}

\draw (tp) -- (tq);
\draw (tq) -- (tr);
\draw (tr) -- (tp);
\draw (pa) -- (tq);
\draw (pb) -- (tq);
\draw (qc) -- (tr);
\draw (qd) -- (tr);
\draw (re) -- (tp);
\draw (rf) -- (tp);

\draw [dashed]  (re) -- (tq);
\draw [dashed]  (rf) -- (tq);
\draw [dashed]  (qc) -- (tp);
\draw [dashed]  (qd) -- (tp);
\draw [dashed]  (pa) -- (tr);
\draw [dashed]  (pb) -- (tr);

\end{tikzpicture}
}}
\caption{(a) Paths $P$ and $Q$ from the proof of Lemma~\ref{extreme-join-clique}. (b) Paths $P$, $Q$, and $R$, 
and a tripartite graph representing the interaction between their parts. The vertex set of the graph has three 
vertices for each of the paths, one for each part.  The edge set contains two types of edges: the straight edges 
connect parts that are component-disjoint and the dashed edges connect parts that are not component-disjoint.
When the interaction between two parts is not determined, we omit the edge between them.}
\label{fig:threecomponents}
\end{figure}

  Hence, by (\ref{P_aP_btildeQ}), (\ref{Q_cQ_dtildeR}), and (\ref{R_eR_ftildeP}),
  we have three paths, 
  $P_a \cdot ac \cdot \widetilde{Q} \cdot db \cdot P_b$, \ 
  $Q_c \cdot ce \cdot \widetilde{R} \cdot fd \cdot Q_d$, \
  and \ $R_e \cdot ea \cdot \widetilde{P} \cdot bf \cdot R_f$,
  whose lengths sum more than~$3L$, which leads to a contradiction.
\end{proof}

The previous lemma examines how longest paths that are extreme-joined by a clique behave. 
The following lemma examines the case in which the longest paths are extreme-separated. 
Observe that, in both cases, we are only considering longest paths that cross the clique and touch it at most twice.

\begin{lema}\label{extremeseparated}
  Let~$G$ be a graph with a clique~$K$ and let~$\cC$ be the set of all 
  longest paths in~$G$ that are extreme-separated and touch~$K$ at most twice.
  Every two elements of~$\cC$ have a common vertex of~$K$.
\end{lema}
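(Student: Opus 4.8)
The plan is to argue by contradiction: suppose $P$ and $Q$ are two elements of $\cC$ with $P\cap Q\cap K=\emptyset$, and then exhibit two paths whose lengths add up to more than $2L$, so that one of them is longer than a longest path, a contradiction.

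First I would fix notation adapted to extreme separation. Each of $P$ and $Q$ crosses $K$, has both extremes outside $K$, and is extreme-separated, so its two extremes lie in distinct components of $G-K$. Writing $P\cap K=\{a,b\}$ (with $a=b$ allowed when $P$ $1$-touches $K$), the tails satisfy $\Comp_{K}(P_a)\neq\Comp_{K}(P_b)$, and when $P$ $2$-touches $K$ there is a middle segment $\widetilde{P}$ from $a$ to $b$ whose interior lies in a single, possibly coinciding, component $\Comp_{K}(\widetilde{P})$. I set up the same data $\{c,d\}$, $Q_c$, $Q_d$, $\widetilde{Q}$ for $Q$; by hypothesis $\{a,b\}\cap\{c,d\}=\emptyset$, so all the relevant vertices of $K$ are distinct and, $K$ being a clique, pairwise adjacent.

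The core construction is a cut-and-cross-connect. I choose one vertex $u\in\{a,b\}$ at which to cut $P$ and one vertex $w\in\{c,d\}$ at which to cut $Q$. Cutting $P$ at $u$ splits it into two subpaths, one containing each extreme, with the middle $\widetilde{P}$ carried along with exactly one of them (the choice governed by whether $u=a$ or $u=b$); likewise for $Q$. Using the clique edge $uw$, I recombine the four subpaths into two new paths, pairing each $P$-subpath with a $Q$-subpath across the edge $uw$. Every original segment is reused exactly once and each new path contributes the extra edge $uw$, so the two new paths have total length $|P|+|Q|+2=2L+2$, which already forces one of them to exceed $L$, provided both are simple.

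The hard part, and the crux of the argument, is guaranteeing this simplicity. A new path fails to be simple exactly when one of its $P$-segments meets one of its $Q$-segments, and this can only happen when the two segments lie in a common component of $G-K$ (segments from the same original path are automatically disjoint). So I must choose the cuts $u,w$ and the pairing so that, inside each new path, the components occupied by its $P$-part and its $Q$-part are disjoint. I would do this in two steps. First, choose the pairing of extreme-components: among the two ways of matching $\{\Comp_{K}(P_a),\Comp_{K}(P_b)\}$ with $\{\Comp_{K}(Q_c),\Comp_{K}(Q_d)\}$, at least one matches distinct components in both pairs, since $\Comp_{K}(P_a)\neq\Comp_{K}(P_b)$ and $\Comp_{K}(Q_c)\neq\Comp_{K}(Q_d)$ (this is where extreme separation is essential). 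Second, place the middles: the choice of $u$ (resp.\ $w$) decides which subpath carries $\widetilde{P}$ (resp.\ $\widetilde{Q}$), and since the single component $\Comp_{K}(\widetilde{P})$ can clash with at most one of the two distinct $Q$-extreme components, each middle can be routed to the side whose partner avoids it; a short check, again using that the two extreme-components on each side are distinct, rules out the only remaining conflict, namely both middles landing in the same new path while occupying the same component. Once both new paths are simple, the bound $2L+2$ yields the contradiction, and the $1$-touch cases fall out as the degenerate instances in which the corresponding middle is absent and no placement is needed.
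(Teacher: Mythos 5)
Your proposal is correct and is essentially the paper's own argument: the paper likewise splits each path into two tails and a middle (with the $1$-touch case as the degenerate middle), uses extreme-separation to match the tails into component-disjoint pairs, and recombines across a clique edge into two paths of total length $|P|+|Q|+2 = 2L+2 > 2L$; its three explicit cases are precisely the possible routings of the middles in your uniform scheme. One small remark: your final ``short check'' does go through, but what rules out both middles being forced onto the same side while $\Comp_{K}(\widetilde{P})=\Comp_{K}(\widetilde{Q})$ is the component-disjointness of the matched tail pair chosen in your first step (the forcing would yield, say, $\Comp_{K}(P_b)=\Comp_{K}(\widetilde{P})=\Comp_{K}(\widetilde{Q})=\Comp_{K}(Q_d)$, contradicting that matching), not merely the distinctness of the two extreme components within each path.
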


\begin{proof}
  Let~$P$ and~$Q$ be two arbitrary paths in~$\cC$.
  Suppose by contradiction that $P \cap Q \cap K = \emptyset$.
  As~$P$ is extreme-separated by $K$, path $P$ crosses $K$ 
  and therefore $P$ either 1-touches or 2-touches $K$.  
  To address these two possibilities at once, 
  let $x$ and $y$ be such that $P$ touches $K$ at $x$ and $y$, 
  with $x \neq y$ if~$P$ 2-touches $K$.  
  Also, if $x = y$, then let $P_x$ and $P_y$ be different $x$-tails of $P$ 
  and let $\widetilde{P}$ be the path consisting of only the vertex $x$. 
  Let~$w$ and~$z$, and possibly $Q_w$, $Q_z$, and $\widetilde{Q}$, 
  be defined similarly for~$Q$. 
  
  As both~$P$ and~$Q$ are extreme-separated by~$K$, the tail 
  $P_x$ is component-disjoint from at least one in $\{Q_w,Q_z\}$.
  Analogously,~$P_y$ is component-disjoint from at least one in $\{Q_w,Q_z\}$,
  $Q_w$ is component-disjoint from at least one in $\{P_x, P_y\}$ and
  $Q_z$ is component-disjoint from at least one in $\{P_x, P_y\}$.
  We may assume without loss of generality that~$P_x$ and~$Q_w$ are 
  component-disjoint and that~$P_y$ and~$Q_z$ are component-disjoint.
  (Otherwise interchange~$w$ and~$z$.)  Observe also that~$\widetilde{P}$ 
  is component-disjoint from at least one in~$\{Q_w, Q_z\}$.  Without loss 
  of generality, assume that~$\widetilde{P}$ is component-disjoint from~$Q_w$.  
  (Otherwise interchange $x$ and~$y$, and $w$ and~$z$ simultaneously.)
  
  Note that~$\widetilde{Q}$ is component-disjoint from at least one in~$\{P_x, P_y\}$.
  First suppose that~$\widetilde{Q}$ is component-disjoint from~$P_y$.
  (See a representation of the interactions between the parts of $P$ and~$Q$
  in Figure \ref{fig:extremeseparated}(a).)
  Then, one of the paths~$P_x \cdot \widetilde{P} \cdot yw \cdot Q_w$ or
  $P_y \cdot yw \cdot \widetilde{Q} \cdot Q_z$ is longer than~$L$, a contradiction.
  Now suppose that~$\widetilde{Q}$ is not component-disjoint from~$P_y$,
  that is, ${\Comp_K(\widetilde{Q}) = \Comp_K(P_y)}$,
  and thus~$\widetilde{Q}$ is component-disjoint from~$P_x$.
  If $\widetilde{P}$ and $\widetilde{Q}$ are component-disjoint
  (see Figure \ref{fig:extremeseparated}(b)), then one of
  $P_x \cdot \widetilde{P} \cdot yz \cdot \widetilde{Q} \cdot Q_w$ 
  or~$P_y \cdot yz \cdot Q_z$ is longer than~$L$, a contradiction.
  If $\widetilde{P}$ and $\widetilde{Q}$ are not component-disjoint,
  that is, ${\Comp_K(\widetilde{P}) = \Comp_K(\widetilde{Q})}$, then,
  as ${\Comp_K(\widetilde{Q}) = \Comp_K(P_y)}$ and $P_y$ and $Q_z$
  are component-disjoint, we have that $Q_z$ is component-disjoint
  from~$\widetilde{P}$ (see Figure \ref{fig:extremeseparated}(c)).
  Thus, one of the paths~$P_x \cdot xz \cdot \widetilde{Q} \cdot Q_w$
  or~$P_y  \cdot \widetilde{P} \cdot xz \cdot Q_z$ is longer than~$L$,
  a contradiction. 
\end{proof}

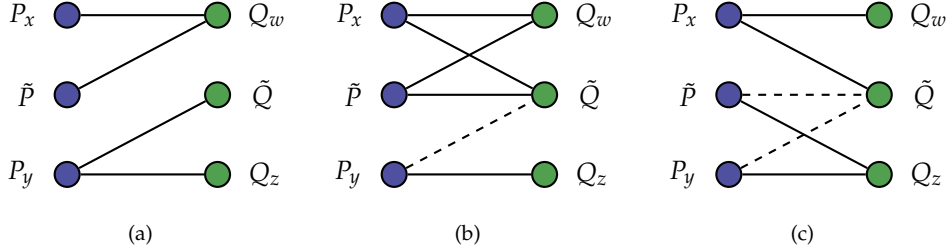
\begin{figure}[ht]
\centering

\subfigure[ ]{
\begin{tikzpicture}[thick,
  every node/.style={draw,circle},
  psnode/.style={fill=myblue},
  qsnode/.style={fill=mygreen},
  rsnode/.style={fill=myred},
  style1/.style={ellipse,draw,inner sep=-1pt,text width=1.5cm},
  style2/.style={ellipse,draw,inner sep=0pt,text width=4cm,text heigth=16cm,ellipse ratio=2},
]

\begin{scope}[start chain=going below,node distance=7mm,yshift=1.5cm]
\node[psnode,on chain] (px) [label=left: $P_x$] {};
\node[psnode,on chain] (tp) [label=left: $\tilde{P}$] {};
\node[psnode,on chain] (py) [label=left: $P_y$] {};
\end{scope}

\begin{scope}[xshift=2cm,yshift=1.5cm,start chain=going below,node distance=7mm]
\node[qsnode,on chain] (qw) [label=right: $Q_w$] {};
\node[qsnode,on chain] (tq) [label=right: $\tilde{Q}$] {};
\node[qsnode,on chain] (qz) [label=right: $Q_z$] {};
\end{scope}

\draw (px) -- (qw);
\draw (tp) -- (qw);
\draw (py) -- (tq);
\draw (py) -- (qz);
\end{tikzpicture}
}
\subfigure[ ]{
\begin{tikzpicture}[thick,
  every node/.style={draw,circle},
  psnode/.style={fill=myblue},
  qsnode/.style={fill=mygreen},
  rsnode/.style={fill=myred},
  style1/.style={ellipse,draw,inner sep=-1pt,text width=1.5cm},
  style2/.style={ellipse,draw,inner sep=0pt,text width=4cm,text heigth=16cm,ellipse ratio=2},
]

\begin{scope}[start chain=going below,node distance=7mm,yshift=1.5cm]
\node[psnode,on chain] (px) [label=left: $P_x$] {};
\node[psnode,on chain] (tp) [label=left: $\tilde{P}$] {};
\node[psnode,on chain] (py) [label=left: $P_y$] {};
\end{scope}

\begin{scope}[xshift=2cm,yshift=1.5cm,start chain=going below,node distance=7mm]
\node[qsnode,on chain] (qw) [label=right: $Q_w$] {};
\node[qsnode,on chain] (tq) [label=right: $\tilde{Q}$] {};
\node[qsnode,on chain] (qz) [label=right: $Q_z$] {};
\end{scope}

\draw (tp) -- (qw);
\draw (tq) -- (px);
\draw (tp) -- (tq);
\draw (py) -- (qz);
\draw (px) -- (qw);
\draw [dashed]  (py) -- (tq);
\end{tikzpicture}
}
\subfigure[ ]{
\begin{tikzpicture}[thick,
  every node/.style={draw,circle},
  psnode/.style={fill=myblue},
  qsnode/.style={fill=mygreen},
  rsnode/.style={fill=myred},
  style1/.style={ellipse,draw,inner sep=-1pt,text width=1.5cm},
  style2/.style={ellipse,draw,inner sep=0pt,text width=4cm,text heigth=16cm,ellipse ratio=2},
]

\begin{scope}[start chain=going below,node distance=7mm,yshift=1.5cm]
\node[psnode,on chain] (px) [label=left: $P_x$] {};
\node[psnode,on chain] (tp) [label=left: $\tilde{P}$] {};
\node[psnode,on chain] (py) [label=left: $P_y$] {};
\end{scope}

\begin{scope}[xshift=2cm,yshift=1.5cm,start chain=going below,node distance=7mm]
\node[qsnode,on chain] (qw) [label=right: $Q_w$] {};
\node[qsnode,on chain] (tq) [label=right: $\tilde{Q}$] {};
\node[qsnode,on chain] (qz) [label=right: $Q_z$] {};
\end{scope}

\draw (tq) -- (px);
\draw (py) -- (qz);
\draw (px) -- (qw);
\draw (tp) -- (qz);
\draw [dashed]  (py) -- (tq);
\draw [dashed]  (tp) -- (tq);
\end{tikzpicture}
}

\caption{Each bipartite graph represents the situation of the paths $P$ and $Q$ in one 
of the cases of the proof of Lemma~\ref{extremeseparated}.  Each side of the bipartition 
has three vertices that represent the parts of each path.  There is a straight edge in the graph 
if the two corresponding vertices are component-disjoint and a dashed edge if they are not.}
\label{fig:extremeseparated}
\end{figure}

The following lemma synthesizes the two previous lemmas.
It says that, for every clique, when the transversal is not in it,
we would have a longest path that is
fenced by the clique. Observe that the lemma is valid only for
chordal graphs. Remember that $\omega(G)$ is the size
of a maximum clique in $G$. A \emph{$k$-clique} is a subset of $k$ vertices
in $G$ that are pairwise adjacent.

\begin{lema}\label{fenceds}
  Let~$G$ be a connected chordal graph with a~$k$-clique~$K$. One of the following is true:
  \begin{itemize}
  \item[(a)] $\lpt(G) \leq \max \{1, \omega(G)-2\}$.
  \item[(b)] There exists a longest path that  does not touch $K$.
  \item[(c)] There exists a vertex $v$ of $K$ such that
    there is a longest path that is fenced by $K$ and 1-touches $K$ at~$v$.
    Moreover, no longest path that 1-touches $K$ at $v$ crosses $K$.
  \item[(d)] There exists an edge $e$ of $K$ such that
    there is a longest path that is fenced by $K$ and 2-touches $K$ at the ends of $e$.
    Moreover, no longest path that 2-touches $K$ at the ends of $e$ crosses $K$.
    \end{itemize}
\end{lema}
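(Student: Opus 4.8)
The plan is to prove the statement in contrapositive form: assume that none of (b), (c), (d) holds, and produce a transversal $S\subseteq K$ of the longest paths with $|S|\le\max\{1,\omega(G)-2\}$, which gives (a). Since deleting two vertices from a clique is the natural way to reach the target size, I will aim to choose two vertices $u,w\in K$ and set $S=K\setminus\{u,w\}$, so that $|S|=k-2\le\omega(G)-2$; the cases $k\le 2$ make the bound equal to $1$ and reduce to the forest-like situation where any two longest paths meet, so I focus on $k\ge 3$. The key opening observation is that, because (b) fails, \emph{every} longest path touches $K$, and a longest path that touches $K$ at three or more vertices meets $S$ no matter which two vertices $u,w$ we delete. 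Hence $S$ is a transversal exactly when no longest path touches $K$ only inside $\{u,w\}$; equivalently, when no longest path $1$-touches $K$ at $u$ or at $w$, and no longest path $2$-touches $K$ at the pair $\{u,w\}$.

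The second step is to push all the ``dangerous'' vertices and edges into the crossing regime governed by the two preceding lemmas. Let $V_1\subseteq K$ be the set of vertices at which some longest path $1$-touches $K$, and let $E_2$ be the set of pairs $\{a,b\}\subseteq K$ at which some longest path $2$-touches $K$. If some $v\in V_1$ were witnessed only by longest paths fenced by $K$, then $v$ together with such a path would satisfy (c); since (c) fails, every $v\in V_1$ is $1$-touched by a longest path that \emph{crosses} $K$. The failure of (d) gives, identically, that every pair in $E_2$ is $2$-touched by a longest path that crosses $K$. Thus all of $V_1$ and $E_2$ is supported by longest paths that cross $K$ and touch it at most twice, which is precisely the hypothesis of Lemmas~\ref{extreme-join-clique} and~\ref{extremeseparated}.

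Next I would separate the supporting paths into extreme-separated and extreme-joined ones. A $1$-touching crossing path has its single touch-vertex in its interior (otherwise it would be fenced), so both its extremes lie outside $K$ and it is extreme-separated; hence Lemma~\ref{extremeseparated} applies simultaneously to every singleton of $V_1$ and to every extreme-separated pair of $E_2$, and their touch-sets pairwise intersect. A family of $1$- and $2$-element subsets of $K$ that pairwise intersect either shares a common vertex $v_0$ or consists of $2$-element sets lying inside a single triangle $\{a,b,c\}$, by a short Helly-type check. In the first case $V_1\subseteq\{v_0\}$ and every extreme-separated pair of $E_2$ contains $v_0$; in the second, $V_1=\varnothing$ and the extreme-separated pairs are among the three edges of the triangle. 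The extreme-joined pairs of $E_2$ come from $2$-touching crossing paths, and by Lemma~\ref{extreme-join-clique} there are at most two $K$-nonequivalent such paths, hence at most two extreme-joined pairs. Choosing $u,w$ from $K\setminus\{v_0\}$ (respectively from outside $\{a,b,c\}$) avoids all extreme-separated obstructions, and a count over the remaining $\binom{k-1}{2}$ (respectively $\binom{k}{2}$) pairs lets us dodge the at most two extreme-joined pairs as soon as $k$ is not too small.

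The main obstacle is exactly the low-dimensional residue of this count, i.e.\ $k=3$ (the case $\omega(G)=3$ when $K$ is a maximum clique): the star variant is blocked if the unique pair $K\setminus\{v_0\}$ happens to be one of the extreme-joined pairs, and the triangle variant is blocked if all three edges of $K$ carry an extreme-separated $2$-touching longest path. I expect these residual configurations to be eliminated not by Lemmas~\ref{extreme-join-clique} and~\ref{extremeseparated} as stated, but by a dedicated exchange argument of the same flavour: the offending longest path misses the third clique vertex $c$ while its two touch-vertices are linked through a component of $G-K$, so one tries to reroute through the clique edges incident to $c$ to obtain a path of length exceeding $L$, or else to expose a fenced longest path that resurrects case (c) or (d). Carrying out this rerouting in the extreme-joined situation, where both tails of the path lie in the \emph{same} component of $G-K$ and only the middle segment reaches another component, is the delicate point, and is where the clique structure of $K$ has to be exploited most carefully.
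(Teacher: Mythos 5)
Your scheme for $k\geq 4$ is sound and genuinely different in organization from the paper's proof: where the paper splits on whether some longest path $1$-touches $K$ and pins $k\in\{\omega(G)-1,\omega(G)\}$, you run a unified selection argument, using $\neg(c)$ and $\neg(d)$ to force crossing witnesses, Lemma~\ref{extremeseparated} plus a Helly check to confine the extreme-separated touch-sets to a star or a triangle, Lemma~\ref{extreme-join-clique} to bound the extreme-joined pairs by two, and a count to find a deletable pair $\{u,w\}$. (One point you should make explicit: a longest path touching $K$ at most twice must have both extremes outside $K$ before it can be classified as extreme-separated or extreme-joined; for $k\geq 3$ this follows because an extreme in $K$ would let you prepend an unvisited vertex of $K$. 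The paper glosses this too.) But the $k=3$ residue you leave open is not a technicality to be discharged by ``a dedicated exchange argument of the same flavour'' --- it is exactly where the statement's difficulty lives. When $k=\omega(G)=3$, conclusion $(a)$ demands a \emph{one-vertex} transversal, i.e.\ a positive answer to Gallai's question for the graph; the paper obtains this case from Proposition~\ref{tw=omega-1} ($\tw(G)=\omega(G)-1\leq 2$) together with the theorem of Chen \etal~\cite{Chen17} that connected series-parallel graphs have $\lpt=1$ --- a result that required a separate paper, not a local rerouting. Your two blocked configurations (the unique pair $K\setminus\{v_0\}$ being extreme-joined; all three triangle edges being extreme-separated) are fully consistent with everything Lemmas~\ref{extreme-join-clique} and~\ref{extremeseparated} assert, so no combination of them, nor any obvious exchange inside a triangle, eliminates these cases. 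This is a genuine gap, and your closing sentence concedes as much.

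Two repairs would shrink, but not close, the gap. First, do not freeze $|S|=k-2$: if $k\leq\max\{1,\omega(G)-2\}$ then $\neg(b)$ makes $K$ itself a valid transversal, and if $k=\omega(G)-1$ you may delete a single vertex, since $k-1=\omega(G)-2$; combined with $V_1\subseteq\{v_0\}$ (or $V_1=\emptyset$ in the triangle case) this immediately settles your blocked cases with $k=3<\omega(G)$, because every $2$-touching path automatically meets a $(k-1)$-subset of $K$. This flexibility is precisely how the paper dispatches its $k=\omega(G)-1$ subcases, and it reduces your residue to $k=\omega(G)=3$. Second, your dismissal of $k\leq 2$ is misstated: $\max\{1,\omega(G)-2\}$ need not equal $1$ there, and pairwise intersection of longest paths never by itself yields a common vertex. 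These cases are nevertheless easy with your own tools: for $k=1$, $\neg(b)$ gives $(a)$ outright; for $k=2$, assuming $\neg(a)$, $\neg(b)$, $\neg(c)$ produces crossing $1$-touching longest paths at both vertices of $K$, which are extreme-separated with disjoint touch-sets, contradicting Lemma~\ref{extremeseparated}. None of this, however, rescues $k=\omega(G)=3$: there you would in effect have to reprove the series-parallel theorem of Chen \etal, and that is the missing idea in your proposal.
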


\begin{proof}
  We will prove that the negation of $(a)$, $(b)$, and $(c)$ implies $(d)$.
  So, suppose that no clique of size~$\max \{1, \omega(G)-2\}$ is a longest 
  path transversal in~$G$, that every longest path touches $K$ at least once,
  and that, if a vertex $v$ is such that some longest path $1$-touches $K$ at~$v$, 
  then there exists a longest path that  $1$-touches $K$ at $v$ and crosses $K$.  
  If~${k\leq \max \{1, \omega(G)-2\}}$, then either~$(a)$ or~$(b)$ holds.
  So we may assume that~$k \ge \max\{1, \omega(G)-2\} + 1$.  
  If $\omega(G) \le 3$, 
  then $\tw(G)\leq 2$ by Proposition \ref{tw=omega-1}, and $(a)$ holds by Chen \etal~\cite{Chen17}.
  We conclude that $\omega(G) \geq 4$ and $k \in \{\omega(G)-1,\omega(G)\}$.
  
  \medskip

  \noindent \emph{Case 1.} There is a longest path that 1-touches~$K$.

  \smallskip

  If $k = \omega(G)-1$ then, 
  as~$(a)$ and~$(b)$ do not hold, for every vertex in $K$, 
  there exists a longest path that 1-touches $K$ at that vertex. 
  Also, as~$(c)$ is false, we may assume that each such path crosses~$K$, 
  a contradiction to Lemma~\ref{extremeseparated}, because $k\geq 3$.
  So $k = \omega(G)$.
  As~$(c)$ does not hold, and we are assuming that there is a longest path
  that 1-touches $K$, there exists a longest path~$P$
  that 1-touches~$K$ at a vertex~$v$ and crosses~$K$.  
  As~$(a)$ does not apply, for every $(k-2)$-clique in~$K$ containing $v$, 
  there exists a longest path that does not contain any vertex in that clique.
  If any of these longest paths 1-touches $K$ at a vertex $w$, then, as~$(c)$ does not hold, 
  there is a longest path that crosses $K$ at $w$,
  contradicting Lemma~\ref{extremeseparated}.
  Hence, for every edge in $K$ not incident to $v$, there exists a longest 
  path that 2-touches~$K$ at the ends of that edge. 
  Again, by Lemma~\ref{extremeseparated}, as $P$ crosses $K$ at $v$,
  none of these paths is extreme-separated by~$K$. 
  As~$k \geq  4$, there are at least three such paths.  
  By Lemma~\ref{extreme-join-clique}, one of these edges, call it~$e$,
  is such that no longest path crosses~$K$ and 2-touches~$K$ at the ends of~$e$.
  Moreover, we know that there is a longest path that 2-touches~$K$ at the ends of~$e$ and,
  by the previous discussion, that path is fenced by~$K$. So~$(d)$ holds.

  \medskip
  
  \noindent \emph{Case 2.} Every longest path touches~$K$ at least twice.
  
  \smallskip

  If $k = \omega(G)-1$, then any subset of vertices of $K$ of size $\omega(G)-2$ is a longest path transversal, 
  and $(a)$ would hold.   Thus we may also assume that $k = \omega(G)$.
  As $\lpt(G)>\omega(G)-2=k-2$, for every edge of~$K$, there exists a longest path that 2-touches~$K$ at the ends of that edge.
  As $k\geq 4$, there are at least six nonequivalent longest paths that 2-touch~$K$.
  Suppose by contradiction that $(d)$ does not hold. 
  Hence, we may assume that these six paths cross $K$.
  By Lemma~\ref{extreme-join-clique}, four of 
  these longest paths are extreme-separated by~$K$.  
  As at least two of the corresponding four edges of~$K$ are disjoint, 
  by Lemma~\ref{extremeseparated}, we have a contradiction.
\end{proof}


\medskip
  
We can finally prove our main result.

\begin{teorema}\label{lptchordal}
  For every connected chordal graph~$G$,~$\lpt(G)\leq \max \{1, \omega(G)-2\}.$
\end{teorema}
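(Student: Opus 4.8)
My plan is to reduce the theorem to a single application of Lemma~\ref{fenceds} at a carefully chosen maximal clique. By Proposition~\ref{clique-tree}, I fix a clique tree $(T,\calV)$ of $G$, so every bag $V_t$ is a maximal clique and, by Proposition~\ref{tw=omega-1}, $\omega(G)=\tw(G)+1$. Since $G$ is connected, any two longest paths share a vertex; and since a longest path is a connected subgraph, the set $T_P$ of nodes $t$ with $V_t\cap P\neq\emptyset$ forms a subtree of $T$. Thus the family $\{T_P : P \text{ a longest path}\}$ consists of pairwise intersecting subtrees, and by the Helly property of subtrees of a tree they share a common node. Hence there is a node whose bag meets every longest path; call such a node \emph{good}. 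Using that each $T_P$ is connected, the good nodes induce a subtree $T^{*}$ of $T$: if $t,t''$ are good, then for every longest path $P$ the connected subtree $T_P$ contains both $t$ and $t''$, hence every node on the $t$--$t''$ path, so that node is good as well.

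Next I would run a descent inside $T^{*}$ toward a good node $K=V_{t^{*}}$ that no longest path \emph{fences}. Starting at any good node $t$, if some longest path $P$ is fenced by $V_t$ it lives in a single branch $\Branch_t(t')$ by Proposition~\ref{Branch_t(P)=Branch_t(t')}, and I move from $t$ to this neighbour $t'$. Two facts drive the descent. First, the move preserves goodness and stays in $T^{*}$: any longest path $R$ meets $P$, and---using that the separator $V_t\cap V_{t'}$ lies in $V_{t'}$ and that the vertices of $P$ outside $V_t$ sit strictly on the $t'$-side (Propositions~\ref{sep-tt'} and~\ref{Vt capP'subseteqVt'})---the connected subtree $T_R$ is forced through $t'$, so $R$ meets $V_{t'}$. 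Second, the move strictly decreases the potential equal to the number of nodes of the branch hosting the witnessing fenced path: from $t'$ the path $P$ is fenced into a sub-branch $\Branch_{t'}(t'')\subsetneq\Branch_t(t')$, because the only alternative, $t''=t$, would confine $P$ entirely to $V_t\cup V_{t'}$.

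Since this potential strictly decreases, the descent terminates at a good node $t^{*}$ that no longest path fences. I then apply Lemma~\ref{fenceds} to the maximal clique $K=V_{t^{*}}$. Alternative~(b) is impossible because $t^{*}$ is good, while~(c) and~(d) both assert the existence of a longest path fenced by $K$, which has been excluded. Therefore alternative~(a) holds, that is, $\lpt(G)\le\max\{1,\omega(G)-2\}$, as desired. For $\omega(G)\le 3$ the bound reads $1$ and already follows from the series--parallel case invoked inside Lemma~\ref{fenceds}, so one may assume $\omega(G)\ge 4$ throughout.

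The main obstacle is the borderline case $t''=t$ above: a longest path fenced by $V_t$ into $\Branch_t(t')$ and \emph{simultaneously} fenced by $V_{t'}$ into $\Branch_{t'}(t)$ is trapped inside $V_t\cup V_{t'}$, and such a path would let the descent cycle across the single edge $tt'$. Ruling this out---equivalently, showing no tree edge is \emph{fenced-heavy} in both directions---is an exchange problem across the shared separator clique $\sigma=V_t\cap V_{t'}$: one splices the two confined longest paths along $\sigma$ to produce a path longer than $L$, in the spirit of the length-counting contradictions in Lemmas~\ref{extreme-join-clique} and~\ref{extremeseparated}. I expect this confinement lemma, together with the bookkeeping required when a fenced path touches $K$ at a private vertex of $V_t\setminus V_{t'}$, to be the delicate heart of the argument; everything else is the Helly selection followed by a monotone descent.
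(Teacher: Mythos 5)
Your proposal has a genuine gap, and it sits exactly where you yourself flag it: the descent's termination. Your potential argument claims that after moving from $t$ to $t'$ the \emph{same} path $P$ is fenced by $V_{t'}$ into a strictly smaller branch, with the only alternative being $t''=t$, which would trap $P$ inside $V_t\cup V_{t'}$. But nothing forces the witness at $t'$ to be $P$: the descent at $t'$ picks \emph{some} longest path fenced by $V_{t'}$, and this may be a different path $Q$ fenced into $\Branch_{t'}(t)$, i.e.\ back toward $t$, with neither $P$ nor $Q$ confined to $V_t\cup V_{t'}$. In that configuration your potential (the size of the branch hosting the current witness) does not decrease --- the process bounces between $t$ and $t'$ forever. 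So the ``borderline case'' is not a single path trapped in two bags, which a splice across $\sigma=V_t\cap V_{t'}$ might quickly kill; it is two distinct longest paths fenced in opposite directions across one tree edge. This is precisely the terminal configuration the paper reaches (via a maximal directed path in an auxiliary digraph $D$ with arcs $tt'$ and $t't$, which is just a cleaner form of your descent), and resolving it is not a footnote: it occupies roughly two-thirds of the paper's proof. The paper shows $P\cap Q\subseteq V_t\cap V_{t'}\neq\emptyset$ (using maximality of clique-tree bags to rule $P$ out of $V_t\setminus V_{t'}$), then uses the transversal hypothesis against a longest path missing $V_t\cap V_{t'}$ to force $|V_t\cap V_{t'}|=\omega(G)-1$, and finally runs a case analysis on whether $P$ and $Q$ 1-touch or 2-touch their bags, deriving contradictions from the ``Moreover'' clauses of Lemma~\ref{fenceds}(c) and (d) via explicit path exchanges through the private vertices $\{u\}=V_t\setminus V_{t'}$ and $\{w\}=V_{t'}\setminus V_t$. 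None of this appears in your proposal; you defer it as an expected ``confinement lemma,'' so the delicate heart of the theorem is left unproven.

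Two smaller points. First, the Helly selection of a ``good'' node, while correct (it re-proves $\lpt(G)\leq\omega(G)$), buys you nothing: the paper never needs it, because under the contradiction hypothesis conditions (b), (c), (d) of Lemma~\ref{fenceds} already supply a fenced witness at \emph{every} bag, which is all the digraph argument requires. Second, your goodness-preservation step leans on Proposition~\ref{Vt capP'subseteqVt'}, which is stated only for paths that 1-touch $V_t$; a fenced witness under your hypotheses may 2-touch $V_t$ (case (d)), and for such paths the inclusion $V_t\cap P\subseteq V_{t'}$ needs a separate (if routine) argument, e.g.\ when a touching vertex is an extreme of $P$ adjacent on $P$ only to the other touching vertex. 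Your final deduction --- that at a node $t^{*}$ where no longest path is fenced and every longest path touches $V_{t^{*}}$, alternative (a) of Lemma~\ref{fenceds} must hold --- is valid; the problem is that you have not shown such a node exists, and establishing that is exactly the content of the paper's proof.
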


\begin{proof}
  Suppose by contradiction that~$\lpt(G) > \max \{1, \omega(G)-2\}$.
  Then, for every clique~$K$ in~$G$, there exists a longest path 
  fenced by~$K$ as in~$(b)$,~$(c)$, or~$(d)$ of Lemma~\ref{fenceds}.
  We create a directed graph $D$, 
  that admits antiparalell arcs, as follows.  
  Let~$(T, \calV)$ be a tree decomposition of~$G$. 
  The nodes of $D$ are exactly the nodes of $T$.
  Let $t$ be a node in $T$ and let $P$ be a longest path in $G$
  fenced by $V_t$ that satisfies one 
  of the conditions $(b)$, $(c)$, or $(d)$ of Lemma~\ref{fenceds}.
  By Proposition~\ref{Branch_t(P)=Branch_t(t')}, 
  there exists a neighbor $t'$ of $t$ in $T$ such that $\Branch_t(P)=\Branch_t(t')$.
  Hence $tt'$ is an arc in $D$.
  Thus every node of~$D$ is the tail of at least one arc in $D$.
  
  Let $tt'$ be the last arc of a maximal directed path in~$D$.
  As~$T$ is a tree,~$t't$ is also an arc in~$D$, which implies 
  that there exist two longest paths~$P$ and~$Q$ in $G$ such that
  $\Branch_t(P)=\Branch_t(t')$ and $\Branch_{t'}(Q)=\Branch_{t'}(t)$,
  where~$P$ is fenced by~$V_t$ and~$Q$ is fenced by~$V_{t'}$, and
  both satisfy one of the conditions~$(b)$, $(c)$, or~$(d)$ of 
  Lemma~\ref{fenceds}.
  
  From now on, we assume that~$(T, \calV)$ is a 
  tree decomposition of~$G$ as in Proposition~\ref{clique-tree}.
  Note that the bags containing vertices of $P$ are only 
  in $\Branch_t(t') \cup \{t\}$, and the bags containing
  vertices of $Q$ are only in $\Branch_{t'}(t) \cup \{t'\}$.
  As $\Branch_t(t')$ and $\Branch_{t'}(t)$ are disjoint, 
  ${P \cap Q \subseteq V_{t} \cup V_{t'}}$.  
  Let $u$ be a vertex such that $u \in V_{t}\setminus V_{t'}$.
  Suppose for a moment that $P$ contains~$u$ and 
  let $v$ be a neighbor of $u$ in $P$.
  By Proposition~\ref{sep-uv}, vertex $v$ cannot be in $\Branch_{t}(t')$, 
  so~$v \in V_{t}$.  This implies that $uv$ is an edge in $V_{t}$
  and, as $V_{t}$ is a clique, $P$ contains all vertices of~$V_{t}$,
  contradicting the fact that $P$ is fenced.
  So $P$ does not contain vertices in $V_{t} \setminus V_{t'}$.
  By a similar reasoning, $Q$ does not contain vertices in $V_{t'} \setminus V_t$.
  Thus $P \cap Q \subseteq V_{t}\cap V_{t'}$.  As $G$ is connected,
  \begin{equation}\label{PQVtVt'}
    P \cap Q \ = \ P \cap Q \cap V_{t}\cap V_{t'} \ \neq \ \emptyset.
  \end{equation}
  This implies that $P \cap V_{t}\neq \emptyset$
  and $Q \cap V_{t'}\neq \emptyset$,
  therefore none of $P$ and $Q$ satisfies condition~$(b)$ of Lemma~\ref{fenceds}.
  
  Suppose for a moment that $|V_{t}\cap V_{t'}| \leq \omega(G)-2$.
  Then, as $\lpt(G) > \max \{1, \omega(G)-2\}$, there exists a longest path $R$
  that does not contain any vertex of $V_{t}\cap V_{t'}$.
  As $G$ is connected, $R$ intersects~$P$.
  As $P$ does not contain vertices in $V_{t} \setminus V_{t'}$ and 
  $R$ does not contain vertices in $V_{t}\cap V_{t'}$, we have that $P \cap R \nsubseteq V_t$. 
  As the bags containing vertices of $P$ are only 
  in $\Branch_t(t') \cup \{t\}$, $R$ has a vertex in a bag of $\Branch_t(t')$.
  A similar reasoning, with $Q$ instead of $P$, shows that $R$ also
  has a vertex in a bag of $\Branch_{t'}(t)$.
  This is a contradiction to Proposition \ref{sep-tt'}, as $R$ contains no vertex 
  in~$V_{t}\cap V_{t'}$. Hence $|V_{t}\cap V_{t'}| \geq \omega(G)-1$.
  Moreover, as both $V_{t}$ and $V_{t'}$ are maximal (and different),
  we conclude that $|V_{t}|=|V_{t'}|=\omega(G)$ and $|V_{t}\cap V_{t'}| = \omega(G)-1$.
  
  Remember that none of $P$ and $Q$ satisfies condition~$(b)$ of Lemma~\ref{fenceds}.
  So $P$ touches~$V_t$ at least once and~$Q$ touches~$V_{t'}$ at least once.
  First suppose that~$P$ 1-touches~$V_t$ at a vertex~$v$.
  That is, $P$ satisfies condition $(c)$ of Lemma~\ref{fenceds}.
  By (\ref{PQVtVt'}),
  $$\emptyset \ \neq \ P \cap Q \ = \ P \cap V_{t} \cap Q \cap V_{t'} 
              \ = \ \{v\} \cap V_{t'} \cap Q \ = \ \{v\} \cap Q.$$
  So $P \cap Q = \{v\}$.
  That is,~$P$ and~$Q$ only intersect each other at~$v$, 
  which implies that~$v$ divides both longest paths in half.
  Let $P'$ and $P''$ be the two $v$-tails of $P$,
  and let $Q'$ and $Q''$ be the two $v$-tails of $Q$.
  Let $\{u\} = V_{t} \setminus V_{t'}$ and $\{w\} = V_{t'} \setminus V_t$.
  As $P$ 1-touches $V_t$, we may assume without loss of generality that $w \notin P'$.
  Suppose that $Q$ also 1-touches $V_{t'}$.
  Then, we may assume without loss of generality that $u \notin Q'$.
  But then $P' \cdot Q'$ is a longest path that 1-touches $V_t$ at $v$ and crosses~$V_t$.
  As $P$ exists, condition~$(c)$ of Lemma~\ref{fenceds} is not satisfied, a contradiction.
  Now suppose that $Q$ 2-touches $V_{t'}$ at $\{v,x\}$.  Note that $Q_v = Q'$ or $Q_v = Q''$.
  If $u \notin Q_v$ then $P' \cdot Q_v$ is a longest path that 1-touches $V_t$ at $v$ and 
  crosses~$V_t$, again a contradiction.  Hence, $u \in Q_v$. 
  But then $P' \cdot \tilde{Q} \cdot Q_x$ is a longest path that 2-touches $V_{t'}$ and crosses $V_{t'}$.
  As $Q$ exists, condition~$(d)$ of Lemma~\ref{fenceds} is not satisfied, again a contradiction.
  Therefore $P$ touches~$V_t$ at least twice.

  By a similar reasoning, we may conclude that~$Q$ touches $V_{t'}$ at least twice.
  So both~$P$ and~$Q$ must satisfy condition~$(d)$ of Lemma~\ref{fenceds}. 
  Suppose that~$P$ 2-touches~$V_t$ at the ends of edge~$xy$.
  First suppose that $Q$ also 2-touches $V_{t'}$ at the same vertices.
  Then, $|P_x|=|Q_x|$, $|P_y|=|Q_y|$, and $|\tilde{P}|=|\tilde{Q}|$.
  If $u \notin Q_x$ then $P_y \cdot \tilde{P} \cdot Q_x$ is a longest path 
  that 2-touches $V_t$ and crosses $V_t$.
  As $P$ exists, condition $(d)$ of Lemma \ref{fenceds} is not satisfied, a contradiction.
  Hence, $u \in Q_x$ and 
  $u \notin \tilde{Q}$. Then $P_x \cdot \tilde{Q} \cdot P_y$ is a longest path 
  that 2-touches $V_t$ and crosses $V_t$, again a contradiction.
  Hence, we may assume that $Q$ 2-touches $V_{t'}$ at the ends of an edge $yz$ with $z \neq x$. 
  Then $P_x \cdot xz \cdot \tilde{Q} \cdot P_y$
  and $Q_z \cdot zx \cdot \tilde{P} \cdot Q_y$ are paths,
  yielding the final contradiction. 
\end{proof}

The previous theorem implies the following results.

\begin{corolario}\label{corol-chordal1}
  If~$G$ is a tree or a 2-tree, then~$\lpt(G)=1.$
\end{corolario}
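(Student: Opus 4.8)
The plan is to obtain both cases as immediate consequences of Theorem~\ref{lptchordal}, since the only nontrivial work has already been done. The first step is to verify that trees and 2-trees fall within the scope of that theorem, i.e.\ that they are connected chordal graphs. A tree is connected by definition and contains no cycle, so vacuously every induced cycle has length three; hence it is chordal. A 2-tree is connected by its inductive construction, and it is a standard fact that every $k$-tree is chordal (one can see this directly: at each step a new vertex is joined to an existing clique, which cannot create a chordless cycle of length greater than three).

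The second step is to compute, or at least bound, $\omega(G)$ in each case. A tree contains no triangle, so $\omega(G)\leq 2$. A 2-tree has treewidth exactly $2$, so by Proposition~\ref{tw=omega-1} we obtain $\omega(G)=\tw(G)+1=3$. In either case $\omega(G)\leq 3$, and therefore $\max\{1,\omega(G)-2\}=1$.

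Applying Theorem~\ref{lptchordal} then yields $\lpt(G)\leq 1$. For the matching lower bound, I would note that any graph with at least one vertex has a longest path containing a vertex, so the empty set of vertices cannot be a transversal of longest paths; hence $\lpt(G)\geq 1$. Combining the two inequalities gives $\lpt(G)=1$, as claimed.

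There is no real obstacle here: the corollary is essentially a restatement of the theorem specialized to the two smallest nontrivial values of $\omega$, and the whole difficulty was absorbed into the proof of Theorem~\ref{lptchordal}. The only points requiring care are the chordality of 2-trees and the identification of their clique number through treewidth, both of which are standard. It is worth remarking that this recovers, as a special case, the result of De Rezende \etal~\cite{deRezende13} that Gallai's question has a positive answer for 2-trees.
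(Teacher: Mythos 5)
Your proof is correct and matches the paper's (implicit) argument: the corollary is stated there as an immediate consequence of Theorem~\ref{lptchordal}, exactly as you derive it, since trees and 2-trees are connected chordal graphs with $\omega(G)\leq 3$, giving $\lpt(G)\leq\max\{1,\omega(G)-2\}=1$, and the trivial lower bound $\lpt(G)\geq 1$ finishes it. Your verification of chordality and of $\omega$ via Proposition~\ref{tw=omega-1} is fine (and the bound $\omega(G)\leq 3$ holds regardless of which base-case convention one uses for 2-trees).
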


\begin{corolario}\label{corol-chordal2}
  If~$G$ is a 3-tree or a connected chordal planar graph, then~$\lpt(G)\leq 2$.
\end{corolario}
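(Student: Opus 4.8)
The plan is to obtain both bounds directly from Theorem~\ref{lptchordal} by controlling the clique number $\omega(G)$ in each case, using that both classes under consideration are chordal and connected, so that the hypotheses of the theorem are met.

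First I would handle the class of $3$-trees. A $k$-tree is a chordal graph, and it is a standard fact that its maximum clique has exactly $k+1$ vertices; one way to see this is that a $k$-tree has $\tw(G)=k$, so Proposition~\ref{tw=omega-1} gives $\omega(G)=\tw(G)+1=k+1$. Hence a $3$-tree $G$ satisfies $\omega(G)=4$, and Theorem~\ref{lptchordal} yields $\lpt(G)\leq \max\{1,\omega(G)-2\}=\max\{1,2\}=2$.

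For a connected chordal planar graph $G$, the key observation is that a planar graph cannot contain $K_5$ as a subgraph, since $K_5$ is not planar and planarity is preserved under taking subgraphs; therefore $G$ has no clique on five vertices, that is, $\omega(G)\leq 4$. As $G$ is chordal and connected, Theorem~\ref{lptchordal} applies and gives $\lpt(G)\leq \max\{1,\omega(G)-2\}\leq \max\{1,2\}=2$.

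I do not expect a genuine obstacle here: both statements are immediate specializations of the main theorem, and the only points needing a (routine) justification are the two clique-number bounds, namely that a $3$-tree has clique number $4$ and that planarity forbids a $K_5$ subgraph and hence caps $\omega(G)$ at~$4$. The mildly subtle point worth stating explicitly is that the bound for planar graphs is an inequality ($\omega(G)\leq 4$) rather than an equality, but this suffices because $\max\{1,\omega(G)-2\}$ is monotone in $\omega(G)$.
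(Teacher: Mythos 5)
Your proposal is correct and matches the paper's (implicit) argument: the paper states this corollary without proof as an immediate consequence of Theorem~\ref{lptchordal}, exactly via the clique-number bounds you give, namely $\omega(G)=4$ for a $3$-tree (which is connected and chordal by construction) and $\omega(G)\leq 4$ for a planar graph since $K_5$ is not planar. Your routine justifications of these two facts, including the remark that an inequality on $\omega(G)$ suffices by monotonicity of $\max\{1,\omega(G)-2\}$, are all that is needed.
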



\section{Bipartite permutation graphs}\label{section:bpg}

Let $\mathcal{L}_1$ and
$\mathcal{L}_2$ be two parallel lines in the plane.
Consider two sets $X=\{x_1,x_2,\ldots,x_n\}$ and $Y=\{y_1,y_2,\ldots,y_m\}$
of segments that joins a point in $\mathcal{L}_1$ with a point in
$\mathcal{L}_2$, such that the extremes of 
every two elements in $X \cup Y$ are pairwise disjoint.
Moreover, every two elements in~$X$ do not intersect each other and
every two elements in~$Y$ do not intersect each other.

Let~$\sigma$ be the function that maps the extreme in~$\mathcal{L}_1$ of a segment
to the other extreme.
That is, if~$r_i$ is the extreme in~$\mathcal{L}_1$ of a segment
in $X$, then the other extreme is~$\sigma(r_i)$;
and if $s_i$ is the extreme in~$\mathcal{L}_1$ of a segment
in~$Y$, then the other extreme is $\sigma(s_i)$.
Consider an associated bipartite graph~$G=(X, Y, E)$
where~$xy \in E$ if and only if the segments~$x$ and~$y$ intersect each other.
We call the tuple ($\mathcal{L}_1, \mathcal{L}_2, X \cup Y, \sigma)$
a \emph{line representation} of~$G$ and a
graph is called a  
\emph{bipartite permutation graph}
if it has a line representation. (See Figure \ref{fig:bpg-first-example}.)

\begin{figure}[ht]
\centering

\subfigure[ ]{
\begin{tikzpicture}[thick,
  every node/.style={draw,circle,scale=0.9},
  psnode/.style={fill=mygreen},
  qsnode/.style={fill=myblue},
  style1/.style={ellipse,draw,inner sep=-1pt,text width=1.5cm},
  style2/.style={ellipse,draw,inner sep=0pt,text width=4cm,text heigth=16cm,ellipse ratio=2},
]

\begin{scope}[start chain=going right,node distance=7mm, xshift=3cm, yshift=2.0cm]
\node[psnode,on chain] (x1) [label=above: $x_1$] {};
\node[psnode,on chain] (x2) [label=above: $x_2$] {};
\node[psnode,on chain] (x3) [label=above: $x_3$] {};
\end{scope}

\begin{scope}[xshift=3cm, yshift=0.5cm,start chain=going right,node distance=7mm]
\node[qsnode,on chain] (y1) [label=below: $y_1$] {};
\node[qsnode,on chain] (y2) [label=below: $y_2$] {};
\node[qsnode,on chain] (y3) [label=below: $y_3$] {};
\node[qsnode,on chain] (y4) [label=below: $y_4$] {};
\end{scope}

\draw (x1) -- (y1);
\draw (x1) -- (y2);
\draw (x1) -- (y3);
\draw (x2) -- (y1);
\draw (x2) -- (y2);
\draw (x2) -- (y3);
\draw (x2) -- (y4);
\draw (x3) -- (y3);
\draw (x3) -- (y4);
\end{tikzpicture}
}\hspace{15mm}
\subfigure[ ]{
\begin{tikzpicture}[dot/.style={circle,inner sep=1pt,fill,name=#1},
  extended line/.style={shorten >=-#1,shorten <=-#1},
  extended line/.default=1cm]

\node [dot=r1, label=$r_1$] at (1,3) {};
\node [dot=r2, label=$r_2$] at (2,3) {};
\node [dot=s1, label=$s_1$] at (3,3) {};
\node [dot=s2, label=$s_2$] at (4,3) {};
\node [dot=r3, label=$r_3$] at (5,3) {};
\node [dot=s3, label=$s_3$] at (6,3) {};
\node [dot=s4, label=$s_4$] at (7,3) {};

\node [dot=ys1, label=below:$\sigma(s_1)$] at (1,1) {};
\node [dot=ys2, label=below:$\sigma(s_2)$] at (2,1) {};
\node [dot=ys3, label=below:$\sigma(s_3)$] at (3,1) {};
\node [dot=yr1, label=below:$\sigma(r_1)$] at (4,1) {};
\node [dot=ys4, label=below:$\sigma(s_4)$] at (5,1) {};
\node [dot=yr2, label=below:$\sigma(r_2)$] at (6,1) {};
\node [dot=yr3, label=below:$\sigma(r_3)$] at (7,1) {};

\draw [extended line=0.5cm] (r1) -- (s4);
\draw [extended line=0.5cm] (ys1) -- +($(s4)-(r1)$);

\draw [color=mygreen] (r1) -- (yr1);
\draw [color=mygreen] (r2) -- (yr2);
\draw [color=mygreen] (r3) -- (yr3);

\draw [color=myblue]  (s1) -- (ys1);
\draw [color=myblue] (s2) -- (ys2);
\draw [color=myblue] (s3) -- (ys3);
\draw [color=myblue]  (s4) -- (ys4);

\end{tikzpicture}
}

\caption{(a) A bipartite permutation graph. (b) Its corresponding line representation.}
\label{fig:bpg-first-example}
\end{figure}

In what follows, we assume that $G=(X,Y,E)$ is a connected bipartite permutation graph, 
with a line representation $(\mathcal{L}_1, \mathcal{L}_2, X \cup Y, \sigma)$, where 
$X=\{x_1,x_2,\ldots,x_n\}$ and $Y=\{y_1,y_2,\ldots,y_m\}$.  We also assume that $r_i$ is 
the extreme of $x_i$ in $\mathcal{L}_1$ and that $s_i$ is the extreme of $y_i$ in $\mathcal{L}_1$.
Moreover, we consider that the sets $X$ and $Y$ are ordered by its extremes in $\mathcal{L}_1$.
That is, $r_i<r_j$ if and only if~$i<j$, for every $x_i$ and $x_j$ in $X$;
and $s_i<s_j$ if and only if $i<j$, for every $y_i$ and $y_j$ in $Y$.
For two elements $x_i$ and $x_j$ in $X$ with $i<j$, we also say that $x_i<x_j$,
and we do the same for $Y$.
Next we show some basic properties of bipartite permutation graphs.
  
\begin{prop}\label{sx1<sx2}
  $\sigma(r_1) < \sigma(r_2) < \cdots < \sigma(r_n)$ and 
  $\sigma(s_1) < \sigma(s_2) < \cdots < \sigma(s_m)$.
\end{prop}  
\begin{proof}
  Suppose by contradiction that there exist $i$ and $j$ 
  such that $r_i < r_j$ and $\sigma(r_i)>\sigma(r_j)$.
  Hence~$x_i$ and~$x_j$ would be adjacent, a contradiction
  to the fact that $G$ is bipartite. A similar proof 
  applies for $Y$ instead of~$X$.
\end{proof}

The next proposition says that the neighborhood of a vertex is 
either completely to its left or completely to its right.

\begin{prop}\label{sameside}
  If $x_i, x_j \in N(y_k)$ for some $i,j,k$, then $(r_i-s_k)(r_j-s_k) > 0$ .
  In a similar way, if~$y_i, y_j \in N(x_k)$ for some $i,j,k$, 
  then $(s_i-r_k)(s_j-r_k) > 0$.  
\end{prop}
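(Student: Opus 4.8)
The plan is to reduce everything to the elementary geometric fact that two straight segments joining the parallel lines $\mathcal{L}_1$ and $\mathcal{L}_2$ cross precisely when the orders of their endpoints are reversed from $\mathcal{L}_1$ to $\mathcal{L}_2$. Writing a segment by its $\mathcal{L}_1$-extreme $p$ and its $\mathcal{L}_2$-extreme $\sigma(p)$, two such segments with $\mathcal{L}_1$-extremes $p, p'$ intersect if and only if $(p-p')(\sigma(p)-\sigma(p')) < 0$. Since in a line representation all extremes are pairwise disjoint, ``intersect'' means a proper crossing and this product is never zero. I only need one direction of this criterion: if $x_i$ and $y_k$ are adjacent, then their segments cross, so $(r_i - s_k)(\sigma(r_i) - \sigma(s_k)) < 0$.

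First I would prove the statement for $X$. Suppose, for a contradiction, that $x_i, x_j \in N(y_k)$ but $(r_i - s_k)(r_j - s_k) < 0$; since the extremes are distinct the product cannot be zero, so $r_i$ and $r_j$ lie on opposite sides of $s_k$, and after possibly interchanging $i$ and $j$ we may assume $r_i < s_k < r_j$. Applying the crossing criterion to the adjacent pairs $x_i, y_k$ and $x_j, y_k$ gives $(r_i - s_k)(\sigma(r_i) - \sigma(s_k)) < 0$ and $(r_j - s_k)(\sigma(r_j) - \sigma(s_k)) < 0$. Because $r_i < s_k$ the first inequality forces $\sigma(r_i) > \sigma(s_k)$, and because $r_j > s_k$ the second forces $\sigma(r_j) < \sigma(s_k)$. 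Chaining these yields $\sigma(r_j) < \sigma(s_k) < \sigma(r_i)$, hence $\sigma(r_j) < \sigma(r_i)$. But $r_i < r_j$ together with Proposition~\ref{sx1<sx2} gives $\sigma(r_i) < \sigma(r_j)$, a contradiction. Therefore $r_i$ and $r_j$ lie on the same side of $s_k$, that is, $(r_i - s_k)(r_j - s_k) > 0$.

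The second assertion is entirely symmetric: interchanging the roles of $X$ and $Y$ (and correspondingly of the $r$- and $s$-extremes), the same two-line sign analysis, now invoking the monotonicity $\sigma(s_1) < \cdots < \sigma(s_m)$ from Proposition~\ref{sx1<sx2}, shows that if $y_i, y_j \in N(x_k)$ then $(s_i - r_k)(s_j - r_k) > 0$. The only genuine content is the crossing criterion combined with the monotonicity of Proposition~\ref{sx1<sx2}; once these are in place the argument is purely a sign computation. The main obstacle, such as it is, lies in stating the crossing criterion precisely and observing that the pairwise disjointness of all extremes eliminates boundary and equality cases, so that adjacency in $G$ is exactly a proper crossing of the corresponding segments.
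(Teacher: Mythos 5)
Your proof is correct and follows essentially the same route as the paper: assume without loss of generality that $r_i < s_k < r_j$, use the crossing criterion for adjacency to derive $\sigma(r_j) < \sigma(s_k) < \sigma(r_i)$, and contradict the monotonicity of Proposition~\ref{sx1<sx2}. You merely make explicit the crossing criterion and the exclusion of equality cases, which the paper leaves implicit.
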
  
\begin{proof}
  Suppose by contradiction that there exist $i,j,k$ 
  such that $r_i < s_k < r_j$ and $r_i, r_j \in N(s_k)$.
  Then $\sigma(r_j) < \sigma(s_k) < \sigma(r_i)$,
  a contradiction to Proposition \ref{sx1<sx2}. 
  A similar proof applies when considering $Y$ instead of $X$.
\end{proof}

\begin{prop}\label{middle}
  If~$x_i$ is adjacent to~$y_{j_1}$ and~$y_{j_2}$, 
  with~$j_1 \leq j_2$, then~$x_i$ is adjacent to 
  every~$y_j$ with~$j_1 \leq j \leq j_2$.
  If~$x_{i_1}$ and $x_{i_2}$, with $i_1 \leq i_2$,
  are adjacent to~$y_j$, then~$y_j$ is 
  adjacent to every~$x_i$ with~$i_1 \leq i \leq i_2$.
\end{prop}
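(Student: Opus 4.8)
The plan is to convert adjacency in $G$ into the geometric statement that the two corresponding segments \emph{cross}, and then combine the order-preservation of $\sigma$ (Proposition~\ref{sx1<sx2}) with the one-sidedness of neighborhoods (Proposition~\ref{sameside}). First I would record the crossing characterization: the segment $x_i$ joins $r_i$ on $\mathcal{L}_1$ to $\sigma(r_i)$ on $\mathcal{L}_2$, and $y_j$ joins $s_j$ to $\sigma(s_j)$, and two such segments meet exactly when their endpoints appear in opposite orders on the two lines. Since all top endpoints are distinct and all bottom endpoints are distinct, this reads
\[
x_iy_j \in E \quad\Longleftrightarrow\quad (r_i - s_j)\,(\sigma(r_i) - \sigma(s_j)) < 0 ,
\]
which is immediate from the definition of the line representation.

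For the first statement, suppose $x_i$ is adjacent to $y_{j_1}$ and $y_{j_2}$ with $j_1 \le j_2$, and fix $j$ with $j_1 \le j \le j_2$. Applying Proposition~\ref{sameside} to $y_{j_1}, y_{j_2} \in N(x_i)$ shows that $s_{j_1}$ and $s_{j_2}$ lie on the same side of $r_i$ on $\mathcal{L}_1$. By the ordering convention $s_{j_1} \le s_j \le s_{j_2}$, so $s_j$ lies on that same side, and $r_i - s_j$ has the same sign as $r_i - s_{j_1}$. Meanwhile Proposition~\ref{sx1<sx2} gives $\sigma(s_{j_1}) \le \sigma(s_j) \le \sigma(s_{j_2})$. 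If both $s_{j_1}, s_{j_2} < r_i$, then adjacency forces $\sigma(s_{j_1}), \sigma(s_{j_2}) > \sigma(r_i)$, whence $\sigma(s_j) \ge \sigma(s_{j_1}) > \sigma(r_i)$, so $r_i - s_j > 0$ and $\sigma(r_i) - \sigma(s_j) < 0$; symmetrically, if both $s_{j_1}, s_{j_2} > r_i$, then $\sigma(s_{j_1}), \sigma(s_{j_2}) < \sigma(r_i)$, whence $\sigma(s_j) \le \sigma(s_{j_2}) < \sigma(r_i)$, so $r_i - s_j < 0$ and $\sigma(r_i) - \sigma(s_j) > 0$. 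In either case $(r_i - s_j)(\sigma(r_i) - \sigma(s_j)) < 0$, so $x_i y_j \in E$.

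The second statement is entirely symmetric, obtained from the first by interchanging the roles of $X$ and $Y$ (equivalently, of the $r_i$'s and the $s_i$'s), using the $Y$-version of Proposition~\ref{sameside} and the fact that $\sigma$ also preserves the order on the $r_i$'s.

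I anticipate that the only delicate point is the sign bookkeeping: one must check that ``same side of $r_i$'' together with ``monotone image under $\sigma$'' pins down the sign of $\sigma(r_i) - \sigma(s_j)$ at the intermediate index. Splitting into the two cases $s_{j_1}, s_{j_2} < r_i$ and $s_{j_1}, s_{j_2} > r_i$, as above, makes the squeeze $\sigma(s_{j_1}) \le \sigma(s_j) \le \sigma(s_{j_2})$ do the work, so there is no genuine obstacle beyond careful casework.
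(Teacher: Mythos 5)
Your proof is correct and takes essentially the same route as the paper's: both reduce adjacency to the segment-crossing condition, use Proposition~\ref{sameside} to place $s_{j_1}$, $s_j$, $s_{j_2}$ on one side of $r_i$, and then use the monotonicity of $\sigma$ (Proposition~\ref{sx1<sx2}) to squeeze $\sigma(s_j)$ past $\sigma(r_i)$, handling the second statement by symmetry. The only difference is cosmetic: you spell out the crossing characterization as an explicit equivalence and write both sign cases in full, where the paper proves one case and dismisses the other with a similar argument.
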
  
\begin{proof}
  The two statements are symmetric, so we only analyze the first one.
  The case in which ${j \in \{j_1, j_2\}}$ is clear, so we may assume that~$j_1 < j < j_2$.
  By Proposition~\ref{sameside}, either ${x_i < y_{j_1} < y_j < y_{j_2}}$ or~$y_{j_1} < y_j < y_{j_2} < x_i$.
  Consider the first case.	
  As~$x_iy_{j_2} \in E$, we have that $\sigma(s_{j_2})< \sigma(r_i)$. 
  Also, $\sigma(s_j)< \sigma(s_{j_2})$ by Proposition \ref{sx1<sx2}.
  So~$\sigma(s_j) < \sigma (r_i)$, implying that~$x_i$ and~$y_j$ are adjacent.
  We can apply a similar argument to deduce that~$y_j$ 
  and~$x_i$ are also adjacent in the second case.
\end{proof}

The following two properties are very important,
as they will be used repeatedly throughout the next proofs.

\begin{prop}\label{cruzados}
  If $x_{i_1}y_{j_2}$, $x_{i_2}y_{j_1} \in E$, with $i_1\leq i_2$ and $j_1 \leq j_2$,
  then $x_{i_1}y_{j_1}$, $x_{i_2}y_{j_2} \in E$. In other words,
  $\{x_{i_1},x_{i_2},y_{j_1},y_{j_2}\}$ induces a complete bipartite graph.
\end{prop}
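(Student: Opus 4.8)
The plan is to work directly from the geometry of the line representation. Two segments intersect exactly when their extremes occur in opposite orders on the two lines; that is, $x_iy_j\in E$ if and only if $r_i-s_j$ and $\sigma(r_i)-\sigma(s_j)$ have opposite signs (all extremes are distinct, so the inequalities are strict). I would first record the monotonicity supplied by Proposition~\ref{sx1<sx2}: since $i_1\le i_2$ and $j_1\le j_2$ we have $r_{i_1}\le r_{i_2}$, $\sigma(r_{i_1})\le\sigma(r_{i_2})$, $s_{j_1}\le s_{j_2}$, and $\sigma(s_{j_1})\le\sigma(s_{j_2})$. The goal is then purely a matter of signs: from the two crossing hypotheses for the pairs $(x_{i_1},y_{j_2})$ and $(x_{i_2},y_{j_1})$, deduce the crossing condition for the two ``parallel'' pairs $(x_{i_1},y_{j_1})$ and $(x_{i_2},y_{j_2})$.

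Next I would split each hypothesis into its two geometric possibilities. The edge $x_{i_1}y_{j_2}$ gives either (i)~$r_{i_1}<s_{j_2}$ together with $\sigma(r_{i_1})>\sigma(s_{j_2})$, or (i$'$)~the reversed pair of inequalities; symmetrically, the edge $x_{i_2}y_{j_1}$ gives (ii)~$r_{i_2}<s_{j_1}$, $\sigma(r_{i_2})>\sigma(s_{j_1})$ or its reverse (ii$'$). Here Proposition~\ref{sameside} is exactly the statement that a vertex's whole neighborhood lies on one side, which is the geometric content being exploited. A left--right reflection of the representation reverses every order simultaneously and interchanges (i) with (i$'$) and (ii) with (ii$'$), so I may assume (i) holds, halving the bookkeeping.

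The crux is then to eliminate the ``mixed'' combination and read off signs in the surviving one. Assuming (i), the combination with (ii$'$) is impossible: (ii$'$) forces $\sigma(r_{i_2})<\sigma(s_{j_1})\le\sigma(s_{j_2})$, hence $\sigma(r_{i_1})\le\sigma(r_{i_2})<\sigma(s_{j_2})$, contradicting $\sigma(r_{i_1})>\sigma(s_{j_2})$ from (i). Thus (i) forces (ii). From (i) and (ii) together, the monotonicity chains give $r_{i_1}\le r_{i_2}<s_{j_1}\le s_{j_2}$ and $\sigma(s_{j_1})\le\sigma(s_{j_2})<\sigma(r_{i_1})\le\sigma(r_{i_2})$. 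These immediately yield $r_{i_1}<s_{j_1}$ with $\sigma(r_{i_1})>\sigma(s_{j_1})$, so $x_{i_1}y_{j_1}\in E$, and $r_{i_2}<s_{j_2}$ with $\sigma(r_{i_2})>\sigma(s_{j_2})$, so $x_{i_2}y_{j_2}\in E$. Together with the two given edges, this shows $\{x_{i_1},x_{i_2},y_{j_1},y_{j_2}\}$ induces a complete bipartite graph.

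I expect the only real difficulty to be organizational rather than conceptual: keeping the four interleaved orderings straight and invoking Proposition~\ref{sx1<sx2} at the right moments to rule out the inconsistent sign pattern. Once the reflection symmetry is used to reduce to the single case (i), the argument collapses to the two short inequality chains above, so no lengthy computation remains.
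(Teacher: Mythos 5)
Correct, and essentially the paper's own argument: the paper likewise splits on whether $r_{i_1}<s_{j_2}$ or $r_{i_1}>s_{j_2}$, uses Proposition~\ref{sx1<sx2} to assemble the same two four-term chains on $\mathcal{L}_1$ and on the $\sigma$-images, and your elimination of the mixed pattern (i)$+$(ii$'$) is precisely the paper's implicit step that $x_{i_2}y_{j_1}\in E$ together with the $\sigma$-chain forces $r_{i_2}<s_{j_1}$. The only difference is cosmetic: you dispatch the second case by an order-reversing symmetry instead of repeating the mirrored computation (strictly, the symmetry that fixes the labels and swaps (i)$\leftrightarrow$(i$'$), (ii)$\leftrightarrow$(ii$'$) is the exchange of $\mathcal{L}_1$ and $\mathcal{L}_2$; your left--right reflection also reverses the index orders and hence swaps the roles of the two hypothesis edges, which is harmless because the hypothesis and conclusion are invariant under that swap).
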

\begin{proof}
  The case in which $j_1=j_2$ or $i_1=i_2$ is clear, so let us assume 
  that~$i_1 < i_2$ and $j_1 < j_2$.  First suppose that $r_{i_1} < s_{j_2}$.
  Then $\sigma(s_{j_2}) < \sigma(r_{i_1})$ because $x_{i_1}y_{j_2} \in E$.
  By Proposition~\ref{sx1<sx2}, we have that 
  $\sigma(r_{i_1}) < \sigma(r_{i_2})$ and~$\sigma(s_{j_1}) < \sigma(s_{j_2})$.
  So 
  \begin{equation} \label{sigma-s1s2r1r2}
    \sigma(s_{j_1}) < \sigma(s_{j_2}) < \sigma(r_{i_1}) < \sigma(r_{i_2}).
  \end{equation}
  Hence, as $x_{i_2}y_{j_1} \in E$, we have that $r_{i_2} < s_{j_1}$, and
  \begin{equation} \label{r1r2s1s2}
    r_{i_1} < r_{i_2} < s_{j_1} < s_{j_2}.
  \end{equation}
  By \eqref{sigma-s1s2r1r2} and \eqref{r1r2s1s2}, we derive that
  $x_{i_1}y_{j_1}$, $x_{i_2}y_{j_2} \in E$. (See Figure~\ref{fig:cruzados}(a).)

  Now suppose that $r_{i_1} > s_{j_2}$. Then
  \begin{equation} \label{s1s2r1r2}
    s_{j_1} < s_{j_2} < r_{i_1} < r_{i_2}.
  \end{equation}
  As $x_{i_2}y_{j_1} \in E$, we have that $\sigma(r_{i_2})<\sigma(s_{j_1})$.
  Using Proposition~\ref{sx1<sx2}, we deduce that
  \begin{equation} \label{sigma-r1r2s1s2}
    \sigma(r_{i_1}) < \sigma(r_{i_2}) < \sigma(s_{j_1}) < \sigma(s_{j_2}).
  \end{equation}
  By \eqref{s1s2r1r2} and \eqref{sigma-r1r2s1s2}, we derive that
  $x_{i_1}y_{j_1}$, $x_{i_2}y_{j_2} \in E$.  (See Figure~\ref{fig:cruzados}(b).)
\end{proof}

\begin{figure}[ht]
\centering

\subfigure[ ]{

\begin{tikzpicture}[dot/.style={circle,inner sep=1pt,fill,name=#1},
  extended line/.style={shorten >=-#1,shorten <=-#1},
  extended line/.default=1cm]

\node [dot=r1, label=$r_{i_1}$] at (1,3) {};
\node [dot=r2, label=$r_{i_2}$] at (2.5,3) {};
\node [dot=s1, label=$s_{j_1}$] at (4,3) {};
\node [dot=s2, label=$s_{j_2}$] at (5.5,3) {};

\node [dot=ys1, label=below:$\sigma(s_{j_1})$] at (1,1) {};
\node [dot=ys2, label=below:$\sigma(s_{j_2})$] at (2.5,1) {};
\node [dot=yr1, label=below:$\sigma(r_{i_1})$] at (4,1) {};
\node [dot=yr2, label=below:$\sigma(r_{i_2})$] at (5.5,1) {};

\draw [extended line=0.5cm] (r1) -- (s2);
\draw [extended line=0.5cm] (ys1) -- +($(s2)-(r1)$);

\draw [color=mygreen] (r1) -- (yr1);
\draw [color=mygreen] (r2) -- (yr2);

\draw [color=myblue]  (s1) -- (ys1);
\draw [color=myblue] (s2) -- (ys2);
\end{tikzpicture}
}\hspace{15mm}
\subfigure[ ]{
\begin{tikzpicture}[dot/.style={circle,inner sep=1pt,fill,name=#1},
  extended line/.style={shorten >=-#1,shorten <=-#1},
  extended line/.default=1cm]

\node [dot=s1, label=$s_{j_1}$] at (1,3) {};
\node [dot=s2, label=$s_{j_2}$] at (2.5,3) {};
\node [dot=r1, label=$r_{i_1}$] at (4,3) {};
\node [dot=r2, label=$r_{i_2}$] at (5.5,3) {};

\node [dot=yr1, label=below:$\sigma(r_{i_1})$] at (1,1) {};
\node [dot=yr2, label=below:$\sigma(r_{i_2})$] at (2.5,1) {};
\node [dot=ys1, label=below:$\sigma(s_{j_1})$] at (4,1) {};
\node [dot=ys2, label=below:$\sigma(s_{j_2})$] at (5.5,1) {};

\draw [extended line=0.5cm] (s1) -- (r2);
\draw [extended line=0.5cm] (yr1) -- +($(r2)-(s1)$);

\draw [color=mygreen] (r1) -- (yr1);
\draw [color=mygreen] (r2) -- (yr2);

\draw [color=myblue]  (s1) -- (ys1);
\draw [color=myblue] (s2) -- (ys2);
\end{tikzpicture}
}
\caption{The two cases in the proof of Proposition \ref{cruzados}.}
\label{fig:cruzados}
\end{figure}

\begin{prop}\label{cruzados-contraidos}
  Let $x_{i_1} \leq x_{i_2} \leq x_{i_3} \leq x_{i_4}$ be vertices in $X$
  and $y_{j_1} \leq y_{j_2} \leq y_{j_3} \leq y_{j_4}$ be vertices in $Y$.
  If~$x_{i_1}y_{j_4}$ and $x_{i_4}y_{j_1}$ are in $E$, then 
  $x_{i_2}y_{j_3}$, $y_{j_2}x_{i_3}$, $x_{i_2}y_{j_2}$, and $x_{i_3}y_{j_3}$ are in $E$.
   In other words, $\{x_{i_2}, x_{i_3}, y_{j_2}, y_{j_3}\}$ induces a complete bipartite graph.
\end{prop}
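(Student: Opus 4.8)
The plan is to bootstrap from the single pair of crossing edges in the hypothesis up to the full $K_{2,2}$ on the inner vertices, using only the two structural propositions already established: the crossing property (Proposition~\ref{cruzados}) and the interval property (Proposition~\ref{middle}).

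First I would apply Proposition~\ref{cruzados} directly to the hypothesized edges. Since $x_{i_1} \le x_{i_4}$ and $y_{j_1} \le y_{j_4}$, the edges $x_{i_1}y_{j_4}, x_{i_4}y_{j_1} \in E$ are exactly in the crossing configuration required by that proposition (with the roles of $i_1, i_4$ and $j_1, j_4$). It follows that $\{x_{i_1}, x_{i_4}, y_{j_1}, y_{j_4}\}$ induces a complete bipartite graph; in particular $x_{i_1}y_{j_1}$ and $x_{i_4}y_{j_4}$ are also edges. This secures, for each of the two extreme $X$-vertices, a neighbor of smallest and of largest $Y$-index.

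Next I would invoke the interval property (Proposition~\ref{middle}) on the $X$-side. The vertex $x_{i_1}$ is now adjacent to both $y_{j_1}$ and $y_{j_4}$ with $j_1 \le j_2 \le j_3 \le j_4$, so Proposition~\ref{middle} yields $x_{i_1}y_{j_2}, x_{i_1}y_{j_3} \in E$. Identically, $x_{i_4}$ is adjacent to $y_{j_1}$ and $y_{j_4}$, giving $x_{i_4}y_{j_2}, x_{i_4}y_{j_3} \in E$. I would then apply the symmetric half of Proposition~\ref{middle} on the $Y$-side: the vertex $y_{j_2}$ is adjacent to $x_{i_1}$ and $x_{i_4}$ with $i_1 \le i_2 \le i_3 \le i_4$, so $x_{i_2}y_{j_2}, x_{i_3}y_{j_2} \in E$, and likewise $y_{j_3}$ gives $x_{i_2}y_{j_3}, x_{i_3}y_{j_3} \in E$. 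These are precisely the four edges in the statement, so $\{x_{i_2}, x_{i_3}, y_{j_2}, y_{j_3}\}$ induces a complete bipartite graph.

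There is no genuine obstacle here beyond bookkeeping: the argument is a two-stage \emph{spreading} of adjacency, first across the $Y$-interval from the two extreme $X$-vertices, then across the $X$-interval from the two target $Y$-vertices. The only point requiring minor care is that the hypotheses use $\le$ rather than $<$, so some of the eight indices may coincide; in every such degenerate case the edge to be proved either collapses to one already obtained or is supplied directly by the $\le$ versions of Propositions~\ref{cruzados} and~\ref{middle}, so no separate case analysis is needed.
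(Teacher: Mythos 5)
Your proposal is correct and follows essentially the same route as the paper's proof: apply Proposition~\ref{cruzados} to obtain $x_{i_1}y_{j_1}, x_{i_4}y_{j_4} \in E$, then apply Proposition~\ref{middle} twice on the $X$-side (for $x_{i_1}$ and $x_{i_4}$) and twice on the $Y$-side (for $y_{j_2}$ and $y_{j_3}$) to spread adjacency to the inner vertices. Your closing remark about coincident indices under $\leq$ is a harmless addition the paper leaves implicit.
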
  
\begin{proof}
  By Proposition~\ref{cruzados}, we have that $x_{i_1}y_{j_1}$ and $x_{i_4}y_{j_4}$ are in $E$.
  By applying Proposition~\ref{middle}, once for $x_{i_1}$ and once for $x_{i_4}$, 
  we deduce that $x_{i_1}y_{j_2}$, $x_{i_1}y_{j_3}$, $x_{i_4}y_{j_2}$, $x_{i_4}y_{j_3}$ are in~$E$, 
  and now applying it for $y_{j_2}$ and for $y_{j_3}$, we deduce that 
  $x_{i_2}y_{j_2}, x_{i_3}y_{j_2}, x_{i_2}y_{j_3}$ and $x_{i_3}y_{j_3}$ are in $E$.
\end{proof}

Until now, we have used the line representation of $G$ to prove some properties.
From now on, we will not need this line representation anymore. 
That is, we only need to 
concentrate in the graph $G$, viewed as a bipartite graph that has the previous 
properties.

We are interested in how do longest paths behave in a bipartite permutation graph.
We begin by showing that every longest path can be converted into another longest 
path with the same set of vertices that is ordered in some way. As we only care about
vertex intersection of longest paths, we will be only interested in such ordered paths.
To be more precise, if $P=a_1b_1a_2b_2\cdots a_kb_k$ is a path in $G$, 
we say that $P$ is \definition{ordered} if $a_1 < a_2 < \cdots < a_k$ and 
$b_1 < b_2 < \cdots < b_k$.  A similar definition applies when $P$ has even length.


Let $P$ be a path in $G$ with $P \cap X =\{a_1,a_2,\ldots, a_{|P \cap X|}\}$ and
$P \cap Y =\{b_1,b_2,\ldots, b_{|P \cap Y|}\}$, so that $a_1 < a_2 < \cdots < a_{|P \cap X|}$
and $b_1 < b_2 < \cdots < b_{|P \cap Y|}$.
For every $i \in \{1, \ldots, |P \cap X|\}$, 
let ${X_i=\{a_1,a_2,\ldots,a_i\}}$ and $\bar{X}_i= P\cap  (X \setminus X_i)$. 
For every $i \in \{1, \ldots, |P \cap Y|\}$,
let ${Y_{i}=\{b_1,b_2,\ldots,b_{i}\}}$ and $\bar{Y}_{i}= P\cap  (Y \setminus Y_i)$.
We denote by~$d_P(X_i)$ the sum $\sum_{v \in X_i}d_P(v)$
and by~$d_P(Y_j)$ the sum $\sum_{w \in Y_j}d_P(w)$.

\begin{prop} \label{connectedness}
  Let $i,j$ be such that $1\leq i \leq |P \cap X|$, $1\leq j \leq |P \cap Y|$, 
  and either $i < |P \cap X|$ or~$j < |P \cap Y|$.
  Then, there exists either an edge in $P$ from $X_i$ to $\bar{Y}_j$,
  or an edge in $P$ from $Y_j$ to $\bar{X}_i$.
\end{prop}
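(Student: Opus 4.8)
The statement is a connectivity-type claim about a path $P$ after splitting each side of its bipartition into a "left" block ($X_i$, $Y_j$) and a "right" block ($\bar X_i$, $\bar Y_j$). I read it as saying: at least one edge of $P$ must cross between the left block of one side and the right block of the other side. This is exactly the kind of statement that should follow from a counting argument on degrees within $P$, combined with the fact that $P$ is a single connected path. The plan is to argue by contradiction: assume that no edge of $P$ goes from $X_i$ to $\bar Y_j$ and no edge of $P$ goes from $Y_j$ to $\bar X_i$, and derive that $P$ is disconnected, contradicting the fact that $P$ is a path.

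**The main steps.**

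First I would unpack what the two forbidden edge types mean for the structure of $P$. Every edge of $P$ has one endpoint in $X$ and one in $Y$ (since $G$ is bipartite), so each edge falls into exactly one of four classes according to whether its $X$-endpoint is in $X_i$ or $\bar X_i$ and whether its $Y$-endpoint is in $Y_j$ or $\bar Y_j$. Ruling out the classes $(X_i,\bar Y_j)$ and $(\bar X_i, Y_j)$ leaves only edges of the form $(X_i, Y_j)$ and $(\bar X_i, \bar Y_j)$. In other words, under the contradiction hypothesis the vertex set $X_i \cup Y_j$ and the vertex set $\bar X_i \cup \bar Y_j$ would have no edge of $P$ between them. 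Since $P$ is connected and its vertex set is exactly $(X_i \cup \bar X_i) \cup (Y_j \cup \bar Y_j)$, this would split $P$ into two parts with no connecting edge — a contradiction, provided both parts are nonempty.

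The second step, then, is to verify that both $X_i \cup Y_j$ and $\bar X_i \cup \bar Y_j$ are nonempty, so that the partition is genuine. The block $X_i \cup Y_j$ is nonempty because $i \geq 1$ guarantees $a_1 \in X_i$. Nonemptiness of $\bar X_i \cup \bar Y_j$ is exactly where the hypothesis "$i < |P \cap X|$ or $j < |P \cap Y|$" is used: if $i < |P\cap X|$ then $\bar X_i \neq \emptyset$, and if $j < |P\cap Y|$ then $\bar Y_j \neq \emptyset$, so in either case the right block is nonempty. With both blocks nonempty and no edge of $P$ between them, $P$ cannot be connected, which is the desired contradiction.

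**Where the care is needed.**

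I expect the argument itself to be short; the only real subtlety is making the bipartition-crossing count airtight. The cleanest way is probably to phrase it purely in terms of the vertex partition $A = X_i \cup Y_j$ and $B = \bar X_i \cup \bar Y_j$ of $V(P)$ and observe that an edge of $P$ with endpoints in both $A$ and $B$ must be of one of the two forbidden types (an $X$--$Y$ edge crossing the partition lands in $(X_i,\bar Y_j)$ or $(\bar X_i,Y_j)$), so the hypothesis says no edge of $P$ joins $A$ to $B$; since $P$ is connected and both $A$ and $B$ are nonempty, this is impossible. The one place to be attentive is the nonemptiness bookkeeping at the boundary indices, which is precisely why the disjunctive hypothesis on $i$ and $j$ is included; I would spell that case distinction out explicitly rather than leave it implicit.
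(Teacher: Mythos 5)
Your proposal is correct and is essentially the paper's own argument: the paper likewise assumes no crossing edge exists, picks a vertex $v$ in the nonempty block $\bar{X}_i$ (after a without-loss-of-generality reduction to the case $i < |P \cap X|$), and derives a contradiction with the connectivity of $P$, which is exactly your partition of $V(P)$ into $X_i \cup Y_j$ and $\bar{X}_i \cup \bar{Y}_j$ with no edge of $P$ between the two nonempty parts. Your explicit bookkeeping of the four edge classes and of the nonemptiness cases just spells out what the paper leaves implicit.
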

\begin{proof}
  Without loss of generality, assume that $i < |P \cap X|$. 
  Let $v \in P \cap (X \setminus X_i)$. 
  Suppose by contradiction that no such edge exists.
  Then, there is no path, in the subgraph of $G$ induced by~$E(P)$,
  between $v$ and a vertex in $X_i \cup Y_j$, a contradiction.
\end{proof}

\begin{prop}\label{d_P(X_i)-geq-d_P(Y_j)-or-d_P(Y_j)-geq-d_P(X_i)}
  Let $i,j$ be such that $1\leq i \leq |P \cap X|$,
  $1\leq j \leq |P \cap Y|$, and either
  $i < |P \cap X|$ or $j < |P \cap Y|$.
  If $d_P(X_i) \geq d_P(Y_j)$, then
  there exists an edge from $X_i$ to $\bar{Y}_j$.
  If $d_P(Y_j) \geq d_P(X_i)$, then
  there exists an edge from $Y_j$ to $\bar{X}_i$.
\end{prop}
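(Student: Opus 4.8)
The plan is to reduce the statement to a simple edge–degree counting identity and then invoke Proposition~\ref{connectedness} to eliminate the only degenerate case. First I would classify the edges of $P$ according to where their endpoints fall. Since $G$ is bipartite with parts $X$ and $Y$, every edge of $P$ joins a vertex of $X$ to a vertex of $Y$; its $X$-endpoint lies in $X_i$ or $\bar{X}_i$, and its $Y$-endpoint lies in $Y_j$ or $\bar{Y}_j$. Let $a$ denote the number of edges of $P$ from $X_i$ to $Y_j$, let $b$ denote the number from $X_i$ to $\bar{Y}_j$, and let $c$ denote the number from $Y_j$ to $\bar{X}_i$.

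The key observation is that $d_P(X_i) = a + b$ and $d_P(Y_j) = a + c$. Indeed, $d_P(X_i) = \sum_{v \in X_i} d_P(v)$ counts each edge of $P$ incident to a vertex of $X_i$ exactly once: bipartiteness forces every such edge to have exactly one endpoint in $X$, hence at most one endpoint in $X_i$, so no edge is double-counted. These are precisely the edges of $P$ from $X_i$ to $Y_j$ together with those from $X_i$ to $\bar{Y}_j$, giving $d_P(X_i) = a + b$. The identical argument, with the roles of $X$ and $Y$ exchanged, yields $d_P(Y_j) = a + c$. Subtracting, $d_P(X_i) - d_P(Y_j) = b - c$.

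Now suppose $d_P(X_i) \geq d_P(Y_j)$; I must produce an edge from $X_i$ to $\bar{Y}_j$, that is, show $b \geq 1$. From the identity, $b \geq c \geq 0$. By Proposition~\ref{connectedness}, whose hypothesis is exactly our assumption that $i < |P \cap X|$ or $j < |P \cap Y|$, either $b \geq 1$ or $c \geq 1$. If $b = 0$, then $c \leq b = 0$ forces $c = 0$ as well, contradicting Proposition~\ref{connectedness}; hence $b \geq 1$, as required. The case $d_P(Y_j) \geq d_P(X_i)$ is entirely symmetric: it gives $c \geq b \geq 0$, and the same dichotomy forces $c \geq 1$, producing an edge from $Y_j$ to $\bar{X}_i$.

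I expect no serious obstacle here. The only point needing care is the degree–edge identity $d_P(X_i) = a + b$, which relies solely on bipartiteness (ensuring that no edge of $P$ lies within $X$ or within $Y$ and would thus be miscounted); this is where the fact that $G$ is bipartite, rather than any finer permutation-graph structure, enters. The remaining ingredient is simply to confirm that the hypothesis of the proposition coincides with that of Proposition~\ref{connectedness}, so that the connectedness step applies without modification.
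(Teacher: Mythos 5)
Your proof is correct and takes essentially the same route as the paper: both arguments reduce the claim to the bipartite edge count (the paper derives $d_P(Y_j) = d_P(X_i) + |\{wv \in E(P): w \in Y_j,\ v \in \bar{X}_i\}|$ under the assumption that no edge leaves $X_i$ for $\bar{Y}_j$, which is exactly your identity $d_P(X_i)-d_P(Y_j)=b-c$ specialized to $b=0$) and both invoke Proposition~\ref{connectedness} to guarantee the crossing edge. Your explicit $a,b,c$ decomposition is merely a cleaner packaging of the paper's contradiction argument.
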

\begin{proof}
  We will prove only the first affirmation, as the proof for the second one
  is analogous.
  Suppose by contradiction that $d_P(X_i) \geq d_P(Y_j)$ and
  there exists no edge from $X_i$ to $\bar{Y}_j$.
  By Proposition \ref{connectedness}, there exists at least one edge from 
  $Y_j$ to $\bar{X}_i$, so
  \begin{eqnarray*}
    d_P(Y_j) &=& \{wv \in E(P): w \in Y_j, v \in X_i\} + \{wv \in E(P): w \in Y_j, v \in \bar{X}_i\} \\
             &=& \{vw \in E(P): v \in X_i, w \in V(P)\} + \{wv \in E(P): w \in Y_j, v \in \bar{X}_i\} \\
             &=& d_P(X_i) + \{wv \in E(P): w \in Y_j, v \in \bar{X}_i\}  \\
             &>& d_P(X_i),
  \end{eqnarray*}
  a contradiction.
\end{proof}

\begin{lema}\label{ordenado}
  For every path $P$ in a bipartite permutation graph,
  there exists an ordered path with the same vertex set as $P$.
\end{lema}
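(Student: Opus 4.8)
The plan is to start from $P$ and repeatedly ``uncross'' pairs of its edges, never changing the vertex set, until the resulting path is ordered. Fix the ground set $V(P)$, write $P\cap X=\{a_1<\dots<a_p\}$ and $P\cap Y=\{b_1<\dots<b_q\}$ (so that, since $G$ is bipartite, $P$ alternates between $X$ and $Y$ and $|p-q|\le 1$), and recall that a spanning path of $V(P)$ is ordered exactly when it visits the $a_i$ in increasing order and the $b_j$ in increasing order. Among all paths in $G$ with vertex set $V(P)$ — a finite nonempty collection, as $P$ itself is one — I would choose one, still call it $P$, minimizing the potential
\[
\Phi(P)\ =\ \sum_{x_iy_j\in E(P)}(i-j)^2 .
\]
The goal is then to show that a $\Phi$-minimizer is ordered, whence the lemma follows.

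The basic move is a single $2$-opt reversal respecting the bipartition. Suppose $P$ contains two nonadjacent edges that, read in the direction of traversal, are $e_1=x_\alpha y_\beta$ (an $X$-to-$Y$ edge, occurring earlier) and $e_2=y_\gamma x_\delta$ (a $Y$-to-$X$ edge, occurring later), so that $P=\,\cdots x_\alpha\,y_\beta\,\cdots\,y_\gamma\,x_\delta\,\cdots$. Deleting $e_1,e_2$ and reversing the segment strictly between $y_\beta$ and $y_\gamma$ produces $\,\cdots x_\alpha\,y_\gamma\,\cdots\,y_\beta\,x_\delta\,\cdots$, which is again a path on the same vertices provided the new boundary edges $x_\alpha y_\gamma$ and $x_\delta y_\beta$ lie in $E$. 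A direct expansion gives
\[
\Phi(\text{new})-\Phi(P)\ =\ 2(\beta-\gamma)(\alpha-\delta).
\]
When $(\alpha-\delta)(\beta-\gamma)<0$ this is negative, and in exactly this situation Proposition~\ref{cruzados}, applied to the two present edges $x_\alpha y_\beta$ and $x_\delta y_\gamma$, certifies that the required reconnection edges $x_\alpha y_\gamma,\,x_\delta y_\beta$ indeed belong to $E$. Thus every oppositely oriented, index-crossing pair of edges yields a valid reversal that strictly decreases $\Phi$, which a $\Phi$-minimizer cannot admit.

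It remains to argue that a path which is \emph{not} ordered always exhibits such a pair, and this is the step I expect to be the main obstacle. If $P$ is not ordered, its $X$-traversal (or, symmetrically, its $Y$-traversal) has a descent: consecutive $X$-vertices $a$ and $c$ with $a>c$ separated by a single $Y$-vertex. The two edges carrying the descent are oppositely oriented but \emph{adjacent}, so they cannot be uncrossed directly; one must bring in a third edge lying outside the descent, and whether the resulting pair is index-crossing depends on the two $Y$-vertices flanking the descent. Concretely, reversing the length-three segment around the descent is valid and improving precisely when those flanking $Y$-vertices occur in increasing order, with the symmetric statement for $Y$-descents. I would handle the remaining configurations by choosing the descent to fix extremally — for instance by tracking the position of the largest misplaced $X$- (or $Y$-) vertex — and by invoking Propositions~\ref{middle}, \ref{connectedness}, and~\ref{d_P(X_i)-geq-d_P(Y_j)-or-d_P(Y_j)-geq-d_P(X_i)}: the first constrains which neighbours a vertex can have, while the latter two guarantee, through the comparison of $d_P(X_i)$ with $d_P(Y_j)$, that $E(P)$ contains an edge crossing any prescribed pair of initial segments, hence a third edge of the orientation needed to form an improving pair. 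The boundary cases, where a descending vertex is an endpoint of $P$ and one flank is absent, are easier and treated separately.

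As a sanity check that the extremes behave, note that the smallest $X$-vertex $a_1$ and the smallest $Y$-vertex $b_1$ of $P$ are always adjacent: each has a neighbour in $P$, and applying Proposition~\ref{cruzados} to an edge at $a_1$ and an edge at $b_1$ — whose other endpoints necessarily carry larger indices — yields $x_{a_1}y_{b_1}\in E$; the same argument at the top gives $x_{a_p}y_{b_q}\in E$. This makes the intended endpoints of the ordered path legitimate and illustrates the kind of local consequence of Proposition~\ref{cruzados} on which the uncrossing argument relies throughout.
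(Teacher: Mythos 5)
Your uncrossing framework is set up correctly as far as it goes: the identity $\Phi(\mathrm{new})-\Phi(P)=2(\beta-\gamma)(\alpha-\delta)$ is right, and Proposition~\ref{cruzados} does certify the reconnection edges exactly when the deleted pair is index-crossing and oppositely oriented. But the proof has a genuine gap at precisely the point you flag yourself: you never establish that a non-ordered path must admit such an improving move, and with the move set you describe (internal $2$-opt reversals between two nonadjacent, oppositely oriented edges) the claim is in fact \emph{false}. Consider the path $x_2\,y_1\,x_1\,y_2$, which is not ordered. Its only nonadjacent pair of edges is $x_2y_1$ and $x_1y_2$, and these are traversed with the \emph{same} orientation ($X$ to $Y$), so no reversal of the type you allow applies; the adjacent descent edges give $\beta=\gamma$ and hence zero change in $\Phi$. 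Repairing this example requires an endpoint rotation (delete $x_1y_2$, add $x_2y_2$, reverse the prefix), a move outside your repertoire whose validity and $\Phi$-analysis you would also have to supply. So the statement ``a $\Phi$-minimizer is ordered'' is not proved: the case analysis you defer (``I would handle the remaining configurations by choosing the descent to fix extremally\dots'') is not a boundary issue but the entire combinatorial core of the lemma, and your sketch of how Propositions~\ref{connectedness} and~\ref{d_P(X_i)-geq-d_P(Y_j)-or-d_P(Y_j)-geq-d_P(X_i)} would produce the needed third edge is not carried out.

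It is worth noting that the paper avoids local moves altogether. It proves directly, by the degree-counting comparison of $d_P(X_i)$ with $d_P(Y_j)$ via Proposition~\ref{d_P(X_i)-geq-d_P(Y_j)-or-d_P(Y_j)-geq-d_P(X_i)}, that for every $i$ the edge set of $P$ contains an edge from $X_i$ to $\bar{Y}_{i-1}$ and one from $Y_i$ to $\bar{X}_i$; these two ``wide'' edges straddle the consecutive sorted vertices $a_i,b_i,a_{i+1}$, and Proposition~\ref{cruzados-contraidos} then yields the edges $a_ib_i$ and $b_ia_{i+1}$ outright, so the sorted sequence is itself a path. Your sanity check about $a_1b_1$ and $a_pb_q$ is a correct special instance of exactly this mechanism, which suggests the direct construction was within reach; but as written, your argument establishes only that certain configurations can be improved, not that every non-ordered path can, and therefore does not prove Lemma~\ref{ordenado}.
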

\begin{proof}
  Suppose that $P \cap X =\{a_1,a_2,\ldots, a_{|P \cap X|}\}$,
  $P \cap Y =\{b_1,b_2,\ldots, b_{|P \cap Y|}\}$, 
  $a_1 < \cdots < a_{|P \cap X|}$ and $b_1 < \cdots < b_{|P \cap Y|}$.
  Without loss of generality, we may assume that 
  $|P \cap X| \geq |P \cap Y|$ and that, if $|P \cap X| = |P \cap Y|$,
  then $i^*\leq j^*$, where $a_{i^*}$ is the extreme of $P$ in $X$ and
  $b_{j^*}$ is the extreme of~$P$ in~$Y$. 
  (If $i^*> j^*$, then a similar proof applies by interchanging $X$ and $Y$.)
  Let $k=|P \cap Y|$. We will show that
  \begin{equation} \label{X_i->Y_i-1}
    \mbox{for every } i \in \{1,\ldots, k\}, \mbox{there exists an edge with one end in }X_i \mbox{ and the other in } \bar{Y}_{i-1},
  \end{equation}
  and
  \begin{equation}\label{Y_i->X_i}
    \begin{aligned}
      & \mbox{for every }i \in \{1,\ldots, k-1+|P \cap X|-|P \cap Y|\}, \\
      & \mbox{there exists an edge with one end in }Y_i \mbox{ and the other in }\bar{X}_{i}.
    \end{aligned}
  \end{equation}

  Proof of (\ref{X_i->Y_i-1}): 
  Observe that $d_P(u) = 1$ for at most two vertices~$u$ in~$X_i$ (the extremes of~$P$). 
  Therefore, $d_P(X_i) \geq 2|X_i|-2 = 2|Y_{i-1}|$.
  As $d_P(w) \leq 2$ for every $w \in Y_{i-1}$, we have that $d_P(Y_{i-1}) \leq 2|Y_{i-1}|$. 
  Hence, $d_P(X_i) \geq d_P(Y_{i-1})$ and, as $i-1<k$, 
  $(\ref{X_i->Y_i-1})$ is valid by Proposition \ref{d_P(X_i)-geq-d_P(Y_j)-or-d_P(Y_j)-geq-d_P(X_i)}.

  Proof of (\ref{Y_i->X_i}): 
  First suppose that $|P \cap X|=k$ ($=|P \cap Y|$).
  As $i^*\leq j^*$, we have that~${d_P(Y_i)\geq d_P(X_i)}$.
  Indeed, if $a_{i^*} \in X_i$ then $d_P(X_i)=2|X_i|-2$ and $d_P(Y_i) \geq 2|Y_i|-2$,
  and if $a_{i^*} \notin X_i$ then, as $i^*\leq j^*$, $y_{j^*} \notin Y_i$, so
  $d_P(X_i)=2|X_i|$ and $d_P(Y_i)=2|Y_i|$.
  Thus, as $i<k$, $(\ref{Y_i->X_i})$ is valid by Proposition \ref{d_P(X_i)-geq-d_P(Y_j)-or-d_P(Y_j)-geq-d_P(X_i)}.
  Now suppose that $|P \cap X|=k+1$.
  Then, $d_P(w) = 2$ for every $w \in Y_i$.
  Therefore, $d_P(Y_i) = 2|Y_i| = 2|X_{i}|$.
  As $d_P(v) \leq 2$ for every $v \in X_{i}$, we have that $d_P(X_i) \leq 2|X_i|$. 
  Hence, $d_P(Y_i) \geq d_P(X_i)$ and, as $i<k+1$, 
  $(\ref{Y_i->X_i})$ is valid by Proposition \ref{d_P(X_i)-geq-d_P(Y_j)-or-d_P(Y_j)-geq-d_P(X_i)}.

  Let $i \in \{1,\ldots, k-1\}$.
  By (\ref{X_i->Y_i-1}), there exists a vertex
  $a_{q}$ in~$X_i$ with a neighbor~$b_{r'}$ in~$\bar{Y}_{i-1}$.
  By (\ref{Y_i->X_i}), there exists a vertex $b_{r}$ in~$Y_{i}$ 
  with a neighbor~$a_{q'}$ in~$\bar{X}_i$.
  As $a_q\leq a_{i} \leq  a_{i+1} \leq a_{q'}$ and 
  $b_r \leq b_i \leq b_i \leq b_{r'}$, by Proposition~\ref{cruzados-contraidos}, both $a_ib_i$ 
  and~$b_ia_{i+1}$ are edges. (See Figure~\ref{fig:ordenado-previo}.)
  By (\ref{X_i->Y_i-1}), $a_kb_k$ is an edge, hence $a_1b_1\cdots a_kb_k$ is a path.
  This implies that if $|X\cap P| = k$, we are done.
  Also, if~$|X\cap P| = k+1$, then $b_ka_{k+1}$ is an edge, by (\ref{Y_i->X_i}), 
  so $a_1b_1\cdots a_kb_ka_{k+1}$ is a path.
\begin{figure}[ht]
\centering
\begin{tikzpicture}[thick,
  every node/.style={draw,circle,scale=0.9},
  psnode/.style={fill=mygreen},
  qsnode/.style={fill=myblue},
  style1/.style={ellipse,draw,inner sep=-1pt,text width=1.5cm},
  style2/.style={ellipse,draw,inner sep=0pt,text width=4cm,text heigth=16cm,ellipse ratio=2},
]

\begin{scope}[start chain=going right,node distance=7mm, xshift=3cm, yshift=2.0cm]
\node[psnode,on chain] (aq) [label=above: {\makebox[0pt][c]{$a_q$}}] {};
\node[psnode,on chain] (ai) [label=above: {\makebox[0pt][c]{$a_{i_{}}$}}] {};
\node[psnode,on chain] (ai1) [label=above: {\makebox[0pt][c]{$a^{}_{i+1}$}}] {};
\node[psnode,on chain] (aqq) [label=above: {\makebox[0pt][c]{$a_{q'}$}}] {};
\end{scope}

\begin{scope}[xshift=3cm, yshift=0.5cm,start chain=going right,node distance=7mm]
\node[qsnode,on chain] (br) [label=below: {\makebox[0pt][c]{$b^{}_r$}}] {};
\node[qsnode,on chain] (bi) [label=below: {\makebox[0pt][c]{$b^{}_i$}}] {};
\node[qsnode,on chain] (brr) [label=below: {\makebox[0pt][c]{$b^{}_{r'}$}}] {};
\end{scope}

\draw (aq) -- (brr);
\draw (br) -- (aqq);
\draw [style=dashed] (ai) -- (bi);
\draw [style=dashed] (bi) -- (ai1);
\end{tikzpicture}
\caption{The proof of Lemma \ref{ordenado}.}
\label{fig:ordenado-previo}
\end{figure}
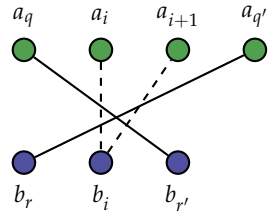
\end{proof}

  

As observed before, Lemma \ref{ordenado} says that
we can restrict attention to ordered longest paths from now on.
Remember that we want to prove that~$\lpt(G)=1$. 
We proceed in two steps. First, we will prove that $\lpt(G)\leq 2$.  
In fact, we prove that the set of ends of every edge is a longest path transversal. 
Finally, we will prove that one element in $\{x_1,y_1\}$ is 
also a longest path transversal, which implies that $\lpt(G)=1$.

Let $x_{i_1}y_{j_1}$ and $x_{i_2}y_{j_2}$ be two edges in $G$.
We say that $x_{i_1}y_{j_1}$ and $x_{i_2}y_{j_2}$ \emph{cross} 
each other if 
$(i_1-i_2)(j_1-j_2)<0$.
If that is not the case, we say they are \emph{parallel}.
We say that $|i_1-i_2|$ is the \emph{distance in}~$X$
and that $|j_1-j_2|$ is the \emph{distance in}~$Y$ between such edges.
We denote by $dist_X(x_{i_1}y_{j_1},x_{i_2}y_{j_2})$ and 
$dist_Y(x_{i_1}y_{j_1},x_{i_2}y_{j_2})$ these two values
respectively.

\begin{prop}\label{cross-with-vw}
  Let $P$ be a longest path and $x_{i_1}y_{j_1} \in E(P)$.
  Let $vw \in E(G)$.
  If $x_{i_1}y_{j_1}$ crosses $vw$, then $P$ contains at least one of $\{v,w\}$.
\end{prop}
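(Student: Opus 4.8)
The plan is to argue by contradiction: I assume that $P$ contains neither endpoint of $vw$ and then build a strictly longer path, contradicting that $P$ is a longest path. Since $G$ is bipartite, the edge $vw$ has one end in $X$ and one in $Y$, so I may write $vw = x_{i_2}y_{j_2}$ with $\{v,w\} = \{x_{i_2},y_{j_2}\}$. The hypothesis that $x_{i_1}y_{j_1}$ crosses $vw$ means $(i_1-i_2)(j_1-j_2)<0$, so $x_{i_1}y_{j_1}$ and $x_{i_2}y_{j_2}$ are a pair of crossing edges to which Proposition~\ref{cruzados} applies. That proposition gives that $\{x_{i_1},x_{i_2},y_{j_1},y_{j_2}\}$ induces a complete bipartite graph; in particular $x_{i_1}y_{j_2}$ and $x_{i_2}y_{j_1}$ are edges of $G$ (and $x_{i_2}y_{j_2}=vw$ is an edge by hypothesis).

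The main step is a local rerouting of $P$. Because $x_{i_1}y_{j_1}\in E(P)$, I split $P$ as $P = P_1 \cdot x_{i_1}y_{j_1} \cdot P_2$, where $P_1$ is the subpath of $P$ ending at $x_{i_1}$ and $P_2$ is the subpath starting at $y_{j_1}$. Using the three edges just obtained, I replace the single edge $x_{i_1}y_{j_1}$ by the detour through $y_{j_2}$ and $x_{i_2}$, forming the walk $P_1 \cdot x_{i_1}y_{j_2} \cdot y_{j_2}x_{i_2} \cdot x_{i_2}y_{j_1} \cdot P_2$. This walk uses two more edges than $P$, so once I know it is a path it is strictly longer than $P$, which is the contradiction I am after.

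It remains to verify that this walk is a simple path, and this is exactly where the standing assumption enters: since neither $x_{i_2}$ nor $y_{j_2}$ lies on $P$, neither lies on $P_1$ or on $P_2$, and both are clearly distinct from the endpoints $x_{i_1},y_{j_1}$ of the replaced edge. Hence every vertex of the new walk is distinct, it is a path of length $|P|+2$, and the maximality of $P$ is violated. I expect the only delicate point to be this vertex-disjointness bookkeeping, together with the remark that both orientations of the crossing inequality ($i_1<i_2$ and $i_1>i_2$) are handled uniformly, since Proposition~\ref{cruzados} delivers the full complete bipartite graph in either case; everything else is a direct application of that already-established structural fact.
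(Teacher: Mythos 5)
Your proof is correct and follows essentially the same route as the paper: both argue by contradiction, invoke Proposition~\ref{cruzados} on the crossing pair to obtain the edges $x_{i_1}y_{j_2}$ and $x_{i_2}y_{j_1}$, and replace the edge $x_{i_1}y_{j_1}$ of $P$ by the detour through $w$ and $v$ to get a path two edges longer. The only difference is that you spell out the simplicity check for the rerouted walk, which the paper leaves implicit.
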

\begin{proof}
  Suppose by contradiction that $\{v,w\} \cap V(P) = \emptyset$.
  Without loss of generality, suppose that $x_{i_1}<v$ and $y_{j_1}>w$. 
  By Proposition~\ref{cruzados}, $x_{i_1}w$ and $vy_{j_1}$ are edges.
  Therefore, ${P-x_{i_1}y_{j_1}+x_{i_1}w+wv+vy_{j_1}}$ is a path longer than $P$,
  a contradiction.
\end{proof}

\begin{lema} \label{every-edge-is-transversal}
  Let~$G=(X,Y,E)$ be a connected bipartite permutation graph.
  Let~$vw \in E$, with~$v \in X$ and~$w \in Y$.
  Every ordered longest path contains a vertex of~$\{v, w\}$.
\end{lema}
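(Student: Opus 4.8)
The plan is to argue by contradiction: suppose some ordered longest path $P$ avoids both $v$ and $w$. Since $vw \in E$ with $v \in X$ and $w \in Y$, and $P$ is a connected subgraph, I would first observe that $P$ must interact with the edge $vw$ in a controlled way. The key tool available is Proposition~\ref{cross-with-vw}, which says that if any edge $x_{i_1}y_{j_1}$ of $P$ \emph{crosses} $vw$, then $P$ already contains one of $\{v,w\}$. So under the assumption $\{v,w\}\cap V(P)=\emptyset$, \emph{no} edge of $P$ crosses $vw$; every edge of $P$ is parallel to $vw$. This is the structural fact I would exploit.

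\textbf{Exploiting non-crossing.}
Writing $v=x_{i_0}$ and $w=y_{j_0}$, the statement that every edge $x_ay_b$ of $P$ is parallel to $x_{i_0}y_{j_0}$ means $(a-i_0)(b-j_0)\geq 0$ for each such edge. Since $P$ is ordered, $P=a_1b_1a_2b_2\cdots$ with $a_1<a_2<\cdots$ in $X$ and $b_1<b_2<\cdots$ in $Y$ (indices in the $X$- and $Y$-orderings). I would split the vertices of $P$ according to whether their $X$-index is below $i_0$ or above, and similarly in $Y$, and use parallelism to show that $P$ essentially lives entirely ``on one side'' of $vw$ — that is, either all of $P$'s $X$-vertices are $\leq v$ and all its $Y$-vertices are $\leq w$, or all are on the other side. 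The point is that an edge straddling $v$ in $X$ while straddling $w$ in $Y$ in opposite directions is exactly a crossing edge, which is forbidden; so the ordered structure forces $P$ into one of the two ``parallel corners.''

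\textbf{Deriving the contradiction by extending $P$.}
Once $P$ is confined to one side, say all its $X$-vertices satisfy $a\leq i_0$ and all its $Y$-vertices satisfy $b\leq j_0$ (with $v,w$ themselves excluded), I would produce a path strictly longer than $P$ by splicing $vw$ onto $P$, thereby contradicting that $P$ is longest. Concretely, I expect to take an endpoint of $P$ — say its extreme in $X$, some $a_k$ with $a_k<v$ — together with the extreme in $Y$, and use Proposition~\ref{cruzados} (or Proposition~\ref{cruzados-contraidos}) to certify that the edges connecting $P$'s endpoints to $v$ and $w$ exist. Since $P$ contains neither $v$ nor $w$, attaching the edge $vw$ through such a certified connection yields a path on strictly more vertices, contradicting maximality of $|P|$. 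The main obstacle will be the bookkeeping in this extension step: I must verify that the endpoint I attach to is genuinely free (degree one in $P$) and lies on the correct side, and that the newly required edges to $v$ and $w$ are guaranteed by the crossing lemmas — handling both the ``left corner'' and ``right corner'' cases symmetrically and making sure no off-by-one in the ordered indexing spoils the application of Proposition~\ref{cruzados-contraidos}.
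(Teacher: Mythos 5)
Your first two steps are sound and match the paper: assuming an ordered longest path $P$ avoids $\{v,w\}$, Proposition~\ref{cross-with-vw} forces every edge of $P$ to be parallel to $vw$, and since consecutive edges of $P$ share a vertex, all of $P$ indeed lies on one side of $vw$. The gap is in your concluding step. You claim that edges from the endpoints of $P$ to $v$ or $w$ are ``guaranteed by the crossing lemmas,'' but Propositions~\ref{cruzados} and~\ref{cruzados-contraidos} only certify edges when you already possess a \emph{crossing} pair of edges, and after your reduction every edge of $P$ is parallel to $vw$ --- exactly the configuration in which those propositions give you nothing. Connectivity of $G$ yields only a path from $V(P)$ to $\{v,w\}$, not a direct edge: the endpoints of $P$ may have all their neighbors strictly on $P$'s side, and the only edges joining $P$'s side to the side of $vw$ may attach to \emph{interior} vertices of $P$, where no splice of $vw$ is possible. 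So the direct extension $P \cdot (\text{endpoint})v \cdot vw$ can simply fail to exist, and what you dismiss as ``bookkeeping'' is in fact the mathematical core of the lemma.

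The paper closes this gap with an extremal descent that your proposal lacks. It picks the edge $x_{i_1}y_{j_1}$ of $P$ closest to $vw$ (whose end $x_{i_1}$ is an extreme of $P$, by orderedness), and additionally chooses $P$ to minimize $x_{i_1}$ over all offending paths. Letting $H$ be the subgraph induced by $\{x_i : i \ge i_1\} \cup \{y_j : j \ge j_1\}$, connectivity provides an edge leaving $H$, and this edge together with $x_{i_1}y_{j_1}$ \emph{does} form the crossing configuration needed to invoke Proposition~\ref{cruzados}. Two cases result: either the new edge lets one prepend a vertex to $P$ (contradicting that $P$ is longest), or --- when the connecting edge attaches through $y_{j_1}$ --- one performs the swap $Q = P - x_{i_1}y_{j_1} + x_l y_{j_1}$, obtaining an equally long path still avoiding $\{v,w\}$ but with a strictly smaller extreme $x_l < x_{i_1}$, contradicting the minimal choice of $P$. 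It is this replacement-and-descent mechanism, handling the case where no endpoint extension exists, that your splicing argument cannot reproduce; without it the proof does not go through.
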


\begin{proof}
  Suppose by contradiction that there exists an ordered
  longest path~$P$ that does not contain either~$v$ or~$w$.
  Then, by Proposition \ref{cross-with-vw}, all edges of~$P$ are parallel to~$vw$.
  Let~$x_{i_1}y_{j_1}$ be the edge of~$P$ that is
  ``closer'' to~$vw$. That is, 
  $dist_X(x_{i_1}y_{j_1}, vw) = \min\{dist_X(e,vw): e\in E(P)\}$
  and~$dist_Y(x_{i_1}y_{j_1}, vw) = \min\{dist_X(e,vw): e\in E(P)\}$.
  Observe that, as~$P$ is an ordered path, one of~$\{x_{i_1},y_{j_1}\}$ is an extreme of~$P$. 
  Suppose that~$x_{i_1}$ is such an extreme.
  (A similar proof applies when this is not the case.) 
  Without loss of generality, we may assume that $x_{i_1} > v$ and that~$P$ 
  is a path with minimum value of~$x_{i_1}$ among all such paths.
  
  Let~$H$ be the subgraph of~$G$ induced by the vertices
  $\{x_i:i\geq i_1\} \cup \{y_j:j\geq j_1\}$.
  As~$G$ is connected and~$G \neq H$, there exists an edge 
  between~$H$ and~$G-V(H)$.  First suppose that such an edge is 
  between a vertex~$x_l$ in~$H$ and a vertex~$y_r$ in $G-V(H)$.
  Then, by Proposition~\ref{cruzados}, $x_{i_1}$ is adjacent to~$y_r$.
  Hence, $y_rx_{i_1} \cdot P$ is also a path, a contradiction.
  (See Figure~\ref{fig:every-edge-is-transversal}(a).)
  Now suppose that there is an edge between a vertex $x_l$ in $G-V(H)$ 
  and a vertex $y_r$ in $H$. Then, by 
  Proposition~\ref{cruzados}, $x_{l}$ is adjacent to~$y_{j_1}$.  
  So $Q=P-x_{i_1}y_{j_1}+x_ly_{j_1}$ is also a longest path.
  As $V(Q) \setminus V(P) = \{x_l\}$, we have that $w \notin Q$.
  Observe also that $v \notin Q$. Indeed, otherwise 
  $Q \cdot vw$ is a path longer than $P$.
  Hence, by Proposition~\ref{cross-with-vw}, all edges of $Q$ are parallel to $vw$,
  which implies that $v<x_l<x_{i_1}$,
  a contradiction to the way~$P$ was chosen.
  (See Figure~\ref{fig:every-edge-is-transversal}(b).)
\end{proof}

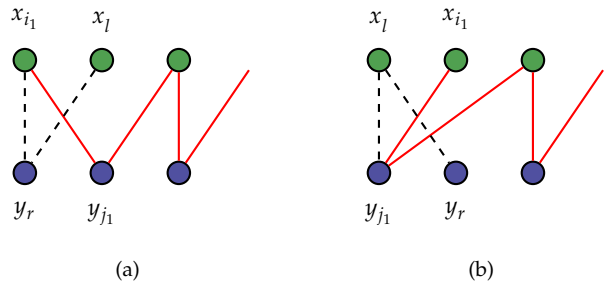
\begin{figure}[ht]
\centering
\subfigure[ ]{
\begin{tikzpicture}[thick,
  every node/.style={draw,circle,scale=0.9},
  psnode/.style={fill=mygreen},
  qsnode/.style={fill=myblue},
  style1/.style={ellipse,draw,inner sep=-1pt,text width=1.5cm},
  style2/.style={ellipse,draw,inner sep=0pt,text width=4cm,text heigth=16cm,ellipse ratio=2},
]

\begin{scope}[start chain=going right,node distance=7mm, xshift=3cm, yshift=2.0cm]
\node[psnode,on chain] (xi1) [label=above: $x_{i_1}$] {};
\node[psnode,on chain] (xl) [label=above: $x^{}_l$] {};
\node[psnode,on chain] (x2) [label=above: $ $] {};
\node[on chain, draw=none] (x3) [label=above: $ $] {};
\end{scope}

\begin{scope}[xshift=3cm, yshift=0.5cm,start chain=going right,node distance=7mm]
\node[qsnode,on chain] (yr) [label=below: $y^{}_r$] {};
\node[qsnode,on chain] (yj1) [label=below: $y_{j_1}$] {};
\node[qsnode,on chain] (y2) [label=below: $ $] {};
\end{scope}

\draw [color = myred] (xi1) -- (yj1);
\draw [color = myred] (yj1) -- (x2);
\draw [color = myred] (x2) -- (y2);
\draw [color = myred] (y2) -- (x3);
\draw [style=dashed] (xl) -- (yr);
\draw [style=dashed] (xi1) -- (yr);
\end{tikzpicture}
}\hspace{8mm}
\subfigure[ ]{
\begin{tikzpicture}[thick,
  every node/.style={draw,circle,scale=0.9},
  psnode/.style={fill=mygreen},
  qsnode/.style={fill=myblue},
  style1/.style={ellipse,draw,inner sep=-1pt,text width=1.5cm},
  style2/.style={ellipse,draw,inner sep=0pt,text width=4cm,text heigth=16cm,ellipse ratio=2},
]

\begin{scope}[start chain=going right,node distance=7mm, xshift=3cm, yshift=2.0cm]
\node[psnode,on chain] (xl) [label=above: $x^{}_l$] {};
\node[psnode,on chain] (xi1) [label=above: $x_{i_1}$] {};
\node[psnode,on chain] (x2) [label=above: $ $] {};
\node[draw=none, on chain] (x3) [label=above: $ $] {};
\end{scope}

\begin{scope}[xshift=3cm, yshift=0.5cm,start chain=going right,node distance=7mm]
\node[qsnode,on chain] (yj1) [label=below: $y_{j_1}$] {};
\node[qsnode,on chain] (yr) [label=below: $y^{}_r$] {};
\node[qsnode,on chain] (y2) [label=above: $ $] {};
\node[on chain, draw=none] (y3) [label=above: $ $] {};
\end{scope}

\draw [color = myred] (xi1) -- (yj1);
\draw [color = myred] (yj1) -- (x2);
\draw [color = myred] (x2) -- (y2);
\draw [color = myred] (y2) -- (x3);
\draw [style=dashed] (xl) -- (yr);
\draw [style=dashed] (xl) -- (yj1);
\end{tikzpicture}
}
\caption{The two cases in the proof of Lemma~\ref{every-edge-is-transversal}.
The solid lines represent the edges in $P$.}
\label{fig:every-edge-is-transversal}
\end{figure}

Given a collection~$\calC$ of ordered longest paths, we say that 
$P\in \calC$ is a \emph{left-most} path if, for every other path
$Q\in \calC$ and for every $i$, the $i$-th vertex of~$P$ 
in~$X$ is less than or equal to the $i$-th vertex of~$Q$ in~$X$, and 
the same applies for~$Y$ instead of~$X$.  Such a path exists 
because all paths in~$\calC$ are ordered.
		


\begin{teorema}\label{lpt(bpg)=1}
  For every connected bipartite permutation graph $G=(X,Y,E)$, $\lpt(G)=\nobreak1$.
\end{teorema}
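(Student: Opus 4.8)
The plan is to first reduce the problem to two vertices and then to a single one. By Lemma~\ref{ordenado} it suffices to meet every \emph{ordered} longest path, so throughout I work only with ordered longest paths, and I write $L$ for the common length of a longest path.

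\textbf{Step 1 (the edge $x_1y_1$).} First I would observe that $x_1y_1\in E$. Since $G$ is connected, $x_1$ has a neighbour $y_s$ and $y_1$ has a neighbour $x_t$; applying Proposition~\ref{cruzados} to the edges $x_1y_s$ and $x_ty_1$ (with $1\le t$ and $1\le s$) yields $x_1y_1\in E$. Then Lemma~\ref{every-edge-is-transversal} applied to the edge $x_1y_1$ already shows that every ordered longest path contains $x_1$ or $y_1$, so $\{x_1,y_1\}$ is a transversal and $\lpt(G)\le 2$. It remains to upgrade this to a single vertex.

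\textbf{Step 2 (a prepend dichotomy).} Suppose for contradiction that neither $\{x_1\}$ nor $\{y_1\}$ is a transversal, so there is an ordered longest path $R$ with $x_1\notin R$ and one $S$ with $y_1\notin S$. By Step~1, $y_1\in R$ and $x_1\in S$, and since $x_1$ is the least vertex of $X$ and $y_1$ the least of $Y$, $y_1$ is the least $Y$-vertex of $R$ and $x_1$ the least $X$-vertex of $S$. The key local move is: \emph{if $y_1$ were an endpoint of $R$, then since $x_1y_1\in E$ and $x_1\notin R$, prepending $x_1$ along this edge would give a path longer than $L$, a contradiction.} Hence $y_1$ is internal to $R$, and symmetrically $x_1$ is internal to $S$. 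Reading off the ordered forms, the least $Y$-vertex of $R$ can be internal only if $R$ begins with an $X$-vertex, and likewise for $S$; thus $R=x_p\,y_1\cdots$ and $S=y_q\,x_1\cdots$ with $p,q\ge 2$ and $x_p,y_q$ the respective endpoints. Applying Proposition~\ref{middle} to $x_1y_1,\,x_py_1\in E$ and to $x_1y_1,\,x_1y_q\in E$ records that $N(y_1)\supseteq\{x_1,\dots,x_p\}$ and $N(x_1)\supseteq\{y_1,\dots,y_q\}$.

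\textbf{Step 3 (concatenation through $x_1y_1$).} Split $R$ at $y_1$ into its two $y_1$-tails and $S$ at $x_1$ into its two $x_1$-tails, and join a tail of $R$ ending at $y_1$ to a tail of $S$ starting at $x_1$ via the edge $x_1y_1$. The short $y_1$-tail of $R$ is the single edge $x_py_1$; joining it to the long $x_1$-tail of $S$ (length $L-1$, vertex set $V(S)\setminus\{y_q\}$) yields a path of length $L+1>L$ unless $x_p\in S$, and symmetrically I may assume $y_q\in R$. Joining instead the two \emph{long} tails (each of length $L-1$) gives a walk of length $2L-1>L$; if its two tails $V(R)\setminus\{x_p\}$ and $V(S)\setminus\{y_q\}$ are vertex-disjoint, this is a path longer than $L$, contradicting maximality and finishing the proof. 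Thus the length bookkeeping already shows that \emph{any} disjoint choice of tails suffices.

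\textbf{The main obstacle} is exactly the vertex-disjointness of the two long tails: a longest path avoiding $x_1$ and one avoiding $y_1$ may share many vertices, so the naive concatenation need not be simple. I expect to resolve this using the crossing structure of bipartite permutation graphs. Because all tails are ordered (indices increasing), a first shared vertex of a $y_1$-tail of $R$ and an $x_1$-tail of $S$ produces a crossing configuration to which Propositions~\ref{cruzados} and~\ref{cruzados-contraidos} apply, allowing me to reroute the walk at that vertex into a genuine path of length $>L$. Alternatively, taking $R$ and $S$ to be \emph{left-most} among the longest paths avoiding $x_1$ and $y_1$ respectively, and invoking Proposition~\ref{cross-with-vw}, should constrain the overlap enough to force the contradiction directly. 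Managing this interaction cleanly is the crux of the argument.
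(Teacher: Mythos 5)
Your Steps 1--3 are correct and coincide with the paper's setup: the edge $x_1y_1$ exists by Proposition~\ref{cruzados}, Lemma~\ref{every-edge-is-transversal} gives $\lpt(G)\le 2$, and if $R$ avoids $x_1$ and $S$ avoids $y_1$ then $y_1$ is the second vertex of $R$ and $x_1$ the second vertex of $S$ (the paper establishes exactly this as the base case $i=2$ of its induction). But the proof has a genuine gap at precisely the point you flag as ``the main obstacle,'' and the rerouting idea you sketch does not close it. If $u$ is a common vertex of the two long tails, splitting the tail of $R$ at $u$ into pieces of lengths $a$ and $L-1-a$, and the tail of $S$ into pieces of lengths $b$ and $L-1-b$, the two natural recombinations through the edge $x_1y_1$ have lengths $L+a-b$ and $L+b-a$, which sum to exactly $2L$, not more; when $a=b$ neither exceeds $L$, so there is no contradiction (and one must additionally verify the recombinations are simple, since the tails can interleave at many vertices). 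Unlike the arguments in Section~\ref{section:chordals}, where path fragments lie in distinct components and lengths sum to strictly more than $2L$, here the overlap is unavoidable --- any two longest paths intersect --- so no single exchange at a ``first shared vertex'' can work, and Propositions~\ref{cruzados} and~\ref{cruzados-contraidos} by themselves only produce edges, not length gains.

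Your alternative suggestion (take $R$ and $S$ left-most and ``constrain the overlap'') is indeed the paper's route, but it requires substantially more than a direct invocation of Proposition~\ref{cross-with-vw}. The paper proves by induction on $i\in\{2,\ldots,m\}$ that the left-most longest path avoiding $y_1$ begins $y_2x_1y_3x_2y_4x_3\cdots$, the left-most longest path avoiding $x_1$ begins $x_2y_1x_3y_2\cdots$, and $x_iy_i\in E$ for every $i$; each inductive step is an exchange argument (replacing a too-large vertex by a smaller one contradicts left-most-ness, or produces an ordered longest path missing both ends of some edge, contradicting Lemma~\ref{every-edge-is-transversal}). The payoff is the explicit path $x_1y_1x_2y_2\cdots x_my_m$ of length $2m-1$, while a longest path avoiding $y_1$ has length at most $2m-2$ --- the final contradiction. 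Your proposal stops exactly before this induction, which is the heart of the proof, so as it stands the theorem is not established.
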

\begin{proof}
  Suppose by contradiction that $\lpt(G)>1$. Then, there exists a longest 
  path~$P$ that does not contain~$y_1$ and a longest path $Q$ that does not 
  contain~$x_1$. As~$G$ is connected, $x_1y_1$ is an edge by Proposition~\ref{cruzados}.
  So, by Lemma \ref{every-edge-is-transversal}, $x_1 \in P$ and $y_1 \in Q$.  
  We may assume that both~$P$ and~$Q$ are left-most paths.
  Suppose without loss of generality that $n\geq m$.  Thus, for all 
  $i \in \{2, \ldots, m \}$, it suffices to prove the following conditions:
  \begin{itemize}
  \item[(a)] $y_i$ is the $(2i-3)$-th vertex of $P$,
             $x_{i-1}$ is the $(2i-2)$-th vertex of $P$,     
             $x_i$ is the $(2i-3)$-th vertex of $Q$, and
             $y_{i-1}$ is the $(2i-2)$-th vertex of $Q$.
  \item[(b)] $x_iy_i$ is an edge.
  \end{itemize}
  Indeed, if that is the case, then we would have a 
  path~$R = (x_1y_1x_2y_2\cdots x_my_m)$ of length~$2m-1$.
  As~$P$ does not contain~$y_1$, we would have 
  $|P \cap Y| = |P \cap \{y_2, y_3,\ldots,y_m\}| = m-1$.
  And, as~$G$ is bipartite,~$|P\cap X|\leq m$. 
  Hence $|P| \leq 2m-2 < |R|$, a contradiction, 
  because~$P$ is a longest path. 

\begin{figure}[ht]
\centering
\begin{tikzpicture}[thick,
  every node/.style={draw,circle,scale=0.9},
  psnode/.style={fill=mygreen},
  qsnode/.style={fill=myblue},
  style1/.style={ellipse,draw,inner sep=-1pt,text width=1.5cm},
  style2/.style={ellipse,draw,inner sep=0pt,text width=4cm,text heigth=16cm,ellipse ratio=2},
]

\begin{scope}[start chain=going right,node distance=7mm, xshift=3cm, yshift=2.0cm]
\node[psnode,on chain] (x1) [label=above: $x_1$] {};
\node[psnode,on chain] (x2) [label=above: $x_2$] {};
\node[psnode,on chain] (x3) [label=above: $x_3$] {};
\node[psnode,on chain] (x4) [label=above: $x_4$] {};
\node[on chain, draw=none] (xn) [label=above: $ $] {};
\end{scope}

\begin{scope}[xshift=3cm, yshift=0.5cm,start chain=going right,node distance=7mm]
\node[qsnode,on chain] (y1) [label=below: $y_1$] {};
\node[qsnode,on chain] (y2) [label=below: $y_2$] {};
\node[qsnode,on chain] (y3) [label=below: $y_3$] {};
\node[qsnode,on chain] (y4) [label=below: $y_4$] {};
\node[on chain, draw=none] (yn) [label=below: $ $] {};
\end{scope}

\draw  [color = mybrown]  (y2) -- (x1);
\draw  [color = mybrown] (x1) -- (y3);
\draw  [color = mybrown] (y3) -- (x2);
\draw  [color = mybrown] (x2) -- (y4);
\draw  [color = mybrown] (y4) -- (x3);
\draw  [color = mybrown] (x3) -- (yn);
\draw [color = myred] (x2) -- (y1);
\draw [color = myred] (y1) -- (x3);
\draw [color = myred] (x3) -- (y2);
\draw [color = myred] (y2) -- (x4);
\draw [color = myred] (x4) -- (y3);
\draw [color = myred] (y3) -- (xn);

\draw [style=dashed] (x1) -- (y1);
\draw [style=dashed] (x2) -- (y2);
\draw [style=dashed] (x3) -- (y3);
\draw [style=dashed] (x4) -- (y4);
\end{tikzpicture}
\label{fig:lpt(bpg)=1}
\end{figure}
	
  We proceed by induction on $i$.  If~$i=2$, we need to prove 
  that $y_2x_1$ and~$x_2y_1$ are the first edges of~$P$ and~$Q$ 
  respectively, and that $x_1y_1, x_2y_2$ are edges.
  Remember that $x_1 \in P$.
  Obviously, $x_1$ is not an extreme of~$P$. So, as~$P$ is 
  an ordered longest path, $x_1$ is the second vertex of~$P$.
  Now we will prove that~$y_2$ is the first vertex of~$P$. 
  If~$P$ starts in $y_j$ with~$j>2$ then, as~$y_jx_1$ 
  and~$x_1y_1$ are edges, $x_1y_2$ is an edge by Proposition~\ref{middle}. 
  Thus $P-x_1y_j+x_1y_2$ is also a longest path, contradicting the choice of~$P$.
  A similar reasoning shows that~$x_2y_1$ is the first edge of~$Q$.
  This implies, by Proposition \ref{cruzados}, that~$x_2y_2$
  is an edge, finishing the base case of the induction.
  
  Now fix an $i>2$ and assume that both~(a) and~(b) are 
  valid for all $j<i$.  Then, by the induction hypothesis, 
  $y_{i-1}x_{i-2}$ is the $(2i-5)$-th edge of~$P$.  
  First, we will prove that~$x_{i-1}$ is the~$(2i-2)$-th vertex of $P$.  
  Indeed, suppose that $x_j$ is the~$(2i-2)$-th vertex of $P$ with $j > i-1$.
  Let~$P = P' \cdot P''$, where $P' \cap P'' = \{x_{i-2}\}$.
  Then $y_1x_1y_2x_2 \cdots y_{i-2}x_{i-2} \cdot P''$ is an 
  ordered longest path that does not contain any vertex of
  $\{x_{i-1}, y_{i-1}\}$, a contradiction to Lemma~\ref{every-edge-is-transversal}.
  So~$x_{i-1}$ is the~$(2i-2)$-th vertex of~$P$. 
  Now, we will prove that $y_i$ is the $(2i-3)$-th vertex of~$P$. 
  Suppose that~$y_j$ is the $(2i-3)$-th vertex of~$P$ with $j>i$.
  Then, by Proposition \ref{cruzados-contraidos}, as, by the induction 
  hypothesis, $x_{i-1}y_{i-1}$ is an edge, $x_{i-2}y_i$ and $x_{i-1}y_i$ are 
  edges.  Now, $P-x_{i-2}y_jx_{i-1}+x_{i-2}y_ix_{i-1}$ is also a longest path,
  contradicting the choice of~$P$.
  A similar argument shows that $x_i$ is the $(2i-3)$-th vertex of~$Q$ 
  and that~$y_{i-1}$ is the $(2i-2)$-th vertex of~$Q$.  This implies, by 
  Proposition~\ref{cruzados}, that~$x_iy_i$ is an edge, finishing the proof.
\end{proof}


\section{Graphs of bounded treewidth and planar graphs}\label{section:tw}

Rautenbach and Sereni~\cite{RautenbachS14} proved that~$\lpt(G) \leq \tw(G) + 1$ for every connected graph $G$.
In this section, we improve their result.

\begin{lema}\label{V_t-transversal}
  Let $G$ be a connected graph.
  Let~$(T, \calV)$ be a 
  tree decomposition of~$G$.
  There exists a node $t \in V(T)$ such that $V_t$ is a longest path transversal.
\end{lema}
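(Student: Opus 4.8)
The plan is to argue by contradiction and reuse the digraph technique from the proof of Theorem~\ref{lptchordal}. Assume that no bag is a longest path transversal. Then for every node $t\in V(T)$ there is a longest path $P_t$ with $P_t\cap V_t=\emptyset$. Since a path is connected, all of $P_t$ lies in a single component of $G-V_t$, so $P_t$ is fenced by $V_t$; by Proposition~\ref{Branch_t(P)=Branch_t(t')} there is a neighbour $t'$ of $t$ in $T$ with $\Branch_t(P_t)=\Branch_t(t')$. I would build a directed graph $D$ on $V(T)$ by adding, for each $t$, the arc $t\to t'$. Thus every node of $D$ is the tail of at least one arc.

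Following the proof of Theorem~\ref{lptchordal}, I would take the last arc $tt'$ of a maximal directed path in $D$; since $T$ is a tree (so a directed cycle of length at least three would yield a cycle in $T$), the reverse arc $t't$ must also be present. This produces two longest paths: $P$, fenced by $V_t$ with $\Branch_t(P)=\Branch_t(t')$ and $P\cap V_t=\emptyset$; and $Q$, fenced by $V_{t'}$ with $\Branch_{t'}(Q)=\Branch_{t'}(t)$ and $Q\cap V_{t'}=\emptyset$.

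The core of the argument is then a separation step. Writing $T_p=\Branch_{t'}(t)$ and $T_q=\Branch_t(t')$ for the two components of $T-tt'$, Proposition~\ref{sep-tt'} tells me that $V_t\cap V_{t'}$ separates $\bigcup_{s\in T_p}V_s$ from $\bigcup_{s\in T_q}V_s$ in $G$. Since $P\cap V_t=\emptyset$, every vertex $v$ of $P$ lies outside $V_t$, so by Propositions~\ref{one-branch} and~\ref{Branch_t(P)} all bags containing vertices of $P$ lie in $\Branch_t(t')=T_q$; hence $V(P)\subseteq\bigcup_{s\in T_q}V_s$. Symmetrically $V(Q)\subseteq\bigcup_{s\in T_p}V_s$. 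A vertex common to both unions lies, by condition~(T3), in every bag on the $T$-path joining its two occurrences, in particular in $V_t\cap V_{t'}$; therefore $P\cap Q\subseteq V_t\cap V_{t'}\subseteq V_t$. As $P\cap V_t=\emptyset$, this forces $P\cap Q=\emptyset$.

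Finally I would invoke the well-known fact that in a connected graph any two longest paths share a vertex, so $P\cap Q\neq\emptyset$, a contradiction, completing the proof. The step I expect to be most delicate is confining $P$ and $Q$ to their respective sides and pinning their only possible intersection to the separator $V_t\cap V_{t'}$: this is where Propositions~\ref{one-branch}, \ref{Branch_t(P)}, and~\ref{sep-tt'}, together with the tree-decomposition axiom~(T3), must be combined carefully. The extraction of antiparallel arcs from $D$ is routine once one notes that $T$, being a tree, admits no directed cycle of length exceeding two.
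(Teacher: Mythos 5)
Your proof is correct and follows essentially the same route as the paper's: the same auxiliary digraph $D$ on $V(T)$, the same extraction of antiparallel arcs $tt'$ and $t't$ from the last arc of a maximal directed path, and the same conclusion that $P$ and $Q$ are disjoint, contradicting the fact that two longest paths in a connected graph intersect. Your only addition is to spell out, via Propositions~\ref{one-branch}, \ref{Branch_t(P)}, \ref{sep-tt'} and axiom~(T3), why $P \cap Q \subseteq V_t \cap V_{t'}$ forces $P \cap Q = \emptyset$ — a step the paper states more tersely but with identical content.
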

\begin{proof}
  Suppose by contradiction that this is not the case.
  Then, for every $t \in V(T)$, there exists a longest path $P$ that does not
  touch $V_t$ (hence $P$ is fenced by $V_t$).
  By Proposition \ref{Branch_t(P)=Branch_t(t')}, there exists a
  neighbor $t'$ of $t$ in $T$ such that $\Branch_t(P) = \Branch_t(t')$.
  We create a directed graph $D$, 
  that admits antiparalell arcs, as follows.  
  The nodes of~$D$ are exactly the nodes of $T$.
  Given a node~$t$ and a neighbor $t'$ of $t$ as before,
  we add~$tt'$ as an arc in~$D$.
  Note that every node of $D$ is the tail of some arc in $D$.
  Let~$tt'$ be the last arc of a maximal directed path in~$D$.
  As~$T$ is a tree,~$t't$ is also an arc in~$D$, which implies 
  that there exist two longest paths~$P$ and~$Q$ in $G$ such that
  $\Branch_t(P)=\Branch_t(t')$ and $\Branch_{t'}(Q)=\Branch_{t'}(t)$,
  where~$P$ is fenced by~$V_t$ and~$Q$ is fenced by~$V_{t'}$,~$P$
  does not touch~$V_t$ and~$Q$ does not touch~$V_{t'}$.
  But then, as the bags containing vertices of~$P$ are only 
  in $\Branch_t(t') \cup \{t\}$, and the bags containing
  vertices of $Q$ are only in $\Branch_{t'}(t) \cup \{t'\}$, 
  the paths~$P$ and~$Q$ do not intersect, a contradiction.
\end{proof}

\begin{teorema}\label{lpt<tw}
  For every connected graph $G$ with treewidth $k$, $\lpt(G) \leq k$.
\end{teorema}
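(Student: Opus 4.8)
The plan is to strengthen Lemma~\ref{V_t-transversal} by exactly one vertex. Assume for contradiction that $\lpt(G)>k$, and fix a full tree decomposition $(T,\calV)$ as in Proposition~\ref{tree-dec-bodl}, so that $|V_t|=k+1$ for every node $t$ and $|V_t\cap V_{t'}|=k$ for every $tt'\in E(T)$; in particular adjacent bags are distinct. By Lemma~\ref{V_t-transversal} some bag $V_t$ is a longest path transversal, and since $|V_t|=k+1>k$, it suffices to exhibit a vertex $u\in V_t$ such that $V_t\setminus\{u\}$ (of size $k$) is again a transversal, contradicting $\lpt(G)>k$. A vertex $u$ fails to be removable precisely when some longest path $P_u$ satisfies $P_u\cap V_t=\{u\}$, so I would assume, toward a contradiction, that \emph{every} $u\in V_t$ is of this kind.

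The key structural step is to record, for each such $u$, the set $S_u$ of branches of $T$ at $t$ that $P_u-u$ meets. Each tail of $P_u$ avoids $V_t$ and hence lies in a single branch, so $1\le|S_u|\le 2$, and Proposition~\ref{sep-uv} shows that whenever $P_u$ enters a branch $\Branch_t(t')$ the vertex $u$ must lie in $V_{t'}$. Since any two longest paths of a connected graph share a vertex, and for $u\neq u'$ that shared vertex must fall outside $V_t$ and hence in a common branch, the family $\{S_u\}_{u\in V_t}$ is pairwise intersecting. A pairwise intersecting family of sets of size at most two is either a \emph{star}, meaning some branch $\Branch_t(t^{\ast})$ lies in every $S_u$, or a \emph{triangle}, meaning all the $S_u$ are drawn from three branches $b_1,b_2,b_3$ with each of the three pairs occurring.

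In the star case every $P_u$ enters $\Branch_t(t^{\ast})$, so every $u\in V_t$ lies in $V_{t^{\ast}}$; thus $V_t\subseteq V_{t^{\ast}}$ with $|V_t|=|V_{t^{\ast}}|=k+1$ and $V_t\neq V_{t^{\ast}}$, which is impossible. The triangle case is the main obstacle, and I would settle it with a length-sum argument in the spirit of Lemma~\ref{extreme-join-clique}. Choose crossing longest paths $P_a,P_b,P_c$ realizing the pairs $\{b_1,b_2\}$, $\{b_2,b_3\}$, $\{b_1,b_3\}$ (they are forced to touch $V_t$ at distinct vertices $a,b,c$); by pairwise intersection $P_a$ and $P_b$ meet inside $b_2$, $P_b$ and $P_c$ meet inside $b_3$, and $P_c$ and $P_a$ meet inside $b_1$. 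For each meeting I splice the two outer tails, lying in the remaining two branches, through a connecting subpath across the shared branch, choosing the first common vertex so the pieces stay internally disjoint and distinct branches keep the whole path simple. Writing the two tail lengths of $P_a$ as $\alpha_1,\alpha_2$ with $\alpha_1+\alpha_2=L$, and similarly $\beta_2,\beta_3$ and $\gamma_1,\gamma_3$ for $P_b,P_c$, the three spliced paths have total length $(\alpha_1+\alpha_2)+(\beta_2+\beta_3)+(\gamma_1+\gamma_3)+(m_1+m_2+m_3)=3L+(m_1+m_2+m_3)$, where each connecting subpath $m_i$ runs from a vertex of $V_t$ into a branch and back to another vertex of $V_t$, hence has $m_i\ge 1$. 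As each of the three paths has length at most $L$, summing gives $3L+(m_1+m_2+m_3)\le 3L$, forcing $m_1+m_2+m_3\le 0$, a contradiction. The only delicate point is the simplicity of the spliced paths, which is exactly why I select each connecting piece to meet its outer tail only at the first shared vertex.
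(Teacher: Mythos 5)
Your proposal is correct, and its core combinatorial argument takes a genuinely different route from the paper's, even though both share the same skeleton: a full tree decomposition (Proposition~\ref{tree-dec-bodl}), a transversal bag $V_t$ from Lemma~\ref{V_t-transversal}, and, from $\lpt(G)>k$, a longest path $P_u$ with $P_u\cap V_t=\{u\}$ for every $u\in V_t$. The paper first proves that every such path \emph{crosses} $V_t$ (its two tails enter distinct branches), via Proposition~\ref{V(P')-cap-V(Q')=emptyset} and the fact that two longest paths in a connected graph intersect; it then fixes one path $P$ touching at $x$, extracts two further paths $Q$ and $R$ anchored at vertices $a\in V_t\setminus V_{t'}$ and $b\in V_t\setminus V_{t''}$, pins down which tails meet which (including the disjointness chain $Q\cap R'=\emptyset$ and $R\cap Q''=\emptyset$), and concludes by splicing \emph{two} paths, $R_1\cdot\tilde{P}\cdot Q_2$ and $Q_1\cdot\tilde{P}\cdot R_2$, through the middle segment of $P$, whose lengths sum to $2L+2|\tilde{P}|>2L$. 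You instead classify the branch footprints $S_u$ by the Helly-type dichotomy for pairwise-intersecting sets of size at most two: the star case dies immediately against the fullness condition, since $V_t\subseteq V_{t^*}$ with $|V_t|=|V_{t^*}|=k+1$ contradicts $|V_t\cap V_{t^*}|=k$, and the triangle case dies under a symmetric cyclic three-way splice summing to $3L+m_1+m_2+m_3>3L$, in the spirit of Lemma~\ref{extreme-join-clique}; your simplicity argument (first-hit choice of the meeting vertex, plus the fact that each pair of paths can only meet inside its one shared branch) is sound. What each approach buys: yours is more symmetric and replaces the paper's delicate tail bookkeeping with a clean two-case structure (and the star case, which the paper rules out path-by-path through its crossing claim, is handled globally in one stroke); the paper's avoids the star/triangle classification and needs only two spliced paths. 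Two cosmetic remarks: your tails contain $u\in V_t$, so it is the tails minus $u$ that avoid $V_t$ and each lie in a single branch; and your connectors in fact satisfy $m_i\ge 2$, since each passes through at least one vertex of the shared branch, though $m_i\ge 1$ already suffices for the contradiction.
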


\begin{proof}
  Let~$(T, \calV)$ be a tree decomposition of~$G$ as in Proposition~\ref{tree-dec-bodl}. 
  By Lemma~\ref{V_t-transversal}, there exists a node $t$ in $T$ 
  such that~$V_t$, of size $k+1$, is a longest path transversal.  
  Suppose by contradiction that $\lpt(G) > k$. 
  Then no set of~$k$ vertices in~$V_t$ is a longest path transversal.  
  As every longest path touches~$V_t$ at least once, for every vertex in~$V_t$, 
  there exists a longest path that 1-touches~$V_t$ at that vertex.
  Let $P$ be a longest path that touches $V_t$ at $x$, let $P'$
  and $P''$ be the two $x$-tails of $P$. We will show that
  \begin{equation}\label{Branch_t(P')-neq-Branch_t(P'')}
  \Branch_t(P') \neq \Branch_t(P''). 
  \end{equation}    
  Proof of (\ref{Branch_t(P')-neq-Branch_t(P'')}): Suppose by
  contradiction that $\Branch_t(P')= \Branch_t(P'')=\Branch_t(t')$.
  By Proposition~\ref{tree-dec-bodl}, 
  there exists a vertex~$y$ in~$V_t \setminus V_{t'}$.
  Let~$Q$ be a longest path that 1-touches~$V_t$ at~$y$.
  Let~$Q'$ and $Q''$ be the two $y$-tails of $P$.
  By Proposition \ref{V(P')-cap-V(Q')=emptyset}, both $P \cap Q'$ and $P \cap Q''$
  are empty, a contradiction.
  
  By (\ref{Branch_t(P')-neq-Branch_t(P'')}), there exist two different nodes $t'$ and 
  $t''$ that are adjacent to~$t$ in~$T$ such that~$t'$ is in $\Branch_{t}(P')$ and~$t''$ is in $\Branch_{t}(P'')$.
  By Proposition~\ref{tree-dec-bodl}, there exists a vertex~$a$ 
  in $V_t \setminus V_{t'}$ and a vertex~$b$ in $V_t \setminus V_{t''}$.
  As $t'\neq t''$, we have that $V_{t'}\neq V_{t''}$ and $a \neq b$.
  Let~$Q$ and~$R$ be corresponding longest paths that 1-touch~$V_t$ at~$a$ and~$b$ respectively.  
  By~(\ref{Branch_t(P')-neq-Branch_t(P'')}), both $P$ and $Q$ cross~$V_t$. 
  Observe that $x \in V_{t'} \cap V_{t''}$ by Proposition~\ref{Vt capP'subseteqVt'}, 
  and hence $a \neq x \neq b$.
  
  Let $Q'$ and $Q''$ be the two $a$-tails of $Q$, and let $R'$ and~$R''$ be the two $b$-tails of $R$.  
  By Proposition~\ref{V(P')-cap-V(Q')=emptyset}, paths $P'$ and $Q$ do not intersect.
  So, as~$G$ is connected, $P''$ intersects $Q$.
  Since ${P'' \cap Q \cap V_t = \emptyset}$, we may assume, without loss of generality,
  that~$Q''$ intersects~$P''$, thus~$\Branch_t(Q'')=\Branch_t(P'')=\Branch_t(t'')$.
  Analogously, with a similar analysis with~$R$ instead of~$Q$, we may assume that
  $\Branch_t(R')=\Branch_t(P')=\Branch_t(t')$.
  Applying (\ref{Branch_t(P')-neq-Branch_t(P'')}) with~$Q$ and~$R$ instead of~$P$, 
  one can show that $\Branch_t(Q') \neq \Branch_t(Q'')$ and that ${\Branch_t(R') \neq \Branch_t(R'')}$.
  Thus, $Q'$ is disjoint from~$P$, and~$R''$ is disjoint from~$P$.
  Also, as~$\Branch_t(R')=\Branch_t(t')$,
  by Proposition \ref{V(P')-cap-V(Q')=emptyset}, paths~$Q$ and~$R'$ do not intersect.
  Analogously,~$R$ and~$Q''$ do not intersect.
  
  Let $a' \in P'' \cap Q''$ be such that the subpath of $P$ with extremes $x$ and $a'$ is internally 
  disjoint from $Q''$.
  Let $Q_1$ and $Q_2$ be the two $a'$-tails of $Q$, with $Q_1$ containing $a$.  
  Let $b' \in P' \cap R'$ be such that the subpath of $P$ with extremes $x$ and $b'$ is internally 
  disjoint from $R'$.
  Let $R_1$ and $R_2$ be the two $b'$-tails of $R$, with $R_1$ containing $b$.  
  Let $\tilde{P}$ be the subpath of $P$ that has $a'$ and $b'$ as extremes.
  As $\tilde{P}$ is internally disjoint from both $Q$ and $R$, we have that
  $R_1 \cdot \tilde{P}\cdot Q_2$ and $Q_1\cdot \tilde{P}\cdot R_2$ 
  are paths whose lengths sum more than $2L$, a contradiction.
\end{proof}

The graph of Figure~\ref{waltherCounterexample} has treewidth two. 
Hence, we have the following corollary.

\begin{corolario}\label{series-parallel}
If $G$ is a connected partial 3-tree, then $\lpt(G) \in \{2,3\}$.
\end{corolario}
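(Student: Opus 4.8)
The plan is to read Corollary~\ref{series-parallel} as a statement about the extremal value of $\lpt$ over the class of connected partial $3$-trees: since every tree is a connected partial $3$-tree with $\lpt(G)=1$, the conclusion cannot hold for literally every such graph, so what is really being asserted is that $\max\{\lpt(G) : G \text{ a connected partial } 3\text{-tree}\}$ lies in $\{2,3\}$. I would establish this by combining a uniform upper bound valid for the whole class with a single witnessing lower bound.

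First I would dispatch the upper bound, which is immediate. A partial $3$-tree has, by definition, treewidth at most $3$, so Theorem~\ref{lpt<tw} applies verbatim and yields $\lpt(G) \le \tw(G) \le 3$ for every connected partial $3$-tree $G$. No new argument is needed beyond invoking the theorem already proved.

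For the lower bound I would exhibit one connected partial $3$-tree whose transversal number exceeds $1$, and the natural choice is the $12$-vertex Walther--Voss--Zamfirescu graph $W$ of Figure~\ref{waltherCounterexample}. Two properties of $W$ are required. The first is that $\tw(W) = 2$, so that $W$ is a partial $2$-tree and hence a partial $3$-tree; this is exactly the assertion made in the sentence preceding the corollary, and I would certify it by exhibiting a tree decomposition of $W$ of width $2$ (equivalently, by checking that $W$ is $K_4$-minor-free). The second is that $\lpt(W) \ge 2$: because $W$ is precisely a counterexample to Gallai's question, no single vertex lies on all of its longest paths, so any transversal must use at least two vertices. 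Together these show the maximum is at least $2$ and, by the upper bound, at most $3$, so it lies in $\{2,3\}$ and the corollary follows.

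The argument contains no real obstacle once Theorem~\ref{lpt<tw} is in hand; the only things to verify are the two classical facts about $W$, both of which are already invoked in the introduction. What the proof deliberately does \emph{not} settle --- and what I would flag as the genuinely open residue --- is \emph{which} of the two values is correct, that is, whether some connected partial $3$-tree attains $\lpt(G)=3$, or whether the sharper bound $\lpt(G)\le 2$ in fact holds throughout the class.
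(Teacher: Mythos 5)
Your proof is correct and matches the paper's own (largely implicit) argument: the upper bound is Theorem~\ref{lpt<tw} applied with $\tw(G)\leq 3$, and the lower bound is exactly the remark preceding the corollary that the Walther--Voss--Zamfirescu graph of Figure~\ref{waltherCounterexample} has treewidth two and, being a counterexample to Gallai's question, satisfies $\lpt \geq 2$. Your reading of the statement as an extremal claim over the class (since trees force $\lpt=1$) is also the intended one, and your closing remark correctly identifies the open residue of whether the true extremal value is $2$ or $3$.
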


Planar graphs do not have bounded treewidth.
However, Fomin and Thilikos~\cite{FominT06} showed that a planar graph $G$ 
on $n$ vertices has treewidth at most $3.182 \sqrt{n}$.
More generally,
Alon, Seymour, and Thomas \cite{AlonST90} showed that any
$K_r$-minor free graph on $n$ vertices has treewidth at most $r^{1.5}\sqrt{n}$.
Hence, we have the following corollaries.
The first of them improves the upper bound given by 
Rautenbach and Sereni when the graph is planar.

\begin{corolario}\label{cor1-lpt<tw}
  For every connected planar graph $G$ on $n$ vertices, $\lpt(G) \leq 3.182 \sqrt{n}$.
\end{corolario}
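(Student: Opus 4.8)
The plan is to simply combine the two ingredients that the paper has already set up, since this corollary is an immediate consequence of Theorem~\ref{lpt<tw} together with the planar treewidth bound of Fomin and Thilikos~\cite{FominT06}. First I would invoke Fomin and Thilikos: for any planar graph $G$ on $n$ vertices, $\tw(G) \leq 3.182\sqrt{n}$. Since $G$ is assumed connected, Theorem~\ref{lpt<tw} applies directly and gives $\lpt(G) \leq \tw(G)$.

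Chaining these two inequalities yields $\lpt(G) \leq \tw(G) \leq 3.182\sqrt{n}$, which is exactly the claimed bound. The only point that deserves a word of care is the integrality: Theorem~\ref{lpt<tw} is stated for a connected graph of treewidth~$k$, so one should read $k = \tw(G)$, which is a nonnegative integer. Because treewidth is integer-valued, the Fomin--Thilikos inequality actually gives $\tw(G) \leq \lfloor 3.182\sqrt{n}\rfloor$, and applying Theorem~\ref{lpt<tw} with this integer $k$ presents no difficulty; the floor only strengthens the final estimate, and dropping it recovers the stated form.

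There is essentially no hard step here, as the entire content of the corollary has been pushed into Theorem~\ref{lpt<tw} (the improvement from $k+1$ to $k$ over Rautenbach and Sereni) and into the cited planar treewidth bound. If any step were to cause trouble, it would be verifying that the hypotheses of Theorem~\ref{lpt<tw} are met, namely connectedness; but this is exactly what the corollary assumes, so the argument goes through verbatim. I would therefore write the proof in two sentences, citing \cite{FominT06} for the treewidth bound and Theorem~\ref{lpt<tw} for the transversal bound, and conclude $\lpt(G) \leq \tw(G) \leq 3.182\sqrt{n}$.
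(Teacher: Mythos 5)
Your proof is correct and matches the paper's own (implicit) argument exactly: the corollary is stated as an immediate consequence of Theorem~\ref{lpt<tw} combined with the Fomin--Thilikos bound $\tw(G)\leq 3.182\sqrt{n}$ for planar graphs, precisely as you chain them. Your remark on integrality is a harmless extra precaution; nothing further is needed.
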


\begin{corolario}\label{cor2-lpt<tw}
  For every connected $K_r$-minor free graph $G$, $\lpt(G) \leq r^{1.5}\sqrt{n}$.
\end{corolario}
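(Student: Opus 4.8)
The plan is to obtain the bound by chaining two inequalities: the intrinsic bound $\lpt(G) \leq \tw(G)$ established in Theorem~\ref{lpt<tw}, and the extrinsic treewidth estimate for minor-closed classes due to Alon, Seymour, and Thomas~\cite{AlonST90}. First I would fix a connected $K_r$-minor free graph $G$ on $n$ vertices. Since $G$ excludes $K_r$ as a minor, the cited result of Alon, Seymour, and Thomas applies and yields $\tw(G) \leq r^{1.5}\sqrt{n}$.

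Next I would invoke Theorem~\ref{lpt<tw}, whose hypothesis is merely that $G$ be connected; this hypothesis is satisfied, so the theorem gives $\lpt(G) \leq \tw(G)$ directly, with no further work. Combining the two inequalities, $\lpt(G) \leq \tw(G) \leq r^{1.5}\sqrt{n}$, which is precisely the asserted bound.

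The entire substance of the argument is already carried by Theorem~\ref{lpt<tw} together with the external treewidth bound, so there is no genuine obstacle to overcome here: the corollary is a one-line deduction. The only point meriting care is verifying that the hypotheses line up exactly, namely that $G$ being connected matches the requirement of Theorem~\ref{lpt<tw} and that $G$ being $K_r$-minor free on $n$ vertices is exactly the setting of~\cite{AlonST90}. Once this is confirmed, the proof is a direct substitution.
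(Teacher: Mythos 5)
Your proposal is correct and matches the paper's reasoning exactly: the corollary is stated immediately after the paper cites Alon, Seymour, and Thomas~\cite{AlonST90} for the bound $\tw(G) \leq r^{1.5}\sqrt{n}$, and it follows by combining that bound with Theorem~\ref{lpt<tw} just as you describe. Your additional check that the hypotheses align (connectedness for Theorem~\ref{lpt<tw}, the minor-free condition for the external bound) is sound and adds nothing that conflicts with the paper.
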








\section{Full substar graphs} \label{section:substars}

A \emph{star} is a complete bipartite graph $K_{1,k}$, for some integer $k$.  
If $k \geq 2$, we call the unique vertex of degree $k$ the \emph{center} of the star. 
If $k=1$, we pick an arbitrary vertex to be the center of the star.
Given a tree $T$, a subgraph of $T$ that is a star is called a \emph{substar of T}. 
We say that a star in $T$ with center in $x$ is a \emph{full substar} of $T$ if $k \geq d_T(x)-1$.   
A graph is a \emph{full substar graph} if it is the intersection graph of a set of full substars of a tree. 
In the intersection model, we call $S_x$ the substar of the host tree associated 
with $x \in V(G)$.  We use capital letters to refer to the vertices of the host 
tree of the intersection model and lowercase letters to refer to the vertices of 
the intersection graph.  It can be seen from the definition that every full substar 
graph is also a chordal graph, since chordal graphs are the intersection graphs 
of subtrees of a tree. An example of a full substar graph can be seen in Figure~\ref{fig:hajos}. 

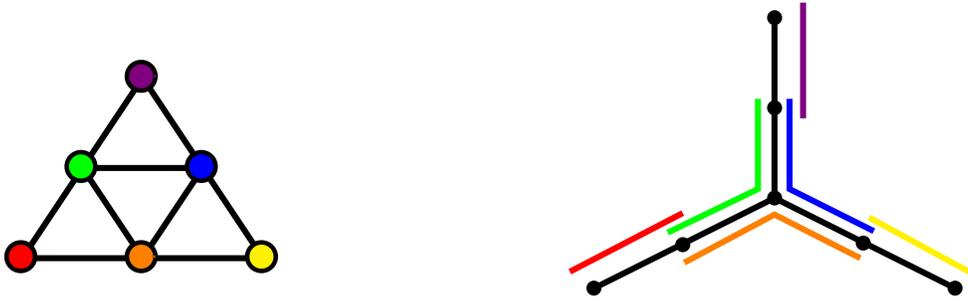
\begin{figure}[htb]
\centering
\scalebox{1}{
%
%
\psscalebox{1.0 1.0} 
{
\begin{pspicture}(0,-1.9492788)(12.919072,1.9492788)
\psline[linecolor=black, linewidth=0.08](1.04,-0.25072113)(1.84,0.9492789)(3.44,-1.4507211)(0.24,-1.4507211)(1.04,-0.25072113)(2.64,-0.25072113)(1.84,-1.4507211)(1.04,-0.25072113)
\pscircle[linecolor=black, linewidth=0.06, fillstyle=solid,fillcolor=red, dimen=outer](0.22,-1.4307212){0.22}
\pscircle[linecolor=black, linewidth=0.06, fillstyle=solid,fillcolor=orange, dimen=outer](1.82,-1.4307212){0.22}
\pscircle[linecolor=black, linewidth=0.06, fillstyle=solid,fillcolor=blue, dimen=outer](2.62,-0.23072113){0.22}
\pscircle[linecolor=black, linewidth=0.06, fillstyle=solid,fillcolor=yellow, dimen=outer](3.42,-1.4307212){0.22}
\pscircle[linecolor=black, linewidth=0.06, fillstyle=solid,fillcolor=violet, dimen=outer](1.82,0.9692789){0.22}
\pscircle[linecolor=black, linewidth=0.06, fillstyle=solid,fillcolor=green, dimen=outer](1.02,-0.23072113){0.22}
\psline[linecolor=black, linewidth=0.08](10.24,-0.65072113)(7.84,-1.8507211)
\psline[linecolor=black, linewidth=0.08](12.64,-1.8507211)(10.24,-0.65072113)(10.24,1.7492789)
\psdots[linecolor=black, dotsize=0.2](10.24,0.54927886)
\psdots[linecolor=black, dotsize=0.2](10.24,-0.65072113)
\psdots[linecolor=black, dotsize=0.2](11.42,-1.2507211)
\psdots[linecolor=black, dotsize=0.2](12.64,-1.8507211)
\psdots[linecolor=black, dotsize=0.2](9.02,-1.2707211)
\psdots[linecolor=black, dotsize=0.2](7.84,-1.8507211)
\psdots[linecolor=black, dotsize=0.2](10.24,1.7492789)
\psline[linecolor=orange, linewidth=0.08](9.04,-1.5107211)(10.24,-0.8707211)(11.38,-1.4507211)
\psline[linecolor=blue, linewidth=0.08](10.44,0.66927886)(10.44,-0.5307211)(11.56,-1.0907211)
\psline[linecolor=green, linewidth=0.08](8.82,-1.1107211)(10.02,-0.5307211)(10.02,0.66927886)
\psline[linecolor=violet, linewidth=0.08](10.62,0.40927887)(10.62,1.9492788)
\psline[linecolor=yellow, linewidth=0.08](11.5,-0.9107211)(12.9,-1.6707212)
\psline[linecolor=red, linewidth=0.08](7.52,-1.6307211)(9.02,-0.8507211)
\end{pspicture}
}}
\caption{A full substar graph.}
\label{fig:hajos}
\end{figure} 

The definition of branch by Heinz~\cite{Heinz13} applies naturally to an arbitrary tree. 
Here we use it, as well as its variants introduced in Section~\ref{section: definitions}, 
to the tree $T$.
For a vertex $X \in V(T)$, let $\mathcal{C}_X$ be the set of vertices of $G$ 
whose corresponding stars are centered in $X$ and $\mathcal{C}^X_Y$ be the set of 
vertices of~$G$ whose stars are centered in a vertex that belongs to $\Branch_X(Y)$.

In what follows, $G$ is a full substar graph and $T$ is the host tree of an intersection model for~$G$.

\begin{lema} \label{previous-onebranch}
  Let $x \in V(G)$ be such that $x \in \mathcal{C}_X$.
  If~$P$~is a longest path in $G$ such that $x \notin V(P)$,
  then there exists a node $Y \in N_T(X)$   
  such that the following conditions hold:
\begin{enumerate}[(i)]
\item $V(P) \subseteq \mathcal{C}^X_Y \cup \mathcal{C}_X,$
\item $|V(P) \cap \mathcal{C}_X| \leq 1.$
\end{enumerate}
  Moreover, if $|V(P) \cap \mathcal{C}_X| = 1$, then $Y \notin S_x$.
\end{lema}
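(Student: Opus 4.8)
The plan is to exploit the separating clique $B_X := \{\,w \in V(G) : X \in S_w\,\}$, which contains $x$. Since any two substars through the node $X$ meet at $X$, the set $B_X$ is a clique; and two vertices centered in different branches of $T$ at $X$ are adjacent only if both lie in $B_X$, because a vertex centered in $\Branch_X(Y')$ whose star omits $X$ has its whole star inside $\Branch_X(Y')$, so stars in different branches that both avoid $X$ are disjoint. Thus removing $B_X$ splits $G$ into the branches $\mathcal{C}^X_{Y'}\setminus B_X$. The whole argument then rests on one simple move: because $x$ is adjacent to every other vertex of $B_X$, whenever $x\notin V(P)$ is adjacent to two $P$-consecutive vertices, splicing $x$ between them produces a path longer than $P$, which is impossible. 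First I would record two consequences: no end of $P$ lies in $B_X$, and no two consecutive vertices of $P$ both lie in $B_X$ (otherwise splice $x$). In particular $P$ has an end outside $B_X$, which lies in some branch; this fixes the candidate $Y\in N_T(X)$.

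Next I would prove (i). Suppose $P$ left the branch $Y$. Then $P$ would contain a vertex $b\in B_X$ whose two $P$-neighbours $p,q$ lie in distinct branches, say at neighbours $Y_1\ne Y_2$ of $X$. A short argument on the tree $T$ — using that $S_b$ is a connected star containing $X$ and reaching into two branches — forces the center of $S_b$ to be $X$ (so $b\in\mathcal{C}_X$) and forces $Y_1,Y_2\in S_b$, with $p$ containing $Y_1$ and $q$ containing $Y_2$. Since $S_x$ is a full substar at $X$ it omits at most one neighbour of $X$, hence contains $Y_1$ or $Y_2$; say $Y_1\in S_x$. Then $x$ is adjacent to $p$ (they share $Y_1$) and to $b$ (both in $B_X$), so splicing $x$ between $p$ and $b$ lengthens $P$, a contradiction. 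Hence every vertex of $P$ outside $B_X$ lies in $\mathcal{C}^X_Y$. A nearly identical tree-geometry step shows that a $B_X$-vertex of $P$, whose $P$-neighbours all lie in branch $Y$ (by the two consequences above together with what was just proved), cannot be centered in a foreign branch, so it is centered at $X$ or in $\Branch_X(Y)$; this yields $V(P)\subseteq \mathcal{C}^X_Y\cup \mathcal{C}_X$, proving (i).

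For (ii) I would argue by contradiction, and this reroute is the crux of the whole proof. Suppose $u_1,u_2\in \mathcal{C}_X\cap V(P)$. By the two consequences each $u_i$ is interior and both its $P$-neighbours lie in branch $Y$; being adjacent to $u_i\in\mathcal{C}_X$, each such neighbour must contain the node $Y$, hence lies in the clique $\{\,w:Y\in S_w\,\}$, so all these neighbours are pairwise adjacent. Writing $\alpha,\beta$ for the predecessor and successor of $u_1$ on $P$ and $\gamma$ for the predecessor of $u_2$, I build a path on $V(P)\cup\{x\}$ as follows: follow $P$ up to $\alpha$, jump to $\beta$ (they share $Y$), traverse the old middle of $P$ from $\beta$ to $\gamma$, then take $\gamma\,u_1\,x\,u_2$ — the edge $\gamma u_1$ exists because both contain $Y$, and $u_1x,\,xu_2$ exist because all three lie in $B_X$ — and finally rejoin $P$ at the successor of $u_2$. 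This uses every vertex of $P$ together with $x$, hence is longer than $P$, a contradiction. Therefore $|V(P)\cap \mathcal{C}_X|\le 1$.

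Finally the \emph{moreover}. If $Y\in S_x$ and some $u\in\mathcal{C}_X$ lies on $P$, pick a $P$-neighbour $p$ of $u$ (it exists, $u$ being interior); as above $p$ contains $Y$, so $x$ is adjacent to both $p$ and $u$, and splicing $x$ between them again lengthens $P$. Hence $Y\in S_x$ forces $V(P)\cap\mathcal{C}_X=\emptyset$, and the contrapositive gives that $|V(P)\cap\mathcal{C}_X|=1$ implies $Y\notin S_x$. The main obstacle throughout is purely the rerouting in (ii): one must convert the two parallel $u_1$--$u_2$ connections (the old middle of $P$ and the bridge through $x$) into a single longer path, and it is exactly the mutual adjacency of the branch-$Y$ neighbours, all of which contain the node $Y$, that supplies the edge $\alpha\beta$ needed to reattach the middle and free up $u_1$ for the splice.
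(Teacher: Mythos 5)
Your proof is correct and follows essentially the same strategy as the paper's: both hinge on splicing $x$ between two $P$-consecutive vertices that are adjacent to it (either because their stars contain $X$, or because fullness of $S_x$ forces $Y_1 \in S_x$ or $Y_2 \in S_x$ at a branch transition), and both dispose of (ii) by rerouting $P$ through the clique $\{w : Y \in S_w\}$ formed by the $P$-neighbours of the $\mathcal{C}_X$-vertices together with the detour through $x$ in the clique at $X$. Your packaging via the clique $B_X$ and your rerouted path (the $\alpha\beta$ bridge plus $\gamma\,u_1\,x\,u_2$) are only cosmetic variants of the paper's case analysis and of its path $P_{x_1x'}\cdot x'x\cdot xx''\cdot x''w'\cdot \tilde{P}_{w'w''}\cdot w''v\cdot P_{vx_k}$.
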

\begin{proof}
  Let $x$ and $P$ be as stated above.
  First suppose that $V(P)\cap \mathcal{C}_X= \emptyset$.
  Suppose by contradiction that $(i)$ is false.
  Then $P$ has vertices whose substars are centered in two
  different branches of $T$ with respect to $X$. Since $P$ contains no vertex 
  of $\mathcal{C}_X$, then $P$ must contain two consecutive 
  vertices whose stars are centered in the neighborhood of $X$
  in $T$. That is, $P$ contains two consecutive vertices that are
  adjacent to $x$, a contradiction.
  This implies that $(i)$ holds in this case and, since $|V(P)\cap \mathcal{C}_X|=0$,
  $(ii)$ also holds.
  
  Now assume that $V(P)\cap \mathcal{C}_X\neq \emptyset$,
  and let $x'$ be a vertex in $V(P)\cap \mathcal{C}_X$. 
  Suppose by a contradiction that $(i)$ does not hold.
  Then $P$ has vertices whose substars are centered in two
  different branches of $T$ with respect to $X$.
  If $P$ contains two consecutive 
  vertices whose stars are centered in
  different branches, we are in the previous case.
  Then there exists two vertices $u$ and $v$ in $P$ such that
  $ux', vx'\in E(P)$. Since $|S_{x'}|\geq d_T(X)-1$, 
  one of $\{u,v\}$ is  adjacent to $x$, a contradiction. We can conclude that $(i)$ holds in this case. Let $Y$ be as stated in $(i)$.
  
  For showing $(ii)$, suppose that  $|V(P)\cap \mathcal{C}_X|>1$.
  Let $x',x'' \in V(P)\cap \mathcal{C}_X$.
  If $x'x'' \in E(P)$, then we can add $x$ to $P$ 
  between these two vertices.
  In this case, we may assume that 
  $P=P_{x'}\cdot \tilde{P} \cdot P_{x''}$,
  where $V(P_{x'}) \cap V(\tilde{P})= \{x'\}$ and
  $V(P_{x''}) \cap V(\tilde{P}) = \{x''\}$.
  Observe that $x'$ and $x''$ are not extremes of $P$,
  which implies that $P_{x'}$ and $P_{x'}$ are not empty.
  Thus, there exists a vertex $u$ adjacent to $x'$
  in $P_{x'}$ and a vertex $v$ adjacent to $x''$
  in $P_{x''}$.
  Also, there exists a vertex 
  $w'$ adjacent to $x'$
  in $\tilde{P}$
  and a vertex
  $w''$ adjacent to $x''$
  in $\tilde{P}$ (possibly, $w'=w''$).
  Note that $S_{w'} \cap S_{x'} = \{Y\}$
  and $S_{w''} \cap S_{x''} = \{Y\}$,
  implying that $\{x',w',x'',w''\}$ induces a clique.
  By a similar argument,
  $w''$ and $v$ are adjacent.


  In this case, we can find a path longer than $P$ in $G$. Let $x_1$ and $x_k$ be the extremes of $P$ and, given $x,y\in V(P)$, let $P_{xy}$ be the subpath of $P$ that has $x$ and $y$ as its extremes. The path $P_{x_1x'}\cdot x'x\cdot xx''\cdot x''w'\cdot \tilde{P}_{w'w''}\cdot w''v\cdot P_{vx_k}$ is longer than $P$, a contradiction.
  
  To finish the proof,
  suppose by contradiction that
  $|V(P)\cap \mathcal{C}_X|=1$
  and $Y\in S_x$.
  Let $\{x'\} = V(P )\cap \mathcal{C}_X$.
  Since $|P|\geq 1$ and $V(P)\subseteq \mathcal{C}_Y^X\cup \mathcal{C}_X$, there exists an edge $x'v$
  in $P$ such that
  $v\in \mathcal{C}^X_Y$.
  This implies that $x$ is adjacent to both $x'$ and $v$,
  a contradiction.
\end{proof}

\begin{lema} \label{onebranch}
 Let $G$ be a connected full substar graph, $T$ be the host tree of an intersection model for $G$ and let $X$ be any vertex of $T$.
If $\lpt(G)>1$, then there exists a longest path $P$ in $G$ and a node $Y\in N_T(X)$ such that $V(P)\subseteq \mathcal{C}^X_Y$.

\end{lema}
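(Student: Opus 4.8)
The plan is to use Lemma~\ref{previous-onebranch} to reduce the statement to the existence of a \emph{single} longest path that avoids $\mathcal{C}_X$ altogether; once such a path $P$ is found, Lemma~\ref{previous-onebranch} (applied with any $x\in\mathcal{C}_X$) immediately gives $V(P)\subseteq \mathcal{C}^X_Y\cup\mathcal{C}_X=\mathcal{C}^X_Y$, which is exactly the conclusion. Assume first $\mathcal{C}_X\neq\emptyset$. The key consequence of Lemma~\ref{previous-onebranch} is a dichotomy: every longest path either contains all of $\mathcal{C}_X$ or touches $\mathcal{C}_X$ in at most one vertex (if it missed some vertex of $\mathcal{C}_X$ while touching it twice, the lemma would be contradicted). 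Since $\lpt(G)>1$, no single vertex of $\mathcal{C}_X$ lies on every longest path, so fixing $x_0\in\mathcal{C}_X$ there is a longest path $P_0$ with $x_0\notin V(P_0)$. By the dichotomy, $P_0$ touches $\mathcal{C}_X$ at most once. If it touches it zero times we are done, so the whole difficulty is concentrated in the case that every such path touches $\mathcal{C}_X$ in exactly one vertex.

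So suppose $P_0$ touches $\mathcal{C}_X$ only at $x_1$, with $V(P_0)\setminus\{x_1\}\subseteq\mathcal{C}^X_Y$ for a neighbor $Y$ of $X$. Using $\lpt(G)>1$ again, pick a longest path $Q$ with $x_1\notin V(Q)$; again $Q$ touches $\mathcal{C}_X$ at most once, and if zero times we finish with $Q$ in place of $P_0$. Otherwise $Q$ touches $\mathcal{C}_X$ at a single vertex $x_2\neq x_1$ and $V(Q)\setminus\{x_2\}\subseteq\mathcal{C}^X_{Y'}$. Now I invoke the classical fact that any two longest paths of a connected graph share a vertex. Since $x_1\notin V(Q)$ and $x_2\notin V(P_0)$ (because $x_2\in\mathcal{C}_X\setminus\{x_1\}$ and $x_1$ is the only $\mathcal{C}_X$-vertex of $P_0$), their common vertex must lie in $\mathcal{C}^X_Y\cap\mathcal{C}^X_{Y'}$; as distinct branches give disjoint vertex sets, this forces $Y'=Y$. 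Hence all the competing longest paths are confined to the single branch $\mathcal{C}^X_Y$, apart from one vertex of $\mathcal{C}_X$.

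It remains to remove that last $\mathcal{C}_X$-vertex, and this exchange step is the main obstacle. Writing $P_0=A\cdot a\,x_1\,b\cdot B$ with $a,b$ the neighbors of $x_1$ on $P_0$ and $A,B\subseteq\mathcal{C}^X_Y$, the goal is to produce a vertex $c\in\mathcal{C}^X_Y\setminus V(P_0)$ adjacent to both $a$ and $b$, so that $A\cdot a\,c\,b\cdot B$ is a longest path avoiding $\mathcal{C}_X$. Here I would exploit that $x_1$ is a \emph{full} substar centered at $X$: since $a$ and $b$ lie in $\Branch_X(Y)$, each adjacency $x_1\sim a$ and $x_1\sim b$ can only be realized at the host-tree node $X$ or at $Y$, and a short case analysis on which node occurs (using fullness to pin down the unique missing leaf of $S_{x_1}$) together with the extra branch vertices supplied by $Q$ either yields the replacement vertex $c$ or shows that $x_1$ may be taken as an endpoint of $P_0$ and the path extended inside $\Branch_X(Y)$. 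The delicate points are exactly this replacement and the endpoint subcase. Finally, the degenerate case $\mathcal{C}_X=\emptyset$, where Lemma~\ref{previous-onebranch} does not literally apply, is handled by the same scheme with the clique $\{w\in V(G):X\in S_w\}$ — through which alone a path can pass between two branches at $X$ — playing the role of $\mathcal{C}_X$.
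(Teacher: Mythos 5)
Your skeleton is right up to a point: the dichotomy from Lemma~\ref{previous-onebranch} (a longest path missing some vertex of $\mathcal{C}_X$ meets $\mathcal{C}_X$ at most once), the reduction to finding one longest path disjoint from $\mathcal{C}_X$, and the deduction $Y'=Y$ via intersection of longest paths are all correct. But the proof does not close: everything hinges on the ``exchange step'' you yourself flag as the main obstacle, and that step is both unproven and dubious. In the surviving subcase, the ``Moreover'' clause of Lemma~\ref{previous-onebranch} (applied to $Q$ and $x_1$) gives $Y\notin S_{x_1}$, so $S_{x_1}\cap S_a\subseteq\{X\}$ for each neighbor $a$ of $x_1$ on $P_0$; hence both neighbors $a,b$ are centered at $Y$ with $X\in S_a\cap S_b$. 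The only vertices you can actually exhibit that are adjacent to both $a$ and $b$ are those whose substars contain $X$ --- e.g.\ the vertex $x_2$ supplied by $Q$ --- and inserting any of them just reintroduces a $\mathcal{C}_X$-vertex (or a clique-$K$ vertex) into the path. Nothing guarantees a spare vertex $c\in\mathcal{C}^X_Y\setminus V(P_0)$ adjacent to both $a$ and $b$, and the ``endpoint subcase'' alternative is equally unargued. So the proposal is genuinely incomplete exactly where the difficulty lies.

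The paper avoids the exchange entirely by showing the troublesome configuration cannot occur, and the device you are missing is the initial choice of $x\in\mathcal{C}_X$ with $|S_x|$ \emph{maximum}. With that choice, if the longest path $P$ avoiding $x$ still contains some $x'\in\mathcal{C}_X$, then $x'$ is internal and one argues $N_G(x')\not\subseteq N_G(x)$ (otherwise $P$ could be rerouted through $x$ and lengthened); combined with maximality this forces $|S_x|=|S_{x'}|=d_T(X)-1$, so each substar misses exactly one neighbor of $X$, and since $Y\notin S_x$ there is $Z\in N_T(X)$ with $Z\neq Y$, $Z\in S_x$, $Z\notin S_{x'}$. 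Applying the Moreover clause to a longest path $Q$ avoiding $x'$ then pins the branch of $Q$ to $Z$, whence $V(P)\subseteq\mathcal{C}^X_Y\cup\{x'\}$ and $V(Q)\subseteq\mathcal{C}^X_Z\cup\{x''\}$ are vertex-disjoint --- contradicting that two longest paths in a connected graph intersect. Your argument reaches the analogous juncture but, lacking the maximality choice, lands in the coincident-branch case $Y'=Y$ instead of a contradiction; adopting the paper's choice of $x$ makes the exchange step unnecessary. Your one-line treatment of $\mathcal{C}_X=\emptyset$ does gesture at the paper's actual argument (a longest path crossing branches at $X$ must contain two consecutive vertices of the clique $K=\{w:X\in S_w\}$, hence all of $K$, forcing $\lpt(G)=1$), but it too should be written out rather than deferred to ``the same scheme,'' since the scheme for $\mathcal{C}_X\neq\emptyset$ is not what is used there.
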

\begin{proof}
We divide the proof in two cases, according to whether there exists a vertex in $G$ such that its corresponding substar
is centered in $X$.

\vspace{0.1in} 
\noindent \emph{Case 1.} $\mathcal{C}_X\neq\emptyset$


 
Let $x\in \mathcal{C}_X$.
Moreover, suppose that $|S_x|$ is maximum
over all such $x$. Since $\lpt(G)>1$, there exists a longest path $P$ in $G$ such that
$x \notin V(P)$.
By Lemma~\ref{previous-onebranch}, there exists a node $Y$,
adjacent to $X$ in $T$
such that
$V(P)\in \mathcal{C}^X_Y \cup \mathcal{C}_X$.
If $V(P) \cap \mathcal{C}_X = \emptyset$, the statement holds.
Otherwise, $P$ has a vertex $x'$ such that $x'\in \mathcal{C}_X$.
Also by Lemma~\ref{previous-onebranch}, $V(P) \cap \mathcal{C}_X=\{x'\}$.
Note that $x'$ is not an extreme of $P$, since $x\notin V(P)$. 
Moreover, if $N_G(x') \subseteq N_G(x)$, then $P$ 
would have to contain $x$. This implies 
that $d_T(X) - 1 \leq |S_{x'}| \leq |S_x| < d_T(X)$ and, as consequence, $|S_x|=|S_{x'}|=d_T(X)-1$.
That is, both $S_x$ and $S_{x'}$ 
miss a node in the neighborhood of $X$.
By Lemma~\ref{previous-onebranch},
$Y \notin S_x$. Since $N_G(x') \not \subseteq N_G(x)$,
we may assume that there exists $Z\in N_T(X)$ such that $Z\neq Y$, $Z\in S_x$ and
$Z \notin S_{x'}$.

Since $\lpt(G)>1$, there exists a longest
path $Q$ in $G$ that does not contain
$x'$.
By Lemma~\ref{previous-onebranch},
$V(Q)\subseteq \mathcal{C}^X_Z \cup \{x''\}$,
for some $x'' \in \mathcal{C}_X$.
However, this implies that $P$ and $Q$ do not intersect each other,
a contradiction with the fact that $G$ is connected.

\vspace{0.1in}
\noindent \emph{Case 2.} $\mathcal{C}_X=\emptyset$

Let $K$ be the clique of $G$ formed by the vertices $x\in V(G)$ such that $X\in S_x$. We will show that if, for every longest path $P$, there is no $Y\in V(T)$ such that $V(P)\subseteq \mathcal{C}^X_Y$, then $\lpt(G)=1$. Suppose that every longest path $P$ of $G$ contains vertices whose substars are centered in two different branches of $T$ with respect to $X$. Since $\mathcal{C}_X=\emptyset$, $P$ must contain two consecutive vertices whose stars are centered in the neighborhood of $X$ in $T$. That is, $P$ has two consecutive vertices that belong to $K$ and therefore $P$ must contain all the vertices of $K$. 
\end{proof}

We are now ready to prove the main result of this section.


\begin{teorema}\label{lpt(alfsbstr)=1}
  If $G$ is a connected full substar graph, then $\lpt(G)=1$.
\end{teorema}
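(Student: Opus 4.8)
The plan is to argue by contradiction using the same orientation-of-the-tree technique that powers the proofs of Theorem~\ref{lptchordal} and Lemma~\ref{V_t-transversal}, but now fed by Lemma~\ref{onebranch}. Suppose $\lpt(G)>1$. By Lemma~\ref{onebranch}, for \emph{every} node $X$ of the host tree $T$ there exist a neighbor $Y\in N_T(X)$ and a longest path $P$ with $V(P)\subseteq\mathcal{C}^X_Y$. First I would encode this by building a directed graph $D$ on the node set $V(T)$, adding for each $X$ an arc $X\to Y$ to one such neighbor $Y$; thus every node of $D$ is the tail of at least one arc.

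Next I would take a maximal directed path in $D$ and examine its last arc $X\to Y$. Since every arc of $D$ runs along an edge of the tree $T$, a simple directed path in $D$ projects to a simple path in $T$; because $Y$ still has an out-arc and the path cannot be extended, that out-arc must return to a node already on the path, and in a tree the only such neighbor of $Y$ is its predecessor $X$. Hence $Y\to X$ is also an arc of $D$, exactly as in the earlier proofs. This produces two longest paths $P$ and $Q$ in $G$ with $V(P)\subseteq\mathcal{C}^X_Y$ and $V(Q)\subseteq\mathcal{C}^Y_X$.

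The crux is then to see that $P$ and $Q$ are vertex-disjoint, after which I invoke the classical fact recalled in the introduction that any two longest paths of a connected graph share a vertex, obtaining the contradiction. To get disjointness, note that for the edge $XY$ the branch $\Branch_X(Y)$ (the component of $T-X$ containing $Y$) and the branch $\Branch_Y(X)$ (the component of $T-Y$ containing $X$) are precisely the two sides of the edge $XY$, and are therefore disjoint sets of nodes of $T$. Since each vertex of $G$ has a single center, no vertex of $G$ can have its center simultaneously in both branches, so $\mathcal{C}^X_Y\cap\mathcal{C}^Y_X=\emptyset$ and hence $V(P)\cap V(Q)=\emptyset$.

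I expect no serious obstacle at the level of the theorem itself: essentially all the structural difficulty has already been absorbed into Lemmas~\ref{previous-onebranch} and~\ref{onebranch}, which localize a longest path into a single branch. The only points that require care are the tree argument yielding the antiparallel pair of arcs (identical to the one in Lemma~\ref{V_t-transversal}) and the clean observation that the uniqueness of centers forces the two branch-localized longest paths to be disjoint, contradicting their forced intersection in a connected graph.
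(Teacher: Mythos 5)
Your proposal is correct and takes essentially the same approach as the paper's own proof: the same auxiliary digraph $D$ on $V(T)$ fed by Lemma~\ref{onebranch}, the same maximal-directed-path argument yielding the antiparallel arcs $X\to Y$ and $Y\to X$, and the same observation that $\mathcal{C}^X_Y\cap\mathcal{C}^Y_X=\emptyset$ forces the two longest paths to be disjoint, contradicting the fact that any two longest paths in a connected graph intersect. Your explicit justification that, at the end of a maximal directed path, the out-arc of $Y$ must return to its predecessor $X$ (since $T$ is acyclic) is a correct elaboration of the step the paper dispatches with ``since $T$ is a tree.''
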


\begin{proof}
Suppose by a contradiction that $\lpt(G)>1$. Let $T$ be the host tree of an intersection model for $G$. We start by creating an auxiliary directed graph $D$ on the same vertex set as $T$ and arc 
set defined in the following way. For every $X\in V(D)$, we have that $XY\in E(D)$ if $Y\in N_T(X)$ and there exists a longest path $P$ such that $V(P)\subseteq \mathcal{C}^X_Y$. By Lemma~\ref{onebranch}, every node in $T$ has outdegree at least one. 

  Let $XY$ be the last arc in a maximal directed path in~$D$.
  Since~$T$ is a tree, $YX$ is also an arc in~$D$. Since $XY\in E(G)$ and $YX\in E(G)$, 
  there exists two longest paths~$P$ and~$Q$ in $G$ such 
  that $V(P)\in \mathcal{C}^X_Y$ and $V(Q)\in \mathcal{C}^Y_X$.
  However, since $\mathcal{C}^Y_X \cap \mathcal{C}^X_Y=\emptyset$, the paths $P$ and $Q$ do not have a vertex in common, a contradiction with 
  the fact that $G$ is connected.

\end{proof}



\section{Conclusion and future work} \label{section:conclusion}

The problem of finding a minimum longest path transversal remains open for several well-studied graph classes. In this work, we proved that connected bipartite permutation graphs admit a transversal of size one. The problem remains open for connected 
biconvex graphs and connected permutation graphs, well-known superclasses of bipartite permutation graphs. Even though our upper bound for $\lpt(G)$, when $G$ is a connected chordal graph, depends on~$\omega(G)$, so far there are no examples of connected chordal graphs that require a transversal of size greater than one. 
In this direction, one open problem is to look for such an example, if it exists, or to look for better bounds for $\lpt(G)$ when $G$ belongs to this graph class. 
Finally, it would be interesting to generalize Theorem~\ref{lpt(alfsbstr)=1} for the class of substar graphs, that is, intersection graphs of substars of a tree.

\bibliographystyle{plain}

  

\bibliographystyle{plain}




\end{document}